\documentclass[11pt,journal,onecolumn]{IEEEtran}

\usepackage{amsmath,amssymb,mathtools}
\usepackage{braket}
\usepackage{cite}
\usepackage[hidelinks]{hyperref}
\usepackage{microtype}
\allowdisplaybreaks[2]

\newtheorem{theorem}{Theorem}
\newtheorem{proposition}[theorem]{Proposition}
\newtheorem{lemma}[theorem]{Lemma}
\newtheorem{corollary}[theorem]{Corollary}
\newtheorem{definition}[theorem]{Definition}

\newcommand{\spann}{\operatorname{span}}
\newcommand{\NS}{\mathrm{NS}}

\newcommand{\supp}{\operatorname{supp}}

\title{Multiplicative Semidefinite Programming Upper Bounds on Quantum No-Signalling-Assisted Zero-Error Capacity}
\author{Xin Li, Yinan Li
\thanks{Xin Li and Yinan Li are with the School of Artificial Intelligence,
    Wuhan University, Wuhan 430072, China. Corresponding author: Xin Li
    \mbox{(e-mail: 2026202220033@whu.edu.cn).}}}
\begin{document}
\maketitle

	\begin{abstract}
	We study zero-error communication and exact simulation of quantum channels assisted by quantum no-signalling correlations. For a channel with Kraus operator space \(K\), we introduce a semidefinite programming parameter \(\Gamma(K)\) that is exactly multiplicative under tensor products and upper-bounds
	\(\Theta_{\rm NS}(K)=2^{C_{0,\rm NS}(K)}\), where
	\(C_{0,\rm NS}(K)\) denotes the asymptotic quantum no-signalling-assisted
	zero-error capacity.
	Among known upper bounds on \(\Theta_{\rm NS}\) given by semidefinite programs, \(\Gamma\) is the tightest. We also prove that the previously introduced simulation lower bound \(\Sigma^{-}(K)\) is exactly multiplicative and satisfies
	\(\Theta_{\rm NS}(K)\leq\Gamma(K)\leq\Sigma^{-}(K)\leq\Gamma(K)^2\).
	Using this multiplicativity, we show that neither
	\(S_{0,\rm NS}(K)=\log_2\Sigma^{-}(K)\)
	nor its regularized extension
	\(S_{0,\rm NS}(K)=
	\lim_{n\to\infty}n^{-1}\log_2\Sigma^{-}(K^{\otimes n})\)
	holds in general.
	This answers an open question about the asymptotic simulation cost
	\(S_{0,\rm NS}(K)\).
	For classical--quantum channels, we obtain sharp bounds. If \(r\) is the effective output dimension, then \(\Theta_{\rm NS}\leq\Gamma\leq(1+\sqrt r)\Theta_{\rm NS}/2\) and \(\Gamma\leq\Sigma^{-}\leq(1+\sqrt r)\Gamma/2\). If \(k\) is the minimum rank of the output support projections, then \(\Sigma^{-}/\Theta_{\rm NS}\leq[(1+\sqrt r)/(1+\sqrt k)]^2\), and this bound is sharp for every \(1\leq k\leq r\). Thus, we obtain the sharp bound
	\(S_{0,\rm NS}-C_{0,\rm NS}\leq2\log_2[(1+\sqrt r)/(1+\sqrt k)]\).
\end{abstract}

\begin{IEEEkeywords}
	Zero-error communication, quantum channels, no-signalling correlations, asymptotic capacity, semidefinite programming.
\end{IEEEkeywords}

\section{Introduction}
\label{sec:introduction}
Zero-error information theory studies the amount of information that can be transmitted through a noisy channel when no decoding error is permitted. For a classical discrete memoryless channel with confusability graph \(G\), Shannon showed that the maximum number of messages that can be transmitted with zero error in a single use is the independence number \(\alpha(G)\). To describe the optimal rate attainable over arbitrarily many channel uses, Shannon introduced the asymptotic zero-error capacity, obtained from the regularized independence number under the strong graph product
\cite{Shannon1956,KornerOrlitsky1998}.
This regularization is generally difficult to compute directly. A fundamental problem in classical zero-error information theory is therefore to construct computable parameters that upper-bound the asymptotic zero-error capacity. Lov\'asz and Haemers introduced the upper bounds \(\vartheta(G)\)
and \(\mathcal H(G)\), respectively. These bounds are incomparable in general
\cite{Lovasz1979,Haemers1979,Haemers1981,GriblingLi2020}.

Shannon's zero-error communication problem also extends to quantum channels \cite{DuanShi2008,CubittChenHarrow2011,CubittSmith2012}. A quantum state is represented by a positive semidefinite operator
\(\rho\) with \(\operatorname{Tr}\rho=1\). A quantum channel $\mathcal N:\mathcal L(A)\rightarrow\mathcal L(B)$
is a completely positive and trace-preserving (CPTP) linear map. It admits a Kraus representation
\(
\mathcal N(\rho)
=
\sum_k E_k\rho E_k^\dagger,
\;
\sum_kE_k^\dagger E_k=I_{A}.
\)

To study zero-error communication over a quantum channel \(\mathcal N\),
Duan, Severini, and Winter introduced the noncommutative graph
\(S_{\mathcal N}:=\operatorname{span}\{E_i^\dagger E_j:i,j\}\),
which describes the confusability of its input states
\cite{DuanSeveriniWinter2013}.
They also studied entanglement-assisted zero-error classical communication
over quantum channels. We denote the corresponding asymptotic capacity
by \(C_{0,*}\) and set \(\Theta_*:=2^{C_{0,*}}\).
They further introduced the quantum Lov\'asz number
\(\widetilde{\vartheta}\) and proved that it gives a computable,
multiplicative upper bound on \(\Theta_*\). 
Noncommutative analogues of classical graph parameters have also
been used to derive bounds for zero-error communication
\cite{Stahlke2016,LevenePaulsenTodorov2018,GriblingLi2020,BorelandTodorovWinter2021}.
Another line of work concerns entanglement-assisted homomorphisms
of noncommutative graphs and their connections with non-local games
\cite{Stahlke2016,MustoReutterVerdon2018,BrannanHarrisTodorovTurowska2023}.
Li and Zuiddam characterized the entanglement-assisted zero-error
capacity of noncommutative graphs in terms of their asymptotic
spectrum \cite{LiZuiddam2021}.

Duan and Winter studied zero-error communication assisted by quantum
no-signalling (QNS) correlations, generalizing entanglement assistance
\cite{DuanWinter2016}. 

A quantum no-signalling correlation can be represented by a bipartite CPTP map
\(\mathcal R:\mathcal L(A_i\otimes B_i)\to\mathcal L(A_o\otimes B_o)\).
The map \(\mathcal R\) is said to be non-signalling from Bob to Alice if there exists a quantum channel
$
\mathcal R_A:\mathcal L(A_i)\to\mathcal L(A_o)
$
such that, for every
\(X_{A_iB_i}\),
\[
\operatorname{Tr}_{B_o}\mathcal R(X_{A_iB_i})
=
\mathcal R_A\bigl(\operatorname{Tr}_{B_i}X_{A_iB_i}\bigr).
\]
Similarly, if there exists a channel
\(\mathcal R_B:\mathcal L(B_i)\to\mathcal L(B_o)\)
such that
\[
\operatorname{Tr}_{A_o}\mathcal R(X_{A_iB_i})
=
\mathcal R_B\bigl(\operatorname{Tr}_{A_i}X_{A_iB_i}\bigr),
\]
then \(\mathcal R\) is non-signalling from Alice to Bob.
A map \(\mathcal R\) satisfying both no-signalling constraints is called a quantum no-signalling correlation
\cite{BeckmanGottesmanNielsenPreskill2001,
	EggelingSchlingemannWerner2002,
	PianiHorodeckiHorodeckiHorodecki2006,DuanWinter2016}.
The two constraints state, respectively, that Alice's reduced output depends only on Alice's reduced input and not on Bob's input, and that Bob's reduced output depends only on Bob's reduced input and not on Alice's input. Hence, changing either party's local input cannot alter the output statistics observable by the other party, and \(\mathcal R\) itself cannot serve as a communication channel between them.

In the QNS-assisted communication model, the given noisy quantum channel is denoted by
\(
\mathcal N:\mathcal L(A_o)\rightarrow\mathcal L(B_i).
\)
Alice encodes the message into \(A_i\), and the QNS operation
\(\mathcal R\) produces the channel input \(A_o\) on Alice's side.
The system \(A_o\) is transmitted through \(\mathcal N\) to \(B_i\), after which
\(\mathcal R\) produces the final output \(B_o\) on Bob's side.
This procedure yields an effective channel from \(A_i\) to \(B_o\).
If there exists a QNS operation \(\mathcal R\) such that this effective channel acts as a noiseless classical channel on an \(M\)-dimensional classical system, then one use of \(\mathcal N\) can transmit \(M\) classical messages without error with QNS assistance.
A more explicit construction of the communication model and its Choi-operator formulation can be found in
\cite{DuanWinter2016,LaiDuan2017}.

Any bipartite channel implemented by local operations and shared
entanglement satisfies these no-signalling conditions.
Entanglement-assisted communication is therefore a special case of
QNS-assisted communication.

For a channel \(\mathcal N:\mathcal L(A)\to\mathcal L(B)\)
with Kraus operators \(\{E_k\}\), let
\(K:=K(\mathcal N)=\operatorname{span}\{E_k:k\}\).
Its Choi operator is
\(J_{AB}^{\mathcal N}
:=\sum_k\lvert E_k\rangle\!\rangle\langle\!\langle E_k\rvert\),
where \(\lvert E_k\rangle\!\rangle\) is obtained by stacking the columns
of \(E_k\) in fixed bases, and
\(\langle\!\langle E_k\rvert\) is its Hermitian adjoint.
We write \(P_{AB}:=\supp(J_{AB}^{\mathcal N})\) for its support projection,
which depends only on \(K\), and set \(Q_{AB}:=I_{AB}-P_{AB}\).

Duan and Winter proved that the maximum number of classical messages
that can be transmitted without error in one channel use with QNS
assistance depends only on \(K\) and equals \(\lfloor\Upsilon(K)\rfloor\),
where \(\Upsilon(K)\) is given by the following semidefinite program
(SDP) \cite{DuanWinter2016}:
\begin{equation}
	\begin{aligned}
		\Upsilon(K)=\max\quad
		&\operatorname{Tr} S_A\\
		\mathrm{s.t.}\quad
		&S_A\succeq0,\\
		&0\preceq U_{AB}\preceq S_A\otimes I_B,\\
		&\operatorname{Tr}_AU_{AB}=I_B,\\
		&\operatorname{Tr}\!\left[
		P_{AB}
		\left(S_A\otimes I_B-U_{AB}\right)
		\right]=0.
	\end{aligned}
	\label{sdp:Upsilon}
\end{equation}

The parameter \(\Upsilon\) is supermultiplicative under tensor products, namely, \(\Upsilon(K_1\otimes K_2)\geq\Upsilon(K_1)\Upsilon(K_2)\) \cite{DuanWinter2016}.  
The asymptotic QNS-assisted zero-error capacity is given by
\begin{equation}
	C_{0,\NS}(K)
	=
	\lim_{n\to\infty}\frac1n\log_2\Upsilon(K^{\otimes n}),
	\quad
	\Theta_{\NS}(K):=2^{C_{0,\NS}(K)}.
	\label{DefofQNSCapacity}
\end{equation}

Thus the one-shot problem is given by a single semidefinite program, whereas the general asymptotic capacity is a regularized quantity for which no single-letter formula is known
\cite{DuanWinter2016}.
Duan and Wang subsequently introduced an activated one-shot quantity \(\widehat\Upsilon(K)\) given by the SDP
\cite{DuanWangActivated2015}
\begin{equation}
	\begin{aligned}
		\widehat\Upsilon(K)=\max\quad
		&\operatorname{Tr}S_A\\
		\mathrm{s.t.}\quad
		&S_A\succeq0,\\
		&0\preceq U_{AB}\preceq S_A\otimes I_B,\\
		&\operatorname{Tr}_AU_{AB}\preceq I_B,\\
		&\operatorname{Tr}\!\left[
		P_{AB}\left(S_A\otimes I_B-U_{AB}\right)
		\right]=0.
	\end{aligned}
	\label{sdp:WidehatUpsilonMax}
\end{equation}
Its dual SDP is \cite{DuanWangActivated2015}
\begin{equation}
	\begin{aligned}
		\widehat\Upsilon(K)=\min\quad
		&\operatorname{Tr}T_B\\
		\mathrm{s.t.}\quad
		&T_B\succeq0,\quad V_{AB}=V_{AB}^{\dagger},\\
		&\operatorname{Tr}_BV_{AB}\succeq I_A,\\
		&V_{AB}\preceq I_A\otimes T_B,\\
		&Q_{AB}V_{AB}Q_{AB}\preceq0.
	\end{aligned}
	\label{sdp:WidehatUpsilonMin}
\end{equation}
And strong duality holds.

They also proved that activation does not change the asymptotic
capacity \cite{DuanWangActivated2015}:
\begin{equation}
	\Theta_{\NS}(K)
	=
	\lim_{n\to\infty}
	\widehat\Upsilon(K^{\otimes n})^{1/n}
	=
	\sup_{n\geq1}
	\widehat\Upsilon(K^{\otimes n})^{1/n}.
	\label{eq:ThetaWidehatReg}
\end{equation}

Shannon's classical upper-bound problem has a natural counterpart here: to find a computable single-letter upper bound on \(\Theta_{\rm NS}\), preferably one that is exactly multiplicative under tensor products. Duan and Winter obtained a computable upper bound on
\(\Theta_{\rm NS}(K)\) in terms of the semidefinite covering number
\(\Sigma(K)\), defined by the following SDP
\cite{DuanWinter2016}:
\begin{equation}
	\begin{aligned}
		\Sigma(K)=\min\quad
		&\operatorname{Tr} T_B\\
		\mathrm{s.t.}\quad
		&F_{AB}\succeq0,\quad T_B\succeq0,\\
		&F_{AB}=P_{AB}F_{AB}P_{AB},\\
		&\operatorname{Tr}_BF_{AB}=I_A,\\
		&F_{AB}\preceq I_A\otimes T_B.
	\end{aligned}
	\label{sdp:Sigma}
\end{equation}

We introduce an exactly multiplicative SDP upper bound \(\Gamma(K)\)
and prove that \(\Gamma(K)\leq\Sigma(K)\) for every Kraus operator space \(K\). Among known general upper bounds on \(\Theta_{\NS}\) given by semidefinite programs, \(\Gamma\) is the tightest.

QNS-assisted exact simulation asks how much noiseless classical
communication is required to implement a prescribed quantum
channel exactly.
For a fixed channel \(\mathcal N\), Duan and Winter showed that
the minimum classical message alphabet size is
\(\lceil\Sigma(\mathcal N)\rceil\), where
\(\Sigma(\mathcal N)=
\min\{\operatorname{Tr}T_B:
T_B\succeq0,\,
J_{AB}^{\mathcal N}\preceq I_A\otimes T_B\}\)
\cite{DuanWinter2016}.
When only a Kraus operator space \(K\) is specified, the
semidefinite covering number is obtained by minimizing this
quantity over channels satisfying \(K(\mathcal N)\subseteq K\).

Since \(\Sigma\) is submultiplicative, its regularization
\begin{equation*}
	S_{0,\NS}(K)
	:=
	\lim_{n\to\infty}
	\frac1n
	\log_2\Sigma(K^{\otimes n})
	=
	\inf_{n\geq1}
	\frac1n
	\log_2\Sigma(K^{\otimes n})
\end{equation*}
exists and gives the asymptotic QNS-assisted exact simulation cost associated with the Kraus operator space, and it satisfies
\cite{DuanWinter2016}: 
\begin{equation}
	C_{0,\NS}(K)
	\leq
	S_{0,\NS}(K)
	\leq
	\log_2\Sigma(K).
	\label{C0nsleqS0nsleqSigma}
\end{equation}

The semidefinite covering number $\Sigma$ is not exactly multiplicative in general: Wang and Duan constructed a Kraus operator space satisfying \(\Sigma(K\otimes K)<\Sigma(K)^2\). They also introduced the supermultiplicative quantity \(\Sigma^{-}(K)\),
\begin{equation}
	\begin{aligned}
		\Sigma^{-}(K)=\max\quad &\operatorname{Tr}S_A\\
		\mathrm{s.t.}\quad
		&S_A\succeq0,\quad U_{AB}\succeq0,\\
		&\operatorname{Tr}_AU_{AB}=I_B,\\
		&P_{AB}(U_{AB}-S_A\otimes I_B)P_{AB}\succeq0,
	\end{aligned}
	\label{sdp:SigmaMinusWD}
\end{equation}
for which \(\log_2\Sigma^{-}(K)\) is a lower bound on \(S_{0,\NS}(K)\)
\cite{WangDuan2016}. They left open whether
\(S_{0,\NS}(K)=\log_2\Sigma^{-}(K)\)
holds for every Kraus operator space \(K\).

Using an equivalent SDP formulation, we prove that \(\Sigma^{-}\)
is exactly multiplicative under tensor products, strengthening
its previously known supermultiplicativity.
Thus regularization does not improve the simulation lower bound
\(\log_2\Sigma^{-}(K)\).
We construct a family of Kraus operator spaces for which
this bound is strictly smaller than \(S_{0,\NS}(K)\), answering
the above question in the negative.

We further prove
\(\Gamma(K)\leq\Sigma^{-}(K)\leq\Gamma(K)^2\).
Consequently, \(\Sigma^{-}\) is also a multiplicative upper bound
on \(\Theta_{\NS}\).
Together, these results give
\[
\widehat\Upsilon(K)
\leq
\Theta_{\NS}(K)
\leq
\Gamma(K)
\leq
\Sigma^{-}(K)
\leq
\Gamma(K)^2,
\quad
\Gamma(K)\leq\Sigma(K).
\]

Fix an orthonormal input basis
\(\{\lvert x\rangle:x\in\mathcal X\}\).
A classical--quantum channel has the form
\(
\mathcal N(\rho)
=
\sum_{x\in\mathcal X}
\langle x\rvert\rho\lvert x\rangle\,\rho_x,
\)
where
\(\rho_x\)
is a density operator on the output system \(B\).
Let \(P_x:=\supp(\rho_x)\). Then the Choi support projection is
\begin{equation}
	P_{AB}
	=
	\sum_{x\in\mathcal X}
	\lvert x\rangle\langle x\rvert_A
	\otimes P_x.
	\label{eq:CqSupport}
\end{equation}
For classical--quantum channels, the semidefinite packing number
\(A(K)\) is given by the following SDP \cite{DuanWinter2016}:
\begin{equation}
	\begin{aligned}
		A(K)=\max\quad
		&\sum_{x\in\mathcal X}s_x\\
		\mathrm{s.t.}\quad
		&s_x\geq0
		\quad\forall x,\\
		&\sum_{x\in\mathcal X}s_xP_x
		\preceq I_B.
	\end{aligned}
	\label{sdp:SemidefinitePackingPrimal}
\end{equation}
Its dual is
\begin{equation*}
	\begin{aligned}
		A(K)=\min\quad
		&\operatorname{Tr} T_B\\
		\mathrm{s.t.}\quad
		&T_B\succeq0,\\
		&\operatorname{Tr}(P_xT_B)\geq1
		\quad\forall x\in\mathcal X.
	\end{aligned}
\end{equation*}
The two SDPs satisfy strong duality.

Prior work gives
\(\Theta_{\NS}(K)=A(K)\) \cite{DuanWinter2016},
\(\widehat\Upsilon(K)=A(K)\) \cite{DuanWangActivated2015},
and
\(\Sigma^{-}(K)=\Sigma(K)\) \cite{WangDuan2016}.

For a classical--quantum channel \(x\mapsto\rho_x\), let \(K\)
denote its Kraus operator space and set
\(B':=\operatorname{span}\{\operatorname{ran}P_x:x\in\mathcal X\},\; r:=\dim B'\).
We prove the two sharp estimates
\[
A(K)\leq\Gamma(K)\leq\frac{1+\sqrt r}{2}A(K),
\quad
\Gamma(K)\leq\Sigma(K)\leq\frac{1+\sqrt r}{2}\Gamma(K).
\]
An explicit \(r\)-dimensional pure-state family attains equality in both upper bounds and therefore also attains
\(\Sigma/A=[(1+\sqrt r)/2]^2\). Moreover, this family provides a separation among \(\Theta_{\NS}\), \(\Gamma\), and \(\Sigma\).

The classical--quantum bounds admit a rank-sensitive refinement. Let
\(k:=\min_x\operatorname{rank}P_x\) be the minimum rank of these support projections. Then
\[
\frac{\Sigma(K)}{A(K)}
\leq
\left(\frac{1+\sqrt r}{1+\sqrt k}\right)^2,
\quad
\frac{\Sigma(K)}{\Gamma(K)}
\leq
\frac{1+\sqrt r}{1+\sqrt k}.
\]
The bound on \(\Sigma/A\) is sharp for every \(1\leq k\leq r\). We further obtain the sharp operational estimate
\[
0\leq S_{0,\NS}(K)-C_{0,\NS}(K)
\leq
2\log_2\frac{1+\sqrt r}{1+\sqrt k}.
\]

The paper is organized as follows. Section~II fixes our notation
and reviews Kraus operator spaces, noncommutative graphs, and
classical--quantum channels. Section~III introduces \(\Gamma\), derives its dual program, and proves its capacity upper-bound property and exact tensor multiplicativity. Section~IV relates \(\Gamma\) to \(\Sigma^{-}\), proves exact multiplicativity of \(\Sigma^{-}\) and the quadratic inequality \(\Sigma^{-}\leq\Gamma^2\),  summarizes these general bounds, and constructs a family for which
\(\log_2\Sigma^{-}(K)<S_{0,\NS}(K)\). Section~V treats the classical--quantum case and proves the sharp bounds
\(A\leq\Gamma\leq(1+\sqrt r)A/2\),
\(\Gamma\leq\Sigma\leq(1+\sqrt r)\Gamma/2\),
\(\Sigma/A\leq[(1+\sqrt r)/(1+\sqrt k)]^2\), and
\(S_{0,\NS}(K)-C_{0,\NS}(K)\leq
2\log_2[(1+\sqrt r)/(1+\sqrt k)]\). The duality arguments, classical--quantum reductions, and the comparison with the quantum Lov\'asz number are collected in the appendices.

\section{Preliminaries}
\label{sec:preliminaries}

This section fixes our notation and collects definitions and properties
used in the proofs. Unless stated otherwise, all Hilbert spaces are finite-dimensional and complex, and all logarithms are to base \(2\).

\subsection{Operator Notation, Choi Operators, and Kraus Operator Spaces}

For Hilbert spaces \(A\) and \(B\), let \(\mathcal L(A,B)\) denote the space of linear operators from \(A\) to \(B\), and write \(\mathcal L(A):=\mathcal L(A,A)\).
We write \(\operatorname{Herm}(A)\) and \(\operatorname{Pos}(A)\) for the vector space of Hermitian operators on \(A\) and the cone of positive semidefinite operators on \(A\), respectively.
For Hermitian operators \(X,Y\), the notation \(X\succeq Y\) means that \(X-Y\succeq0\).
The identity operator is denoted by \(I_A\), and the trace, rank, range, support projection, and operator norm of an operator are denoted by \(\operatorname{Tr}X\), \(\operatorname{rank}X\), \(\operatorname{ran}X\), \(\supp(X)\), and \(\lVert X\rVert_\infty\), respectively.
On a complex operator space, the Hilbert--Schmidt inner product is \(\langle X,Y\rangle:=\operatorname{Tr}(X^\dagger Y)\).
Partial traces are denoted by
\(\operatorname{Tr}_A\) and \(\operatorname{Tr}_B\).

Let \(A'\simeq A\), and fix the unnormalized maximally entangled vector
\(
\lvert\Phi\rangle_{AA'}
:=
\sum_i
\lvert i\rangle_{A}\otimes\lvert i\rangle_{A'}.
\)
For
\(E\in\mathcal L(A',B)\), define its vectorization by
\(
\lvert E\rangle\!\rangle_{AB}
:=
(I_{A}\otimes E)\lvert\Phi\rangle_{AA'}.
\)
This convention satisfies $\langle\!\langle E\vert F\rangle\!\rangle=\operatorname{Tr}(E^\dagger F).$

A quantum channel $\mathcal N:\mathcal L(A')\rightarrow\mathcal L(B)$
is a completely positive and trace-preserving linear map. It admits a Kraus representation
\(
\mathcal N(\rho)
=
\sum_k E_k\rho E_k^\dagger,
\;
\sum_kE_k^\dagger E_k=I_{A'}.
\)
The corresponding Choi operator is
\(J_{AB}^{\mathcal N} := (\operatorname{id}_{A}\otimes\mathcal N) \bigl( \lvert\Phi\rangle\langle\Phi\rvert_{AA'} \bigr) = \sum_k \lvert E_k\rangle\!\rangle \langle\!\langle E_k\rvert\).
Choi's theorem states that
\(
\mathcal N\ \text{is completely positive}\Leftrightarrow
J_{AB}^{\mathcal N}\succeq0,
\)
while trace preservation is equivalent to
\(
\operatorname{Tr}_BJ_{AB}^{\mathcal N}=I_{A}
\)
\cite{Choi1975}.

From this point onward, we write
\(\mathcal N:\mathcal L(A)\rightarrow\mathcal L(B)\) and
\(E\in\mathcal L(A,B)\)
in place of
\(\mathcal N:\mathcal L(A')\rightarrow\mathcal L(B)\) and
\(E\in\mathcal L(A',B)\), respectively, without ambiguity.

The Kraus operator space of a channel \(\mathcal N\) is defined by
\(
K(\mathcal N)
:=
\operatorname{span}\{E_k:k\}
\subseteq\mathcal L(A,B).
\)
This linear space is independent of the choice of Kraus representation. The Choi support projection of \(K\) is the orthogonal projection
\(P_{AB}\) onto
\(\operatorname{span}\{\lvert E\rangle\!\rangle_{AB}:E\in K\}\).
We set \(Q_{AB}:=I_{AB}-P_{AB}\).
When the underlying systems are clear, we write simply \(P\) and \(Q\).

A subspace \(K\subseteq\mathcal L(A,B)\) is called a Kraus operator space if there exists a CPTP channel \(\mathcal N\) such that \(K(\mathcal N)=K\).
In zero-error information theory, it is also called a noncommutative bipartite graph
\cite{DuanSeveriniWinter2016,DuanWinter2016}.
Consequently, if \(P_{AB}\) is the support projection of \(K\), then there exists a channel Choi operator
\(J_{AB}\) satisfying
\(
J_{AB}\succeq0,\;
\supp(J_{AB})=P_{AB},\;
\operatorname{Tr}_BJ_{AB}=I_A.
\)

For two Kraus operator spaces \(K_i\subseteq\mathcal L(A_i,B_i)\), \(i=1,2\), their tensor product is
\(
K_1\otimes K_2
\subseteq
\mathcal L(A_1\otimes A_2,B_1\otimes B_2).
\)
Define the standard reordering unitary
\(\Pi_{12}: A_1\otimes B_1\otimes A_2\otimes B_2 \to A_1\otimes A_2\otimes B_1\otimes B_2\)
by
\[
\Pi_{12}
\bigl(
\lvert a_1\rangle
\otimes\lvert b_1\rangle
\otimes\lvert a_2\rangle
\otimes\lvert b_2\rangle
\bigr)
=
\lvert a_1\rangle
\otimes\lvert a_2\rangle
\otimes\lvert b_1\rangle
\otimes\lvert b_2\rangle.
\]
Once the two Kraus operator spaces under consideration have been fixed, we abbreviate
\(\Pi_{12}\) as \(\Pi\) whenever no confusion can arise.
If \(P_i\) is the Choi support projection of \(K_i\), then
\(P_{12} = \Pi_{12} (P_1\otimes P_2) \Pi_{12}^\dagger\)
is the support projection of \(K_1\otimes K_2\).

\subsection{Noncommutative Graphs and Entanglement-Assisted Zero-Error Capacity}

The noncommutative graph induced by a Kraus operator space \(K\) is
\(S(K) := K^\dagger K = \operatorname{span} \{E^\dagger F:E,F\in K\} \subseteq\mathcal L(A)\).
It is a self-adjoint operator space containing the identity, namely, an operator system.
Suppose that a quantum channel \(\mathcal N\) satisfies
\(K(\mathcal N)=K\). For unit vectors
\(\lvert\psi\rangle,\lvert\varphi\rangle\in A\), the output states
\(\mathcal N(\lvert\psi\rangle\langle\psi\rvert)\)
and
\(\mathcal N(\lvert\varphi\rangle\langle\varphi\rvert)\)
are orthogonal if and only if
$
\langle\psi\vert X\vert\varphi\rangle=0,
\;
\forall\,X\in S(K).
$
Thus, the noncommutative graph determines the confusability relation among input states, whereas the Kraus operator space additionally determines the Choi support subspace in \(A\otimes B\)
\cite{DuanSeveriniWinter2013,DuanSeveriniWinter2016}.

In general, the following inequality holds \cite[(11)]{WangDuan2018}: \(C_{0,*}(S(K))\leq C_{0,\NS}(K)\).
In the nonlogarithmic notation used here, this relation is equivalent to
\begin{equation}
	\Theta_*(S(K))
	\leq
	\Theta_{\NS}(K).
	\label{eq:ThetaStarThetaNS}
\end{equation}

\subsection{Classical--Quantum Kraus Operator Spaces}

For a classical--quantum channel \(x\mapsto\rho_x\), write
\(P_x:=\supp(\rho_x)\). Its Kraus operator space is determined by
the projections \(\{P_x\}_{x\in\mathcal X}\), through the Choi support
projection \eqref{eq:CqSupport}.
Since every output state is supported on
\(B':=\operatorname{span}\{\operatorname{ran}P_x:x\in\mathcal X\}\),
we may restrict the output space to \(B'\).
Throughout the classical--quantum analysis, we write
\(r:=\dim B'\) and \(k:=\min_x\operatorname{rank}P_x\), so that
\(1\leq k\leq r\).
\section{An Exactly Multiplicative Single-Letter Upper Bound \texorpdfstring{$\Gamma$}{Gamma}}
\label{sec:Gamma}

\subsection{Construction and Basic Reductions}

The minimization formulation \eqref{sdp:WidehatUpsilonMin} of \(\widehat\Upsilon\) contains the constraint \(QVQ\preceq0\), which is not preserved by tensor products. A tensor-stable substitute is obtained by writing
\(V=(X+X^\dagger)/2\) and imposing \(PX=X\), which forces \(QVQ=0\). Likewise, writing \(T=(T_0+T_1)/2\) and imposing a \(2\times2\) block positive semidefinite constraint on \(X,T_0,T_1\) ensures \(V\preceq I_A\otimes T\). Together with \(\operatorname{Tr}_BX\succeq I_A\), these tensor-stable conditions define the following SDP.

\begin{definition}\label{def:Gamma}
	For every Kraus operator space \(K\), let \(\Gamma(K)\) be the optimal value of the following SDP:
	\begin{equation}
		\begin{aligned}
			\Gamma(K)=\min\quad &\frac12\operatorname{Tr}(T_{0,B}+T_{1,B})\\
			\mathrm{s.t.}\quad
			&PX_{AB}=X_{AB},\\
			&\operatorname{Tr}_BX_{AB}\succeq I_A,\\
			&\begin{pmatrix}
				I_A\otimes T_{0,B}&X_{AB}\\
				X_{AB}^{\dagger}&I_A\otimes T_{1,B}
			\end{pmatrix}\succeq0.
		\end{aligned}
		\label{sdp:Gamma}
	\end{equation}
	Here \(X_{AB}\in\mathcal L(A\otimes B)\) is an arbitrary complex operator satisfying the constraints.
\end{definition}

The positivity of the two diagonal principal blocks implies \(T_{0,B},T_{1,B}\succeq0\).

\begin{lemma}\label{lem:GammaNormalize}
	For every Kraus operator space \(K\), replacing the inequality constraint \(\operatorname{Tr}_BX_{AB}\succeq I_A\) in \eqref{sdp:Gamma} by \(\operatorname{Tr}_BX_{AB}=I_A\) does not change the optimal value. Equivalently, \(\Gamma\) has the following SDP formulation:
	\begin{equation}
		\begin{aligned}
			\Gamma(K)=\min\quad &\frac12\operatorname{Tr}(T_{0,B}+T_{1,B})\\
			\mathrm{s.t.}\quad
			&PX_{AB}=X_{AB},\\
			&\operatorname{Tr}_BX_{AB}= I_A,\\
			&\begin{pmatrix}
				I_A\otimes T_{0,B}&X_{AB}\\
				X_{AB}^{\dagger}&I_A\otimes T_{1,B}
			\end{pmatrix}\succeq0.
		\end{aligned}
		\label{sdp:GammaEq}
	\end{equation}
\end{lemma}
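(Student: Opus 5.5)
The plan is to compare the two programs directly. Every feasible point of \eqref{sdp:GammaEq} is feasible for \eqref{sdp:Gamma}, so the optimal value of \eqref{sdp:Gamma} is at most that of \eqref{sdp:GammaEq}. For the reverse inequality, I will take an arbitrary feasible triple \((X_{AB},T_{0,B},T_{1,B})\) of \eqref{sdp:Gamma}. From it I will build a feasible triple of \eqref{sdp:GammaEq} with the same objective value, by rescaling \(X\) on the \(A\) side.

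Set \(M_A:=\operatorname{Tr}_BX_{AB}\). The constraint \(M_A\succeq I_A\) implicitly says that \(M_A\) is Hermitian, and it also gives that \(M_A\) is invertible with \(0\prec M_A^{-1}\preceq I_A\).

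The natural candidate would be to multiply \(X\) on the left by something on \(A\). That fails: left multiplication by \(C_A\otimes I_B\) sends \(\lvert E\rangle\!\rangle\) to \(\lvert EC^{T}\rangle\!\rangle\), which need not lie in the range of \(P\). I will therefore multiply on the \emph{right} and define
\[
X'_{AB}:=X_{AB}\,(M_A^{-1}\otimes I_B).
\]
Then \(PX'=PX(M_A^{-1}\otimes I_B)=X'\), so the support constraint is preserved. Moreover \(\operatorname{Tr}_BX'_{AB}=M_AM_A^{-1}=I_A\), because the \(A\)-side factor pulls out of the partial trace over \(B\).

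For the block constraint, I will conjugate the original positive block matrix by \(\operatorname{diag}(I_{AB},\,M_A^{-1}\otimes I_B)\). This preserves positivity and yields
\[
\begin{pmatrix}
I_A\otimes T_{0,B}& X'_{AB}\\
X'^{\dagger}_{AB}& M_A^{-2}\otimes T_{1,B}
\end{pmatrix}\succeq0 .
\]
Since \(M_A^{-2}\preceq I_A\) and \(T_{1,B}\succeq0\), we have \(M_A^{-2}\otimes T_{1,B}\preceq I_A\otimes T_{1,B}\). Enlarging the lower-right diagonal block by a positive operator keeps the block matrix positive. Hence \((X',T_{0,B},T_{1,B})\) is feasible for \eqref{sdp:GammaEq} with the same objective \(\tfrac12\operatorname{Tr}(T_{0,B}+T_{1,B})\).

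Taking the infimum over feasible points of \eqref{sdp:Gamma} then gives equality of the two optimal values. The only delicate step is choosing the side of the rescaling so that the constraint \(PX=X\) survives. The right-multiplication choice resolves this, and the positivity bookkeeping afterwards is routine.
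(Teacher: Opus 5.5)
Your proposal is correct and follows the paper's proof essentially step for step: you right-multiply by \(M_A^{-1}\otimes I_B\) (the paper's \(C_{AB}=H_A^{-1}\otimes I_B\)), congruence by \(\operatorname{diag}(I_{AB},M_A^{-1}\otimes I_B)\) gives the block matrix with lower-right entry \(M_A^{-2}\otimes T_{1,B}\), and adding a positive semidefinite operator to that block restores \(I_A\otimes T_{1,B}\). The only difference is cosmetic: you justify \(M_A^{-2}\otimes T_{1,B}\preceq I_A\otimes T_{1,B}\) by \((I_A-M_A^{-2})\otimes T_{1,B}\succeq0\), whereas the paper writes the same difference as \(D_1^{1/2}(I_{AB}-C_{AB}^{2})D_1^{1/2}\succeq0\), with \(D_1=I_A\otimes T_{1,B}\).
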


\begin{IEEEproof}
	Let \((X,T_0,T_1)\) be feasible for \eqref{sdp:Gamma}, and set
	\(H_A:=\operatorname{Tr}_BX\succeq I_A\),
	\(C_{AB}:=H_A^{-1}\otimes I_B\), and \(\widetilde X:=XC_{AB}\).
	Since \(H_A\succeq I_A\), the operator \(H_A\) is invertible and \(0\prec C_{AB}\preceq I_{AB}\). The support constraint is preserved, since \(P\widetilde X=PX C_{AB}=XC_{AB}=\widetilde X\). The partial trace satisfies \(\operatorname{Tr}_B\widetilde X=(\operatorname{Tr}_BX)H_A^{-1}=I_A\). Let \(D_0:=I_A\otimes T_0\) and \(D_1:=I_A\otimes T_1\).
	A congruence transformation of the original block matrix by \(\operatorname{diag}(I_{AB},C_{AB})\) gives
	\begin{equation}
		\begin{pmatrix}
			D_0&XC_{AB}\\
			C_{AB}X^{\dagger}&C_{AB}D_1C_{AB}
		\end{pmatrix}\succeq0.
		\label{lemma11proofeq1}
	\end{equation}
	Because \(D_1\) commutes with \(C_{AB}\) and \(0\preceq C_{AB}\preceq I_{AB}\),
	\(D_1-C_{AB}D_1C_{AB} =D_1^{\frac12}(I_{AB}-C_{AB}^{2})D_1^{\frac12}\succeq0\).
	Adding the positive semidefinite operator \(D_1-C_{AB}D_1C_{AB}\) to the lower-right block of \eqref{lemma11proofeq1} yields
	\(
	\begin{psmallmatrix}
		D_0&\widetilde X\\
		\widetilde X^{\dagger}&D_1
	\end{psmallmatrix}\succeq0.
	\)
	
	Therefore, \((\widetilde X,T_0,T_1)\) is a feasible solution of \eqref{sdp:GammaEq} with the same objective value. The reverse inclusion of feasible sets holds directly, so the SDPs \eqref{sdp:Gamma} and \eqref{sdp:GammaEq} are equivalent.
\end{IEEEproof}

For convenience, we refer below to the three constraints in \eqref{sdp:Gamma} and \eqref{sdp:GammaEq} as the support constraint, the partial-trace constraint, and the block positive semidefinite constraint, respectively.

\begin{lemma}\label{lem:GammaBalance}
	For every Kraus operator space \(K\), the infimum in \eqref{sdp:Gamma} and \eqref{sdp:GammaEq} may be restricted to feasible solutions satisfying \(\operatorname{Tr} T_{0,B}=\operatorname{Tr} T_{1,B}\).
\end{lemma}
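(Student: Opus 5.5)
Given any feasible solution \((X,T_0,T_1)\) of \eqref{sdp:Gamma} (or \eqref{sdp:GammaEq}), we will produce another feasible solution with equal traces on the two diagonal blocks and no larger objective value. Taking infima then gives the claim. The natural tool is a rescaling of the two blocks, a congruence by \(\operatorname{diag}(\lambda^{1/2}I,\lambda^{-1/2}I)\), combined with the AM--GM inequality.

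\textbf{Degenerate case.} First we dispose of the case where one trace vanishes. Suppose \(\operatorname{Tr}T_0=0\). Since \(T_0\succeq0\), this gives \(T_0=0\). The block positivity then forces \(X=0\), because a positive semidefinite block matrix with a zero diagonal block has zero off-diagonal blocks. But \(X=0\) contradicts \(\operatorname{Tr}_BX\succeq I_A\). The same argument applies to \(T_1\). Hence \(a:=\operatorname{Tr}T_0>0\) and \(b:=\operatorname{Tr}T_1>0\).

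\textbf{Rescaling.} For \(\lambda>0\), conjugating the block matrix by \(\operatorname{diag}(\lambda^{1/2}I_{AB},\lambda^{-1/2}I_{AB})\) gives
\[
\begin{pmatrix}
\lambda\, I_A\otimes T_0 & X\\
X^\dagger & \lambda^{-1} I_A\otimes T_1
\end{pmatrix}\succeq0 .
\]
The off-diagonal block \(X\) is unchanged, so the support and partial-trace constraints continue to hold. Therefore \((X,\lambda T_0,\lambda^{-1}T_1)\) is feasible for both formulations.

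\textbf{Choice of \(\lambda\).} Take \(\lambda=\sqrt{b/a}\). Then both new traces equal \(\sqrt{ab}\). The new objective is \(\sqrt{ab}\le (a+b)/2\) by AM--GM, which is at most the original objective. This completes the argument.

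The only point needing care is the degenerate case, which requires that a zero diagonal block forces \(X=0\). This follows from the standard range-inclusion property of positive semidefinite \(2\times2\) block matrices: the range of the off-diagonal block lies in the range of the corresponding diagonal block. Everything else is immediate.
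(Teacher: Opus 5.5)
Your proposal is correct and follows the paper's proof essentially step for step. It uses the same congruence by \(\operatorname{diag}(\lambda^{1/2}I_{AB},\lambda^{-1/2}I_{AB})\) and the same argument that a vanishing \(\operatorname{Tr}T_i\) forces \(T_i=0\), hence \(X=0\), contradicting the partial-trace constraint; your explicit choice \(\lambda=\sqrt{b/a}\) together with AM--GM simply spells out the paper's remark that the objective is minimized when the two traces agree.
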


\begin{IEEEproof}
	If \((X,T_0,T_1)\) is feasible, then for every \(\lambda>0\),
	\[
	\begin{pmatrix}
		I_A\otimes \lambda T_{0}&X\\
		X^{\dagger}&I_A\otimes \lambda^{-1}T_{1}
	\end{pmatrix}
	=
	\begin{pmatrix}
		\lambda^{1/2}I_{AB} & 0\\
		0 & \lambda^{-1/2}I_{AB}
	\end{pmatrix}
	\begin{pmatrix}
		I_A\otimes T_{0}&X\\
		X^{\dagger}&I_A\otimes T_{1}
	\end{pmatrix}
	\begin{pmatrix}
		\lambda^{1/2}I_{AB} & 0\\
		0 & \lambda^{-1/2}I_{AB}
	\end{pmatrix}
	\succeq0.
	\]
	Hence \((X,\lambda T_0,\lambda^{-1}T_1)\) remains feasible.
	No feasible solution can satisfy \(\operatorname{Tr} T_i=0\). Indeed, if, without loss of generality, \(\operatorname{Tr} T_0=0\), then the block positive semidefinite constraint implies \(T_0=0_B\) and \(X=0_{AB}\), contradicting \(\operatorname{Tr}_BX_{AB}\succeq I_A\). Thus \(\operatorname{Tr} T_0,\operatorname{Tr} T_1>0\) for every feasible solution.
	
	As a function of \(\lambda\), the objective value \(\frac12(\lambda\operatorname{Tr} T_0+\lambda^{-1}\operatorname{Tr} T_1)\) is minimized when \(\operatorname{Tr}(\lambda T_0)=\operatorname{Tr}(\lambda^{-1}T_1)\), which proves the lemma.
\end{IEEEproof}

\subsection{One-Shot Comparison with \texorpdfstring{$\widehat\Upsilon$}{Upsilon-hat}}

\begin{proposition}\label{prop:WidehatGamma}
	For every Kraus operator space \(K\),
	\[
	\widehat\Upsilon(K)\leq\Gamma(K).
	\]
\end{proposition}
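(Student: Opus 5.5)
Since both \(\widehat\Upsilon(K)\) and \(\Gamma(K)\) are minimization problems, the plan is to map every feasible solution \((X,T_0,T_1)\) of \eqref{sdp:Gamma} to a feasible solution of the dual SDP \eqref{sdp:WidehatUpsilonMin} with the same objective value. Weak duality then gives \(\widehat\Upsilon(K)\le\Gamma(K)\), and in fact only the dual form of \(\widehat\Upsilon\) is needed. The map follows the heuristic stated before Definition~\ref{def:Gamma}:
\[
V_{AB}:=\tfrac12\bigl(X_{AB}+X_{AB}^\dagger\bigr),\qquad T_B:=\tfrac12\bigl(T_{0,B}+T_{1,B}\bigr).
\]
By construction \(\operatorname{Tr}T_B\) equals the objective of \eqref{sdp:Gamma}, and \(T_B\succeq0\) because the diagonal blocks of the block matrix are positive semidefinite.

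Next I check the constraints of \eqref{sdp:WidehatUpsilonMin} one at a time.

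\emph{Hermiticity.} \(V\) is Hermitian by construction.

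\emph{Partial trace.} \(\operatorname{Tr}_BV=\tfrac12\bigl(\operatorname{Tr}_BX+(\operatorname{Tr}_BX)^\dagger\bigr)\). Since \(\operatorname{Tr}_BX\succeq I_A\) is an operator inequality, \(\operatorname{Tr}_BX\) is Hermitian. Hence \(\operatorname{Tr}_BV=\operatorname{Tr}_BX\succeq I_A\). Alternatively, one can invoke Lemma~\ref{lem:GammaNormalize} and take \(\operatorname{Tr}_BX=I_A\).

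\emph{Support condition.} From \(PX=X\) we get \(QX=0\) and, taking adjoints, \(X^\dagger Q=0\). Therefore
\[
QVQ=\tfrac12\bigl(QXQ+QX^\dagger Q\bigr)=0\preceq0 .
\]

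\emph{Main inequality \(V\preceq I_A\otimes T_B\).} This is the only step with content. Write \(D_i:=I_A\otimes T_i\), so that \(M:=\begin{psmallmatrix}D_0&X\\X^\dagger&D_1\end{psmallmatrix}\succeq0\). Compress \(M\) by the isometry \(W:=\tfrac1{\sqrt2}\begin{psmallmatrix}I\\I\end{psmallmatrix}\); more precisely, use the vector \((\psi,\psi)/\sqrt2\). This gives
\[
0\preceq W^\dagger MW=\tfrac12\bigl(D_0+D_1+X+X^\dagger\bigr).
\]
That bounds \(V\) from below, \(-V\preceq I_A\otimes T\), which is the wrong direction. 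So I compress instead with \(W':=\tfrac1{\sqrt2}\begin{psmallmatrix}I\\-I\end{psmallmatrix}\), which gives
\[
0\preceq\tfrac12\bigl(D_0+D_1-X-X^\dagger\bigr)=I_A\otimes T_B-V .
\]
This is exactly the required inequality.

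All constraints of \eqref{sdp:WidehatUpsilonMin} therefore hold, with objective \(\operatorname{Tr}T_B=\tfrac12\operatorname{Tr}(T_0+T_1)\). Taking the infimum over feasible \((X,T_0,T_1)\) gives \(\widehat\Upsilon(K)\le\Gamma(K)\). The only points needing care are choosing the sign of the compression vector correctly and noting that \(\operatorname{Tr}_BX\) is Hermitian. I do not expect a real obstacle beyond these.
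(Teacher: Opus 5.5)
Your proposal is correct and follows the paper's proof essentially verbatim: the same substitution \(V=(X+X^\dagger)/2\), \(T=(T_0+T_1)/2\), the same support and partial-trace checks, and the same compression of the block matrix by \(\begin{psmallmatrix}I\\-I\end{psmallmatrix}\) to get \(V\preceq I_A\otimes T\). Your remark that only weak duality between \eqref{sdp:WidehatUpsilonMax} and \eqref{sdp:WidehatUpsilonMin} is needed is also right, since exhibiting a feasible point of the minimization already upper-bounds the maximum.
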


\begin{IEEEproof}
	Let \((X,T_0,T_1)\) be feasible for \eqref{sdp:Gamma}, and define
	\(V:=(X+X^{\dagger})/2\) and \(T:=(T_0+T_1)/2\).
	The support constraint \(PX=X\) implies \(QX=0\) and \(X^{\dagger}Q=0\), where \(Q=I-P\). Hence \(QVQ=0\). Moreover, since \(\operatorname{Tr}_BX\succeq I_A\), the operator \(\operatorname{Tr}_BX\) is Hermitian, and hence \(\operatorname{Tr}_BV=\operatorname{Tr}_BX\succeq I_A\).
	
	Let \(M\) denote the block matrix in \eqref{sdp:Gamma}, and define the column operator
	\(W:=\begin{psmallmatrix}I_{AB}\\-I_{AB}\end{psmallmatrix}\).
	Since \(M\succeq0\),
	\[
	0\preceq W^{\dagger}MW
	=I_A\otimes(T_0+T_1)-X-X^{\dagger}.
	\]
	Thus \(V\preceq I_A\otimes T\). Consequently, \((V,T)\) is feasible for \eqref{sdp:WidehatUpsilonMin}. Its objective value is
	\(\operatorname{Tr}T=\frac12\operatorname{Tr}(T_0+T_1)\). Taking the infimum over feasible \((X,T_0,T_1)\) proves the claim.
\end{IEEEproof}

\subsection{Dual Program and Exact Tensor Multiplicativity}

\begin{proposition}\label{prop:GammaDual}
	The dual of \eqref{sdp:GammaEq} is
	\begin{equation}
		\begin{aligned}
			\Gamma(K)=\max\quad &\operatorname{Re}\operatorname{Tr} H_A\\
			\mathrm{s.t.}\quad
			&\begin{pmatrix}Y_{0,AB}&Z_{AB}\\Z_{AB}^{\dagger}&Y_{1,AB}\end{pmatrix}\succeq0,\\
			&\operatorname{Tr}_AY_{0,AB}=\operatorname{Tr}_AY_{1,AB}=I_B,\\
			&PZ_{AB}=P(H_A\otimes I_B),
		\end{aligned}
		\label{sdp:GammaDual}
	\end{equation}
	where \(H_A\) and \(Z_{AB}\) are arbitrary complex operators. Moreover, strong duality holds.
\end{proposition}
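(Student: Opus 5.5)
We derive the dual of \eqref{sdp:GammaEq} by a Lagrangian argument and then establish strong duality through Slater's condition.

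\textbf{Setting up the Lagrangian.} For the primal, introduce the following multipliers:
\begin{itemize}
\item a positive semidefinite block matrix \(\begin{psmallmatrix}Y_0&Z\\Z^\dagger&Y_1\end{psmallmatrix}\succeq0\) for the block positive semidefinite constraint;
\item a complex operator \(H_A\) for the equality \(\operatorname{Tr}_BX=I_A\);
\item a complex operator \(G_{AB}\) for the equality \(QX=0\), which is equivalent to \(PX=X\).
\end{itemize}
Because \(X\) is complex and unconstrained, we pair it with the other variables through the real part of the Hilbert--Schmidt inner product. The Lagrangian is
\[
\tfrac12\operatorname{Tr}(T_0+T_1)
-\operatorname{Tr}\bigl[Y_0(I_A\otimes T_0)\bigr]
-\operatorname{Tr}\bigl[Y_1(I_A\otimes T_1)\bigr]
-2\operatorname{Re}\operatorname{Tr}(Z^\dagger X)
+\operatorname{Re}\operatorname{Tr}\bigl[H^\dagger(I_A-\operatorname{Tr}_BX)\bigr]
+\operatorname{Re}\operatorname{Tr}(G^\dagger QX).
\]
The factors \(2\) and \(\tfrac12\) only need to be consistent. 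After rescaling \(Y_i\) and \(Z\), the normalisation must come out exactly as in \eqref{sdp:GammaDual}.

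\textbf{Stationarity conditions.} Minimising over \(T_0,T_1\succeq0\) (strictly, over Hermitian \(T_i\), since positivity is implied) gives \(\operatorname{Tr}_AY_i\) equal to a fixed multiple of \(I_B\). After rescaling this becomes \(\operatorname{Tr}_AY_i=I_B\).

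Stationarity in \(X\) gives \(Z\) proportional to \(H_A\otimes I_B\) up to a term \(QG\). Equivalently, \(Z-(H_A\otimes I_B)\in\operatorname{ran}Q\) as a left factor, which is exactly the condition \(PZ=P(H_A\otimes I_B)\). Eliminating \(G\) is what produces this projected constraint.

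The remaining term is \(\operatorname{Re}\operatorname{Tr}H\), which is the dual objective. We will verify the bookkeeping of constants by checking weak duality directly. Take primal-feasible \((X,T_0,T_1)\) and dual-feasible \((Y_0,Y_1,Z,H)\). Then
\[
0\le \operatorname{Tr}\!\left[\begin{pmatrix}Y_0&-Z\\-Z^\dagger&Y_1\end{pmatrix}\begin{pmatrix}I\otimes T_0&X\\X^\dagger&I\otimes T_1\end{pmatrix}\right],
\]
because the sign-flipped multiplier block is congruent to the original one via \(\operatorname{diag}(I,-I)\) and is therefore still positive semidefinite. This yields
\[
2\operatorname{Re}\operatorname{Tr}(Z^\dagger X)\le \operatorname{Tr}(T_0+T_1).
\]
Using \(X=PX\), we get \(\operatorname{Tr}(Z^\dagger X)=\operatorname{Tr}((PZ)^\dagger X)=\operatorname{Tr}((H^\dagger\otimes I)X)=\operatorname{Tr}(H^\dagger)\), and hence \(\operatorname{Re}\operatorname{Tr}H\le\tfrac12\operatorname{Tr}(T_0+T_1)\). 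Carrying out this check first fixes the scalings in \eqref{sdp:GammaDual}, and the Lagrangian computation then confirms that the dual has exactly this form.

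\textbf{Strong duality.} We show that the dual is strictly feasible and that the primal value is finite, which gives zero gap and primal attainment. For dual strict feasibility, take \(Y_0=Y_1=I_{AB}/d_A\), where \(d_A=\dim A\), together with \(Z=0\) and \(H=0\). The block matrix is then positive definite, and the equality constraints are affine and satisfied. The primal is feasible: take any channel Choi operator \(J\) supported on \(P\) and set \(X=J\) and \(T_0=T_1=tI_B\) with \(t\) large. Its value is bounded below by \(0\). Slater's theorem applied to the dual then gives equality of optimal values with primal attainment.

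For completeness we also want dual attainment, which we obtain from primal strict feasibility. In the primal, the equality constraints are affine, and taking \(t\) large makes the block matrix strictly positive definite, because the diagonal blocks dominate \(X\). So both sides are strictly feasible, strong duality holds, and both optima are attained.

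The main obstacle is the careful treatment of complex, non-Hermitian variables. The real-part pairing must be set up so that the dual really has \(H_A\) complex with objective \(\operatorname{Re}\operatorname{Tr}H_A\), and the constraint \(PZ=P(H\otimes I)\) must be derived correctly rather than a stronger \(Z=H\otimes I\). To handle this cleanly, we plan to encode the problem as a standard real SDP over Hermitian matrices by embedding the complex operator \(X\) into the off-diagonal block.
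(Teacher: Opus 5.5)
Your proposal is correct and follows essentially the same route as the paper's appendix derivation. Both use a Lagrangian with the real Hilbert--Schmidt pairing and multipliers for the block constraint, \(\operatorname{Tr}_BX=I_A\), and \(QX=0\); both eliminate the last multiplier to obtain \(PZ=P(H_A\otimes I_B)\), and both apply Slater's theorem at the strictly feasible point \(X=J\), \(T_0=T_1=tI_B\). The differences are cosmetic: the paper fixes the sign and factor-of-two bookkeeping up front via the multiplier \(\Lambda=\tfrac12\begin{psmallmatrix}Y_0&-Z\\-Z^\dagger&Y_1\end{psmallmatrix}\) where you recover it from your correct weak-duality check, and your extra dual Slater point \(Y_0=Y_1=I_{AB}/d_A\), \(Z=H=0\) additionally gives primal attainment, which the paper does not need.
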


The derivation of the dual and the verification of strong duality are given in Appendix~\ref{app:GammaDual}.

\begin{theorem}\label{thm:GammaMult}
	For any Kraus operator spaces \(K_1,K_2\),
	\[
	\Gamma(K_1\otimes K_2)=\Gamma(K_1)\Gamma(K_2).
	\]
\end{theorem}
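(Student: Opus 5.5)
The proof splits into two inequalities, each obtained by tensoring optimal solutions: the primal program \eqref{sdp:GammaEq} gives submultiplicativity and the dual program \eqref{sdp:GammaDual} gives supermultiplicativity. Strong duality from Proposition~\ref{prop:GammaDual} then identifies the two values. Throughout we conjugate by the reordering unitary \(\Pi=\Pi_{12}\), using that \(P_{12}=\Pi(P_1\otimes P_2)\Pi^\dagger\) is the support projection of \(K_1\otimes K_2\).

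\emph{Submultiplicativity.} Let \((X_i,T_{0,i},T_{1,i})\) be feasible for \eqref{sdp:GammaEq} for \(K_i\). By Lemma~\ref{lem:GammaBalance} we may assume \(\operatorname{Tr}T_{0,i}=\operatorname{Tr}T_{1,i}=\tau_i\), where \(\tau_i\) is the objective value. Set
\[
X:=\Pi(X_1\otimes X_2)\Pi^\dagger,\qquad T_0:=T_{0,1}\otimes T_{0,2},\qquad T_1:=T_{1,1}\otimes T_{1,2}.
\]
The support constraint \(P_{12}X=X\) holds by the tensor structure. The partial-trace constraint gives \(\operatorname{Tr}_{B_1B_2}X=I_{A_1}\otimes I_{A_2}\). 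For the block constraint, observe that \(\begin{psmallmatrix}D_0&X\\X^\dagger&D_1\end{psmallmatrix}\succeq0\) is equivalent to \(X=D_0^{1/2}CD_1^{1/2}\) for some contraction \(C\), where \(D_j=I_A\otimes T_j\) (with generalized inverses on supports). Tensoring the two factorizations shows the product is again of this form: a tensor product of contractions is a contraction, and \(D_{j,1}\otimes D_{j,2}\) reorders to \(I_{A_1A_2}\otimes T_{j,1}\otimes T_{j,2}\). Alternatively, one can check that \(M_1\otimes M_2\succeq0\) and compress it to the \((00,01,10,11)\to(00,11)\) blocks. The resulting block matrix is exactly the required one. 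The objective equals \(\tfrac12(\tau_1\tau_2+\tau_1\tau_2)=\tau_1\tau_2\). Hence \(\Gamma(K_1\otimes K_2)\le\Gamma(K_1)\Gamma(K_2)\). The balancing step is essential here: without it the cross terms would not factor.

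\emph{Supermultiplicativity.} Let \((Y_{0,i},Y_{1,i},Z_i,H_i)\) be feasible for \eqref{sdp:GammaDual}. The block matrix \(\begin{psmallmatrix}Y_0&Z\\Z^\dagger&Y_1\end{psmallmatrix}\) for \(K_1\otimes K_2\) is built by the same compression of the tensor product of the two block matrices to the diagonal corners \((00,11)\). This gives \(Y_j=\Pi(Y_{j,1}\otimes Y_{j,2})\Pi^\dagger\succeq0\) and \(Z=\Pi(Z_1\otimes Z_2)\Pi^\dagger\), and the compressed matrix is positive semidefinite. The trace conditions \(\operatorname{Tr}_{A_1A_2}Y_j=I_{B_1B_2}\) factor. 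For the linear constraint,
\[
P_{12}Z=\Pi(P_1Z_1\otimes P_2Z_2)\Pi^\dagger=\Pi\bigl(P_1(H_1\otimes I)\otimes P_2(H_2\otimes I)\bigr)\Pi^\dagger=P_{12}(H_1\otimes H_2\otimes I_{B_1B_2}),
\]
so \(H:=H_1\otimes H_2\) is admissible.

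The main obstacle is the objective in this dual step: \(\operatorname{Re}\operatorname{Tr}(H_1\otimes H_2)=\operatorname{Re}(\operatorname{Tr}H_1\operatorname{Tr}H_2)\) is not the product of the real parts. I would fix this by a phase normalization. Multiplying \(Z_i\) and \(H_i\) by \(e^{i\theta}\) preserves feasibility, since the block matrix is conjugated by \(\operatorname{diag}(I,e^{-i\theta}I)\). We may therefore assume \(\operatorname{Tr}H_i\) is real and nonnegative. This only increases \(\operatorname{Re}\operatorname{Tr}H_i\) to \(|\operatorname{Tr}H_i|\), so we may take optimal solutions with \(\operatorname{Tr}H_i=\Gamma(K_i)\) real, using attainment via strong duality or passing to a limit otherwise. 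The product solution then has objective \(\Gamma(K_1)\Gamma(K_2)\). Combining this with strong duality gives \(\Gamma(K_1\otimes K_2)\ge\Gamma(K_1)\Gamma(K_2)\), which together with submultiplicativity proves the theorem.
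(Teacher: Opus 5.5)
Your proposal is correct and follows essentially the same route as the paper. For \(\Gamma(K_1\otimes K_2)\le\Gamma(K_1)\Gamma(K_2)\), you tensor balanced primal solutions and get the block constraint either from the principal submatrix of \(M_1\otimes M_2\) (as the paper does) or from your equally valid contraction factorization; for the reverse inequality, you tensor phase-normalized dual solutions with \(\operatorname{Tr}H_i\ge0\) and invoke strong duality, exactly as the paper does.
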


\begin{IEEEproof}
	For \(i=1,2\), let \((X_i,T_{0,i},T_{1,i})\) be balanced primal-feasible solutions for \(\Gamma(K_i)\) in \eqref{sdp:Gamma}, so that \(\operatorname{Tr} T_{0,i}=\operatorname{Tr} T_{1,i}\). Set
	\(X_{12}=\Pi_{12}(X_1\otimes X_2)\Pi_{12}^{\dagger},\; T_{0,12}=T_{0,1}\otimes T_{0,2},\; T_{1,12}=T_{1,1}\otimes T_{1,2}\).
	We first verify that this triple is primal feasible for \(\Gamma(K_1\otimes K_2)\) in \eqref{sdp:Gamma}.
	Let \(R_{ji}:=I_{A_i}\otimes T_{j,i}\) for \(i=1,2\) and \(j=0,1\), and set
	\(M_i:= \begin{psmallmatrix} R_{0i} & X_i\\ X_i^{\dagger} & R_{1i} \end{psmallmatrix} \succeq 0,\; i=1,2\).
	Then
	\begin{equation}
		M_1\otimes M_2\succeq0.
		\label{eq:Gammasubmulti1}
	\end{equation}
	Moreover, \(M_1\otimes M_2\) is unitarily equivalent to
	\[
	\begin{pmatrix}
		R_{01}\otimes R_{02}
		&
		R_{01}\otimes X_2
		&
		X_1\otimes R_{02}
		&
		X_1\otimes X_2
		\\
		R_{01}\otimes X_2^\dagger
		&
		R_{01}\otimes R_{12}
		&
		X_1\otimes X_2^\dagger
		&
		X_1\otimes R_{12}
		\\
		X_1^\dagger\otimes R_{02}
		&
		X_1^\dagger\otimes X_2
		&
		R_{11}\otimes R_{02}
		&
		R_{11}\otimes X_2
		\\
		X_1^\dagger\otimes X_2^\dagger
		&
		X_1^\dagger\otimes R_{12}
		&
		R_{11}\otimes X_2^\dagger
		&
		R_{11}\otimes R_{12}
	\end{pmatrix}
	\succeq0.
	\]
	Taking the corresponding principal submatrix gives
	\[
	\begin{pmatrix}
		R_{01}\otimes R_{02}
		&X_1\otimes X_2\\
		X_1^{\dagger}\otimes X_2^{\dagger}
		&R_{11}\otimes R_{12}
	\end{pmatrix}\succeq0.
	\]
	Consequently,
	\[
	\begin{pmatrix}
		\Pi&0\\0&\Pi
	\end{pmatrix}
	\begin{pmatrix}
		R_{01}\otimes R_{02}
		&X_1\otimes X_2\\
		X_1^{\dagger}\otimes X_2^{\dagger}
		&R_{11}\otimes R_{12}
	\end{pmatrix}
	\begin{pmatrix}
		\Pi&0\\0&\Pi
	\end{pmatrix}^\dagger
	=
	\begin{pmatrix}
		I_{A_1A_2}
		\otimes
		T_{0,12}
		&
		X_{12}
		\\
		X_{12}^\dagger
		&
		I_{A_1A_2}
		\otimes
		T_{1,12}
	\end{pmatrix}
	\succeq0.
	\]
	Thus the block positive semidefinite constraint holds. The support and partial-trace constraints follow by direct verification. The balance conditions \(\operatorname{Tr} T_{0,i}=\operatorname{Tr} T_{1,i}\) imply
	\[
	\frac12\operatorname{Tr}(T_{0,1}\otimes T_{0,2}+T_{1,1}\otimes T_{1,2})
	=\left[\frac12\operatorname{Tr}(T_{0,1}+T_{1,1})\right]
	\left[\frac12\operatorname{Tr}(T_{0,2}+T_{1,2})\right].
	\]
	Therefore, by Lemma~\ref{lem:GammaBalance}, taking the infimum over balanced feasible solutions \((X_i,T_{0,i},T_{1,i})\) gives
	\[
	\Gamma(K_1\otimes K_2)\leq\Gamma(K_1)\Gamma(K_2).
	\]
	
	For \(i=1,2\), let
	\((H_i,Y_{0,i},Y_{1,i},Z_i)\)
	be any dual-feasible solution for the dual program of \(\Gamma(K_i)\).
	Since the replacements \(H_i\mapsto e^{\mathrm{i}\phi}H_i\) and
	\(Z_i\mapsto e^{\mathrm{i}\phi}Z_i\)
	preserve dual feasibility, we may assume without loss of generality that
	\(\operatorname{Tr}H_i\) is real and nonnegative.
	Set
	\(H_{12}=H_1\otimes H_2,\; Z_{12}=\Pi(Z_1\otimes Z_2)\Pi^\dagger,\; Y_{0,12}=\Pi(Y_{0,1}\otimes Y_{0,2})\Pi^\dagger,\; Y_{1,12}=\Pi(Y_{1,1}\otimes Y_{1,2})\Pi^\dagger\).
	A direct calculation verifies that
	\(
	\operatorname{Tr}_{A_1A_2}Y_{j,12}
	=
	I_{B_1B_2},
	\ j=0,1,
	\)
	and
	\(
	P_{12}Z_{12}
	=
	P_{12}(H_{12}\otimes I_{B_1B_2})
	\).
	
	Because
	\(\begin{psmallmatrix} Y_{0,i} & Z_{i}\\ Z_{i}^{\dagger} & Y_{1,i} \end{psmallmatrix} \succeq0,\; i=1,2\),
	applying the same principal-submatrix argument as in \eqref{eq:Gammasubmulti1} to their tensor product yields
	\[
	\begin{pmatrix}
		Y_{0,1}\otimes Y_{0,2} & Z_{1}\otimes Z_{2}\\
		Z_{1}^{\dagger}\otimes Z_{2}^{\dagger} & Y_{1,1}\otimes Y_{1,2}
	\end{pmatrix}\succeq0.
	\]
	Therefore,
	\[
	\begin{pmatrix}
		Y_{0,12} & Z_{12}\\
		Z_{12}^{\dagger} & Y_{1,12}
	\end{pmatrix}
	=
	\begin{pmatrix}
		{\Pi}&0
		\\
		0&{\Pi}
	\end{pmatrix}
	\begin{pmatrix}
		Y_{0,1}\otimes Y_{0,2} & Z_{1}\otimes Z_{2}\\
		Z_{1}^{\dagger}\otimes Z_{2}^{\dagger} & Y_{1,1}\otimes Y_{1,2}
	\end{pmatrix}\begin{pmatrix}
		{\Pi}&0
		\\
		0&{\Pi}
	\end{pmatrix}^\dagger
	\succeq0.
	\]
	Hence the block positive semidefinite constraint is also satisfied. Thus \((H_{12},Y_{0,12},Y_{1,12},Z_{12})\) is dual feasible for the dual program of \(\Gamma(K_1\otimes K_2)\) in \eqref{sdp:GammaDual}, with objective value
	\(
	\operatorname{Tr}(H_1\otimes H_2)
	=
	\operatorname{Tr}(H_1)\operatorname{Tr}(H_2)
	\).
	Taking the supremum over feasible \((H_i,Y_{0,i},Y_{1,i},Z_i)\) and using strong duality gives
	\[
	\Gamma(K_1\otimes K_2)
	\geq
	\Gamma(K_1)\Gamma(K_2).
	\]
	Combining the two directions proves multiplicativity.
\end{IEEEproof}

\begin{corollary}\label{cor:ThetaGamma}
	For every Kraus operator space \(K\),
	\[
	\Theta_{\NS}(K)\leq\Gamma(K).
	\]
\end{corollary}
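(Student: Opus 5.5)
The plan is to combine three results already established: the activated regularization formula \eqref{eq:ThetaWidehatReg}, the one-shot comparison in Proposition~\ref{prop:WidehatGamma}, and the exact multiplicativity in Theorem~\ref{thm:GammaMult}. No new SDP construction should be needed.

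First, fix \(n\geq1\) and apply Proposition~\ref{prop:WidehatGamma} to the Kraus operator space \(K^{\otimes n}\subseteq\mathcal L(A^{\otimes n},B^{\otimes n})\). This is legitimate because the tensor product of Kraus operator spaces is again a Kraus operator space: it is the Kraus space of \(\mathcal N^{\otimes n}\). Its Choi support projection is the reordered tensor power of \(P\), as recorded in the Preliminaries. The proposition then gives \(\widehat\Upsilon(K^{\otimes n})\leq\Gamma(K^{\otimes n})\).

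Second, apply Theorem~\ref{thm:GammaMult} inductively with \(K_1=K^{\otimes(n-1)}\) and \(K_2=K\). This yields \(\Gamma(K^{\otimes n})=\Gamma(K)^n\). One point to check is that the reordering unitaries compose consistently across the induction. This is only a matter of identifying \((A\otimes B)^{\otimes n}\) with \(A^{\otimes n}\otimes B^{\otimes n}\), under which the optimal value of \eqref{sdp:Gamma} is unchanged.

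Combining the two steps gives \(\widehat\Upsilon(K^{\otimes n})^{1/n}\leq\Gamma(K)\) for every \(n\). Taking the supremum over \(n\) and using \eqref{eq:ThetaWidehatReg} gives \(\Theta_{\NS}(K)\leq\Gamma(K)\).

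None of these steps is a real obstacle, since the work was done in the multiplicativity proof. If one preferred not to rely on \eqref{eq:ThetaWidehatReg}, the same argument would go through with \(\Upsilon\) in place of \(\widehat\Upsilon\). The definition \eqref{DefofQNSCapacity} uses \(\Upsilon\), and \(\Upsilon(K)\leq\widehat\Upsilon(K)\) holds because \eqref{sdp:Upsilon} is a restriction of \eqref{sdp:WidehatUpsilonMax}: its equality constraint \(\operatorname{Tr}_AU=I_B\) is stronger than \(\operatorname{Tr}_AU\preceq I_B\). In that version, one takes the limit of \(n^{-1}\log_2\Upsilon(K^{\otimes n})\leq\log_2\Gamma(K)\).
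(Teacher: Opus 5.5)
Your proof is correct and follows the paper's argument: apply Proposition~\ref{prop:WidehatGamma} to \(K^{\otimes n}\), use Theorem~\ref{thm:GammaMult} to get \(\Gamma(K^{\otimes n})=\Gamma(K)^n\), then take \(n\)th roots and invoke \eqref{eq:ThetaWidehatReg}. Your alternative route through \(\Upsilon\leq\widehat\Upsilon\), which follows from the inclusion of feasible sets, is also valid. It avoids relying on the activation result.
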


\begin{IEEEproof}
	By Proposition~\ref{prop:WidehatGamma} and Theorem~\ref{thm:GammaMult}, for every positive integer \(n\),
	\[
	\widehat\Upsilon(K^{\otimes n})
	\leq
	\Gamma(K^{\otimes n})
	=
	\Gamma(K)^n.
	\]
	Taking the \(n\)th root and using \eqref{eq:ThetaWidehatReg} gives the result.
\end{IEEEproof}

\section{Relation to Exact Simulation}
\label{sec:simulationRelation}

\subsection{An Equivalent Primal Form of \texorpdfstring{$\Sigma^{-}$}{Sigma-minus} and the Bound \texorpdfstring{$\Gamma\leq\Sigma^{-}$}{Gamma <= Sigma-minus}}
For comparison with \(\Gamma\), we derive the following minimization
formulation of \(\Sigma^{-}(K)\), equivalent to
\eqref{sdp:SigmaMinusWD}.

\begin{proposition}\label{prop:SigmaMinusPrimal}
	For every Kraus operator space \(K\), \(\Sigma^{-}(K)\) equals
	\begin{equation}
		\begin{aligned}
			\min\quad &\operatorname{Tr}T_B\\
			\mathrm{s.t.}\quad
			&V_{AB}\succeq0,\quad V_{AB}=PV_{AB}P,\\
			&\operatorname{Tr}_BV_{AB}\succeq I_A,\\
			&V_{AB}\preceq I_A\otimes T_B,\quad T_B\succeq0.
		\end{aligned}
		\label{sdp:SigmaMinusPrimal}
	\end{equation}
	Its dual is
	\begin{equation}
		\begin{aligned}
			\max\quad &\operatorname{Tr}R_A\\
			\mathrm{s.t.}\quad
			&R_A\succeq0,\quad X_{AB}\succeq0,\\
			&\operatorname{Tr}_AX_{AB}\preceq I_B,\\
			&P(X_{AB}-R_A\otimes I_B)P\succeq0,
		\end{aligned}
		\label{sdp:SigmaMinusDual}
	\end{equation}
	and strong duality holds.
\end{proposition}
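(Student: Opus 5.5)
The plan is to start from the Wang--Duan maximization \eqref{sdp:SigmaMinusWD} and relate it to \eqref{sdp:SigmaMinusDual}. After that, we compute the Lagrange dual of \eqref{sdp:SigmaMinusDual}, show that it is \eqref{sdp:SigmaMinusPrimal}, and verify strong duality by Slater's condition.

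\emph{Step 1: \eqref{sdp:SigmaMinusWD} and \eqref{sdp:SigmaMinusDual} have the same value.} The two programs differ only in that \eqref{sdp:SigmaMinusWD} requires \(\operatorname{Tr}_AU=I_B\), while \eqref{sdp:SigmaMinusDual} requires \(\operatorname{Tr}_AX\preceq I_B\).
\begin{itemize}
\item Every feasible point of \eqref{sdp:SigmaMinusWD} is feasible for \eqref{sdp:SigmaMinusDual} with \(X=U\), \(R=S\). This gives one inequality.
\item For the converse, take \((R,X)\) feasible for \eqref{sdp:SigmaMinusDual}. Set \(D_B:=I_B-\operatorname{Tr}_AX\succeq0\), fix any state \(\sigma_A\), and put \(U:=X+\sigma_A\otimes D_B\). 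Then \(U\succeq0\) and \(\operatorname{Tr}_AU=I_B\).
\item The added term is positive semidefinite, so \(P(U-R\otimes I)P\succeq P(X-R\otimes I)P\succeq0\). Hence \((R,U)\) is feasible for \eqref{sdp:SigmaMinusWD} with the same objective \(\operatorname{Tr}R\).
\end{itemize}

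\emph{Step 2: Lagrangian duality.} We attach multipliers to the three constraints of \eqref{sdp:SigmaMinusDual}:
\begin{itemize}
\item \(T_B\succeq0\) to \(I_B-\operatorname{Tr}_AX\succeq0\);
\item \(V_{AB}\succeq0\), supported on \(P\), to \(P(X-R\otimes I)P\succeq0\);
\item implicitly, cone multipliers to \(R\succeq0\) and \(X\succeq0\).
\end{itemize}
The Lagrangian is
\[
\operatorname{Tr}R+\operatorname{Tr}\bigl[T(I-\operatorname{Tr}_AX)\bigr]+\operatorname{Tr}\bigl[V(X-R\otimes I)\bigr],
\]
where we take \(V=PVP\) so that the \(P\) sandwiches can be dropped. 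Regrouping gives
\[
\operatorname{Tr}T+\operatorname{Tr}\bigl[R(I_A-\operatorname{Tr}_BV)\bigr]+\operatorname{Tr}\bigl[X(V-I_A\otimes T)\bigr].
\]
The supremum over \(R\succeq0\) and \(X\succeq0\) is finite exactly when \(\operatorname{Tr}_BV\succeq I_A\) and \(V\preceq I_A\otimes T\), and then it equals \(\operatorname{Tr}T\). This is precisely \eqref{sdp:SigmaMinusPrimal}, which gives weak duality: value of \eqref{sdp:SigmaMinusDual} \(\leq\) value of \eqref{sdp:SigmaMinusPrimal}.

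\emph{Step 3: strong duality via Slater.} The main care is needed here, because \(V=PVP\) is an equality constraint, so interior points must be taken within the face of PSD operators supported on \(P\).
\begin{itemize}
\item \textbf{Primal \eqref{sdp:SigmaMinusPrimal}, strictly feasible.} Let \(J\) be a Choi operator with \(\supp J=P\) and \(\operatorname{Tr}_BJ=I_A\), as guaranteed in the Preliminaries. Take \(V=2J\), which is positive definite on \(\operatorname{ran}P\) and has \(\operatorname{Tr}_BV=2I_A\succ I_A\). Take \(T=c I_B\) with \(c\) large, so that \(V\prec I_A\otimes T\).
\item \textbf{Dual \eqref{sdp:SigmaMinusDual}, feasible with finite value.} The point \(R=0\), \(X=\epsilon I\) is feasible for small \(\epsilon\); it is also strictly feasible for every constraint except the singular inequality \(P(X-R\otimes I)P\succeq0\). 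The value is bounded above, since by weak duality it is at most the primal value.
\end{itemize}
Strict feasibility of the minimization, with finite optimum, gives a zero duality gap and attainment in the maximization. If needed, we can also exhibit a strictly feasible point of \eqref{sdp:SigmaMinusDual} in the relative-interior sense: take \(R=\delta I_A\) and \(X=\epsilon I\) with \(\delta\ll\epsilon\), so that \(P(X-R\otimes I)P\) is positive definite on \(\operatorname{ran}P\). This gives attainment on the minimization side as well.

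Combining the three steps, \(\Sigma^{-}(K)\) equals the optimal value of \eqref{sdp:SigmaMinusDual}, which by strong duality equals that of \eqref{sdp:SigmaMinusPrimal}.
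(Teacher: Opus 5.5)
Your proposal is correct and follows essentially the paper's argument: the same padding $U=X+\sigma_A\otimes D_B$ to identify the values of \eqref{sdp:SigmaMinusWD} and \eqref{sdp:SigmaMinusDual}, and the same relative Slater point $V=cJ$, $T_B=\lambda I_B$ on the face $V=PVP$ for strong duality. The differences are cosmetic: you dualize the maximization instead of the minimization, which is equivalent for these conic programs, and your extra relatively interior point $R=\delta I_A$, $X=\epsilon I_{AB}$ also shows that the minimum in \eqref{sdp:SigmaMinusPrimal} is attained, which the paper leaves implicit.
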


\begin{IEEEproof}
	The Lagrange dual of \eqref{sdp:SigmaMinusPrimal} is \eqref{sdp:SigmaMinusDual}; the derivation and the verification of the relative Slater condition are given in Appendix~\ref{app:SigmaMinusDual}.
	
	Every feasible solution \((S_A,U_{AB})\) of \eqref{sdp:SigmaMinusWD} is feasible for \eqref{sdp:SigmaMinusDual} after setting \(R_A=S_A\) and \(X_{AB}=U_{AB}\). Conversely, let \((R_A,X_{AB})\) be feasible for \eqref{sdp:SigmaMinusDual}, and define
	\(D_B:=I_B-\operatorname{Tr}_AX_{AB}\succeq0\).
	For any density operator \(\rho_A\), the operator
	\(X'_{AB}:=X_{AB}+\rho_A\otimes D_B\) is positive semidefinite, satisfies \(\operatorname{Tr}_AX'_{AB}=I_B\), and obeys
	\(P(X'_{AB}-R_A\otimes I_B)P\succeq0\). The objective value is unchanged. Thus the two maximization programs have the same optimal value, and the conclusion follows from strong duality for \eqref{sdp:SigmaMinusPrimal}.
\end{IEEEproof}

Every feasible solution of \eqref{sdp:Sigma} is feasible for \eqref{sdp:SigmaMinusPrimal}. Hence
\(\Sigma^{-}(K)\leq\Sigma(K)\)
for every Kraus operator space \(K\).

\begin{proposition}\label{prop:GammaSigmaMinus}
	For every Kraus operator space \(K\),
	\[
	\Gamma(K)\leq\Sigma^{-}(K).
	\]
\end{proposition}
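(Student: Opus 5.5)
The plan is to compare the two minimization programs directly: from any feasible point of the primal form \eqref{sdp:SigmaMinusPrimal} of \(\Sigma^{-}(K)\) given by Proposition~\ref{prop:SigmaMinusPrimal}, build a feasible point of \eqref{sdp:Gamma} with the same objective value. Taking infima then gives \(\Gamma(K)\leq\Sigma^{-}(K)\). No duality is needed beyond the equivalence already established in Proposition~\ref{prop:SigmaMinusPrimal}.

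Let \((V_{AB},T_B)\) be feasible for \eqref{sdp:SigmaMinusPrimal}, and set
\[
X_{AB}:=V_{AB},\qquad T_{0,B}=T_{1,B}:=T_B.
\]
The objective of \eqref{sdp:Gamma} is then \(\frac12\operatorname{Tr}(T_B+T_B)=\operatorname{Tr}T_B\). The constraints are checked as follows.

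\emph{Support constraint.} Since \(V=PVP\), we get \(PX=P\,PVP=PVP=V=X\).

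\emph{Partial-trace constraint.} This is \(\operatorname{Tr}_BV\succeq I_A\), which is assumed.

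\emph{Block constraint.} Write \(D:=I_A\otimes T_B\). We need
\[
\begin{pmatrix}D&V\\ V&D\end{pmatrix}\succeq0 ,
\]
using \(V=V^\dagger\). Conjugate by the unitary \(\frac{1}{\sqrt2}\begin{psmallmatrix}I&I\\ I&-I\end{psmallmatrix}\). This block-diagonalizes the matrix into \(\operatorname{diag}(D+V,\,D-V)\), so positivity is equivalent to \(D\pm V\succeq0\).
\begin{itemize}
\item \(D-V\succeq0\) is exactly the constraint \(V\preceq I_A\otimes T_B\).
\item \(D+V\succeq0\) holds because \(T_B\succeq0\) and \(V\succeq0\).
\end{itemize}

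This step is the only point needing any care. It is precisely where the positivity \(V\succeq0\) in \eqref{sdp:SigmaMinusPrimal} is used; the corresponding \(V\) in \(\widehat\Upsilon\) is merely Hermitian, which is why no such inequality is available there.

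Therefore every feasible value \(\operatorname{Tr}T_B\) of \eqref{sdp:SigmaMinusPrimal} is attained by a feasible point of \eqref{sdp:Gamma}. Taking the infimum gives \(\Gamma(K)\leq\Sigma^{-}(K)\).
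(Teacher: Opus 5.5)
Your proposal is correct and is essentially the paper's own proof: you take a feasible \((V,T)\) for \eqref{sdp:SigmaMinusPrimal}, set \(X=V\) and \(T_0=T_1=T\), and conjugate the block matrix by the same unitary \(\frac{1}{\sqrt2}\begin{psmallmatrix}I&I\\ I&-I\end{psmallmatrix}\), which reduces block positivity to \(I_A\otimes T\pm V\succeq0\). Taking the infimum then gives \(\Gamma(K)\leq\Sigma^{-}(K)\) exactly as in the paper.
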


\begin{IEEEproof}
	Let \((V,T)\) be any feasible solution of \eqref{sdp:SigmaMinusPrimal}, and write \(R:=I_A\otimes T\). Then
	\(V\succeq0\), \(PV=V\), \(\operatorname{Tr}_BV\succeq I_A\), and \(0\preceq V\preceq R\). Set \(X:=V\) and \(T_0:=T_1:=T\) in the primal SDP \eqref{sdp:Gamma} for \(\Gamma\). The support and partial-trace constraints are immediate.
	
	For the block positive semidefinite constraint, consider
	\(U:=\frac1{\sqrt2} \begin{psmallmatrix} I_{AB}&I_{AB}\\ I_{AB}&-I_{AB} \end{psmallmatrix}\).
	A direct calculation gives
	\[
	U^{\dagger}
	\begin{pmatrix}
		R&V\\V&R
	\end{pmatrix}
	U
	=
	\begin{pmatrix}
		R+V&0\\0&R-V
	\end{pmatrix}
	\succeq0.
	\]
	Thus \((V,T,T)\) is feasible for \(\Gamma(K)\), with objective value \(\operatorname{Tr}T\). Taking the infimum over feasible solutions of \eqref{sdp:SigmaMinusPrimal} proves the inequality.
\end{IEEEproof}

\subsection{Exact Tensor Multiplicativity of \texorpdfstring{$\Sigma^{-}$}{Sigma-minus}}
Wang and Duan proved that \(\Sigma^{-}\) is supermultiplicative \cite{WangDuan2016}. The reverse inequality also holds:
\begin{theorem}\label{thm:SigmaMinusMult}
	For any Kraus operator spaces \(K_1,K_2\),
	\[
	\Sigma^{-}(K_1\otimes K_2)=\Sigma^{-}(K_1)\Sigma^{-}(K_2).
	\]
\end{theorem}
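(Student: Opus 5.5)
The plan is to prove the two inequalities separately. Supermultiplicativity, \(\Sigma^{-}(K_1\otimes K_2)\geq\Sigma^{-}(K_1)\Sigma^{-}(K_2)\), is already known \cite{WangDuan2016}. For completeness I would also check it directly from the maximization form \eqref{sdp:SigmaMinusDual}. The new content is the reverse inequality, and for it I would use the minimization form \eqref{sdp:SigmaMinusPrimal} from Proposition~\ref{prop:SigmaMinusPrimal}. That form has only tensor-stable constraints: positivity, a two-sided support compression, a lower bound on a partial trace, and an operator upper bound. This is exactly the situation in which products of feasible points stay feasible.

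\emph{Submultiplicativity.} For \(i=1,2\), let \((V_i,T_i)\) be feasible for \eqref{sdp:SigmaMinusPrimal} for \(K_i\), and write \(R_i:=I_{A_i}\otimes T_i\). Set
\[
V_{12}:=\Pi(V_1\otimes V_2)\Pi^\dagger,\qquad T_{12}:=T_1\otimes T_2 .
\]
I would verify the constraints in the following order.

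\begin{itemize}
\item Positivity: \(V_{12}\succeq0\) is immediate.
\item Support: \(P_{12}V_{12}P_{12}=\Pi(P_1V_1P_1\otimes P_2V_2P_2)\Pi^\dagger=V_{12}\), using \(P_{12}=\Pi(P_1\otimes P_2)\Pi^\dagger\).
\item Partial trace: \(\operatorname{Tr}_{B_1B_2}V_{12}=H_1\otimes H_2\) with \(H_i:=\operatorname{Tr}_{B_i}V_i\succeq I\). Then
\[
H_1\otimes H_2-I\otimes I=(H_1-I)\otimes H_2+I\otimes(H_2-I)\succeq0 .
\]
\item Upper bound: from \(0\preceq V_i\preceq R_i\),
\[
R_1\otimes R_2-V_1\otimes V_2=(R_1-V_1)\otimes R_2+V_1\otimes(R_2-V_2)\succeq0 .
\]
Conjugating by \(\Pi\) gives \(V_{12}\preceq I_{A_1A_2}\otimes T_{12}\).
\end{itemize}

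The objective value is \(\operatorname{Tr}(T_1\otimes T_2)=\operatorname{Tr}T_1\operatorname{Tr}T_2\). Taking infima gives \(\Sigma^{-}(K_1\otimes K_2)\leq\Sigma^{-}(K_1)\Sigma^{-}(K_2)\).

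\emph{Supermultiplicativity.} Take feasible \((R_i,X_i)\) for \eqref{sdp:SigmaMinusDual} and form the products \(R_1\otimes R_2\) and \(\Pi(X_1\otimes X_2)\Pi^\dagger\). Positivity is clear, and \(\operatorname{Tr}_{A_1A_2}\) of the product is \(\preceq I\otimes I\) by the same two-term decomposition as above. For the last constraint, write \(\widetilde R_i:=P_i(R_i\otimes I_{B_i})P_i\). We have \(P_iX_iP_i\succeq\widetilde R_i\succeq0\), and the same two-term argument shows that the tensor product of two such ordered pairs of positive operators is ordered. Since \(P_{12}\) is the (reordered) tensor product of the \(P_i\), the compressions factor, which gives
\[
P_{12}\bigl(X_{12}-R_{12}\otimes I_{B_1B_2}\bigr)P_{12}\succeq0 .
\]
The objective value is \(\operatorname{Tr}R_1\operatorname{Tr}R_2\). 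Strong duality from Proposition~\ref{prop:SigmaMinusPrimal} then closes the argument.

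The main obstacle is in fact already handled: it is having a minimization form with only tensor-stable constraints. The original characterization \eqref{sdp:SigmaMinusWD} is a maximization, so it only yields supermultiplicativity. Proposition~\ref{prop:SigmaMinusPrimal}, with its strong duality, is what makes the upper bound accessible. The only care needed is the bookkeeping with the reordering unitary \(\Pi\) and checking that each ordering relation survives tensoring. Each such check reduces to the telescoping identity \(a_1\otimes a_2-b_1\otimes b_2=(a_1-b_1)\otimes a_2+b_1\otimes(a_2-b_2)\), applied with all four operators positive semidefinite.
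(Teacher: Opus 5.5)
Your proposal is correct and follows essentially the same route as the paper. The paper also multiplies primal-feasible pairs of the minimization form \eqref{sdp:SigmaMinusPrimal} for the upper bound and multiplies dual-feasible pairs of \eqref{sdp:SigmaMinusDual} for the lower bound, using the same two-term telescoping decomposition for the compressed constraint, and then closes the argument with the strong duality of Proposition~\ref{prop:SigmaMinusPrimal}.
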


\begin{IEEEproof}
	For \(i=1,2\), let \((V_i,T_i)\) be primal feasible for \(\Sigma^{-}(K_i)\) in \eqref{sdp:SigmaMinusPrimal}, and define
	\(V_{12}:=\Pi_{12}(V_1\otimes V_2)\Pi_{12}^{\dagger}\) and \(T_{12}:=T_1\otimes T_2\).
	Then \(V_{12}\succeq0\), \(V_{12}=P_{12}V_{12}P_{12}\), and
	\(\operatorname{Tr}_{B_1B_2}V_{12} =(\operatorname{Tr}_{B_1}V_1)\otimes(\operatorname{Tr}_{B_2}V_2) \succeq I_{A_1A_2}\).
	Moreover, \(V_{12}\preceq I_{A_1A_2}\otimes T_{12}\). Thus \((V_{12},T_{12})\) is primal feasible for \(\Sigma^{-}(K_1\otimes K_2)\) in \eqref{sdp:SigmaMinusPrimal}, with objective value \(\operatorname{Tr} T_{12}=\operatorname{Tr} T_1\operatorname{Tr} T_2\). Taking the infimum over primal-feasible \((V_i,T_i)\) in \eqref{sdp:SigmaMinusPrimal} gives
	\[
	\Sigma^{-}(K_1\otimes K_2)\leq\Sigma^{-}(K_1)\Sigma^{-}(K_2).
	\]
	
	For the reverse inequality, let \((R_i,X_i)\) be feasible for \eqref{sdp:SigmaMinusDual}.
	
	Define \(R_{12}:=R_1\otimes R_2\) and \(X_{12}:=\Pi_{12}(X_1\otimes X_2)\Pi_{12}^{\dagger}\).
	The pair \((R_{12},X_{12})\) immediately satisfies the positivity and partial-trace constraints in the dual SDP \eqref{sdp:SigmaMinusDual} for \(\Sigma^{-}(K_1\otimes K_2)\). Let \(D_i:=X_i-R_i\otimes I_{B_i}\). Before reordering the tensor factors, we have
	\begin{align*}
		&(P_1\otimes P_2)
		\left[X_1\otimes X_2-(R_1\otimes I_{B_1})\otimes(R_2\otimes I_{B_2})\right]
		(P_1\otimes P_2)\\
		&=(P_1D_1P_1)\otimes(P_2X_2P_2)+
		\bigl(P_1(R_1\otimes I_{B_1})P_1\bigr)\otimes(P_2D_2P_2)
		\succeq0.
	\end{align*}
	Conjugating by \(\Pi_{12}\) gives
	\begin{align*}
		&\Pi_{12}(P_1\otimes P_2)
		\left[X_1\otimes X_2-(R_1\otimes I_{B_1})\otimes(R_2\otimes I_{B_2})\right]
		(P_1\otimes P_2)\Pi_{12}^\dagger
		\\&=P_{12}
		\left[X_{12}-(R_{12}\otimes I_{B_1B_2})\right]
		P_{12}\succeq0.
	\end{align*}
	Thus \((R_{12},X_{12})\) also satisfies the last constraint in \eqref{sdp:SigmaMinusDual}. It is therefore dual feasible, with objective value \(\operatorname{Tr} R_{12}=(\operatorname{Tr} R_1)(\operatorname{Tr} R_2)\). Taking the supremum over feasible \((R_i,X_i)\) for the dual SDPs \eqref{sdp:SigmaMinusDual} of \(\Sigma^{-}(K_i)\), and using strong duality, yields the reverse inequality.
\end{IEEEproof}

\subsection{The Bound \texorpdfstring{$\Sigma^{-}\leq\Gamma^2$}{Sigma-minus <= Gamma squared}}

\begin{lemma}\label{lem:WeightedPartialTraceCS}
	Let \(T_B\succeq0\), and let \(S_B\) be the orthogonal projection onto \(\operatorname{ran}T_B\). If
	\(X_{AB}=X_{AB}(I_A\otimes S_B)\), then
	\begin{equation}
		(\operatorname{Tr}_BX)(\operatorname{Tr}_BX)^{\dagger}
		\preceq
		(\operatorname{Tr}T_B)\,
		\operatorname{Tr}_B\!\left[X(I_A\otimes T_B^+)X^{\dagger}\right],
		\label{eq:WeightedPartialTraceCS}
	\end{equation}
	where \(T_B^+\) denotes the Moore--Penrose pseudoinverse.
\end{lemma}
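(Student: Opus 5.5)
We will prove \eqref{eq:WeightedPartialTraceCS} as an operator Cauchy--Schwarz inequality, by writing the partial trace as a product of a "row" operator and a "column" operator. Fix an orthonormal basis \(\{\lvert j\rangle\}\) of \(B\), and note that \(\operatorname{Tr}_B Y=\sum_j(I_A\otimes\langle j\rvert)Y(I_A\otimes\lvert j\rangle)\). Write \(\lvert\Phi\rangle_{BB'}=\sum_j\lvert j\rangle_B\lvert j\rangle_{B'}\). Then \(\operatorname{Tr}_BX=(I_A\otimes\langle\Phi\rvert)(X\otimes I_{B'})(I_A\otimes\lvert\Phi\rangle)\), where \(\lvert\Phi\rangle\) is viewed as a map \(A\to A\otimes B\otimes B'\) after tensoring with \(I_A\).

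The first step is to factor \(X\). The hypothesis \(X=X(I_A\otimes S_B)\), together with \(S_B=T_B^{1/2}(T_B^+)^{1/2}\), gives
\[
X=\bigl[X(I_A\otimes (T_B^{+})^{1/2})\bigr]\bigl(I_A\otimes T_B^{1/2}\bigr).
\]
Set \(L:=(I_A\otimes\langle\Phi\rvert)\bigl(X(I_A\otimes(T_B^+)^{1/2})\otimes I_{B'}\bigr)\), which maps \(A\otimes B\otimes B'\) to \(A\). Also set \(C:=(I_A\otimes T_B^{1/2}\otimes I_{B'})(I_A\otimes\lvert\Phi\rangle)\), which maps \(A\) to \(A\otimes B\otimes B'\). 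With these definitions, \(\operatorname{Tr}_BX=LC\).

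The second step is the standard inequality \(LCC^\dagger L^\dagger\preceq\lVert C\rVert_\infty^2\,LL^\dagger\). This follows from \(CC^\dagger\preceq\lVert C^\dagger C\rVert_\infty I\). To evaluate \(C^\dagger C\), compute
\[
C^\dagger C=I_A\otimes\langle\Phi\rvert(T_B\otimes I_{B'})\lvert\Phi\rangle=(\operatorname{Tr}T_B)I_A,
\]
so \(\lVert C\rVert_\infty^2=\operatorname{Tr}T_B\).

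The third step evaluates \(LL^\dagger\). Using \(\lvert\Phi\rangle\langle\Phi\rvert\preceq\) nothing more than itself, we expand directly. Setting \(W:=X(I_A\otimes(T_B^+)^{1/2})\), we get \(LL^\dagger=(I_A\otimes\langle\Phi\rvert)\bigl(WW^\dagger\otimes I_{B'}\bigr)(I_A\otimes\lvert\Phi\rangle)\). This is not the partial trace of \(WW^\dagger\); the contraction structure is wrong, since \(\Phi\) pairs \(B\) with \(B'\) on both sides.

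This step is the main obstacle, so we adjust the factorization. We replace the maximally entangled vector with an isometry-free sum. Write \(\operatorname{Tr}_BX=\sum_j L_jC_j\), where \(L_j:=(I_A\otimes\langle j\rvert)W\) maps \(A\otimes B\to A\) and \(C_j:=(I_A\otimes T_B^{1/2}\lvert j\rangle)\) maps \(A\to A\otimes B\). Stack these into a row \(L=(L_1,\dots,L_d)\) and a column \(C=(C_1;\dots;C_d)\). Then
\[
LL^\dagger=\sum_j(I_A\otimes\langle j\rvert)WW^\dagger(I_A\otimes\lvert j\rangle)=\operatorname{Tr}_B[X(I_A\otimes T_B^+)X^\dagger],
\]
and
\[
C^\dagger C=\sum_j I_A\langle j\rvert T_B\lvert j\rangle=(\operatorname{Tr}T_B)I_A.
\]
Applying \(LCC^\dagger L^\dagger\preceq\lVert C^\dagger C\rVert_\infty LL^\dagger\) then yields \eqref{eq:WeightedPartialTraceCS} directly. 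The only remaining points to verify are the identity \(WT_B^{1/2}\)-factorization (which uses \(S_B\) and the pseudoinverse property) and the summation identities above, both of which are routine.
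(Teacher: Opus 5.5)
Your final argument is correct, but it takes a different route from the paper's. The paper diagonalizes \(T_B=\sum_jt_j\lvert j\rangle\langle j\rvert\) and expands \(X=\sum_{i,j}X_{ij}\otimes\lvert i\rangle\langle j\rvert\). It uses the support hypothesis to get \(X_{jj}=0\) whenever \(t_j=0\). It then lower-bounds \(\operatorname{Tr}_B[X(I_A\otimes T_B^+)X^\dagger]=\sum_{i,j:t_j>0}t_j^{-1}X_{ij}X_{ij}^\dagger\) by its diagonal part \(\sum_{j:t_j>0}t_j^{-1}X_{jj}X_{jj}^\dagger\), and finishes with the scalar Cauchy--Schwarz inequality applied to \((\sum_jX_{jj})^\dagger\lvert\xi\rangle\) vector by vector. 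You instead use the support hypothesis to factor \(X=W(I_A\otimes T_B^{1/2})\) with \(W=X(I_A\otimes(T_B^+)^{1/2})\). You write \(\operatorname{Tr}_BX=LC\) in an arbitrary orthonormal basis of \(B\), and invoke the operator Cauchy--Schwarz bound \(LCC^\dagger L^\dagger\preceq\lVert C^\dagger C\rVert_\infty LL^\dagger\), with \(C^\dagger C=(\operatorname{Tr}T_B)I_A\) and \(LL^\dagger=\operatorname{Tr}_B(WW^\dagger)=\operatorname{Tr}_B[X(I_A\otimes T_B^+)X^\dagger]\). Your version is basis-free and never discards terms by hand; the loss is absorbed in \(CC^\dagger\preceq\lVert C\rVert_\infty^2I\). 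It also makes transparent why the constant is exactly \(\operatorname{Tr}T_B\). The paper's version is more elementary and shows explicitly which contributions are thrown away, namely the off-diagonal blocks \(X_{ij}\) with \(i\neq j\).

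One correction to your write-up: your first factorization with \(\lvert\Phi\rangle_{BB'}\) did not actually fail. The identity \(\operatorname{Tr}_BY=(I_A\otimes\langle\Phi\rvert)(Y\otimes I_{B'})(I_A\otimes\lvert\Phi\rangle)\) holds for every operator \(Y\) on \(A\otimes B\). Taking \(Y=WW^\dagger\) therefore gives \(LL^\dagger=\operatorname{Tr}_B[X(I_A\otimes T_B^+)X^\dagger]\) directly. Your stacked row/column version is the same computation with \(\lvert\Phi\rangle\) expanded in the basis \(\{\lvert j\rangle\}\). Either presentation is fine, but remove the claimed obstacle and the garbled sentence about \(\lvert\Phi\rangle\langle\Phi\rvert\).
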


\begin{IEEEproof}
	Write \(T_B=\sum_j t_j|j\rangle\langle j|\) in an eigenbasis, where \(t_j\geq0\), and write
	\(X=\sum_{i,j}X_{ij}\otimes|i\rangle\langle j|\), with \(X_{ij}\in\mathcal L(A)\). The assumption \(X=X(I_A\otimes S_B)\) implies \(X_{ij}=0\) whenever \(t_j=0\), and in particular \(X_{jj}=0\) for such \(j\).
	
	We have \(\operatorname{Tr}_BX=\sum_jX_{jj}\), while
	\begin{equation}
		\operatorname{Tr}_B\!\left[X(I_A\otimes T_B^+)X^{\dagger}\right]
		=
		\sum_{i,j:t_j>0}t_j^{-1}X_{ij}X_{ij}^{\dagger}
		\succeq
		\sum_{j:t_j>0}t_j^{-1}X_{jj}X_{jj}^{\dagger}.
		\label{eq:WeightedPartialTraceStep}
	\end{equation}
	For every \(|\xi\rangle\in A\), the ordinary Cauchy--Schwarz inequality gives
	\begin{align*}
		\left\|\left(\sum_jX_{jj}\right)^{\dagger}|\xi\rangle\right\|^2
		=
		\left\|\sum_{j:t_j>0}t_j^{1/2}\,
		\bigl(t_j^{-1/2}X_{jj}^{\dagger}|\xi\rangle\bigr)\right\|^2
		\leq
		\left(\sum_jt_j\right)
		\sum_{j:t_j>0}t_j^{-1}\|X_{jj}^{\dagger}|\xi\rangle\|^2.
	\end{align*}
	Equivalently,
	\(\left(\sum_jX_{jj}\right) \left(\sum_jX_{jj}\right)^{\dagger} \preceq (\operatorname{Tr}T_B) \sum_{j:t_j>0}t_j^{-1}X_{jj}X_{jj}^{\dagger}\).
	Combining this with \eqref{eq:WeightedPartialTraceStep} proves \eqref{eq:WeightedPartialTraceCS}.
\end{IEEEproof}

\begin{theorem}\label{thm:SigmaMinusGammaSquare}
	For every Kraus operator space \(K\),
	\[
	\Sigma^{-}(K)\leq\Gamma(K)^2.
	\]
\end{theorem}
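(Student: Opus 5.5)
The plan is to take an arbitrary feasible solution of the equality-constrained form \eqref{sdp:GammaEq} of \(\Gamma(K)\) and turn it, via a Schur-complement construction, into a feasible solution of the minimization form \eqref{sdp:SigmaMinusPrimal} of \(\Sigma^{-}(K)\) from Proposition~\ref{prop:SigmaMinusPrimal}. The objective value of the new solution should be at most the square of the \(\Gamma\)-objective. By Lemma~\ref{lem:GammaNormalize} we may assume \(\operatorname{Tr}_BX=I_A\), and by Lemma~\ref{lem:GammaBalance} we may assume \(\operatorname{Tr}T_0=\operatorname{Tr}T_1\). In any case AM--GM gives \((\operatorname{Tr}T_0)(\operatorname{Tr}T_1)\le\bigl[\tfrac12\operatorname{Tr}(T_0+T_1)\bigr]^2\), so it suffices to reach the value \((\operatorname{Tr}T_0)(\operatorname{Tr}T_1)\).

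\textbf{Step 1 (range condition).} Let \((X,T_0,T_1)\) be feasible, and let \(S_B\) be the projection onto \(\operatorname{ran}T_1\). Positivity of the block matrix forces \(\operatorname{ran}X^\dagger\subseteq\operatorname{ran}(I_A\otimes T_1)\), which means \(X=X(I_A\otimes S_B)\). This is the standard range inclusion for positive \(2\times2\) block operators: any vector in the kernel of the lower-right block must be annihilated by the off-diagonal block.

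\textbf{Step 2 (generalized Schur complement).} Using Step 1, the generalized Schur complement gives
\(X(I_A\otimes T_1^+)X^\dagger\preceq I_A\otimes T_0\).
Define
\[
V:=(\operatorname{Tr}T_1)\,X(I_A\otimes T_1^+)X^\dagger,\qquad T:=(\operatorname{Tr}T_1)\,T_0 .
\]
Then \(V\succeq0\) and \(V\preceq I_A\otimes T\). Since \(PX=X\), we also get \(PVP=V\).

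\textbf{Step 3 (partial trace).} Lemma~\ref{lem:WeightedPartialTraceCS} applies with \(T_B=T_1\), thanks to the range condition of Step 1. Together with \(\operatorname{Tr}_BX=I_A\), it yields
\[
I_A=(\operatorname{Tr}_BX)(\operatorname{Tr}_BX)^\dagger\preceq(\operatorname{Tr}T_1)\operatorname{Tr}_B\!\left[X(I_A\otimes T_1^+)X^\dagger\right]=\operatorname{Tr}_BV .
\]
Hence \((V,T)\) is feasible for \eqref{sdp:SigmaMinusPrimal}, with objective value \(\operatorname{Tr}T=(\operatorname{Tr}T_0)(\operatorname{Tr}T_1)\le\bigl[\tfrac12\operatorname{Tr}(T_0+T_1)\bigr]^2\).

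\textbf{Step 4 (conclusion).} Taking the infimum over feasible \((X,T_0,T_1)\) gives \(\Sigma^{-}(K)\le\Gamma(K)^2\).

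The main technical point is Step 1, the pseudoinverse Schur complement when \(T_1\) is singular. It is the range condition that justifies both the Schur-complement inequality and the application of Lemma~\ref{lem:WeightedPartialTraceCS}. If handling the singular case directly becomes awkward, an alternative is to replace \(T_1\) by \(T_1+\epsilon I_B\), which remains feasible, and let \(\epsilon\to0\) at the end.
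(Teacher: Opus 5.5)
Your proposal is correct and matches the paper's proof step for step. The steps are the same: the range condition, the generalized Schur complement, the rescaled operator $(\operatorname{Tr}T_1)\,X(I_A\otimes T_1^+)X^\dagger$ with $T=(\operatorname{Tr}T_1)T_0$, and Lemma~\ref{lem:WeightedPartialTraceCS} to obtain $\operatorname{Tr}_BV\succeq I_A$. The only cosmetic difference is that you bound $(\operatorname{Tr}T_0)(\operatorname{Tr}T_1)$ by AM--GM instead of restricting to balanced solutions via Lemma~\ref{lem:GammaBalance}, so you do not need the balancing step.
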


\begin{IEEEproof}
	Let \((X,T_0,T_1)\) be any balanced feasible solution of \eqref{sdp:GammaEq} for \(\Gamma(K)\), and write $t:=\operatorname{Tr}T_0=\operatorname{Tr}T_1.$
	Let \(S_1\) denote the support projection of \(T_1\). Applying the generalized Schur complement to the block positive semidefinite constraint gives
	\(\operatorname{ran}X^{\dagger} \subseteq A\otimes\operatorname{ran}T_1,\; X(I_A\otimes T_1^+)X^{\dagger} \preceq I_A\otimes T_0\).
	The first condition is precisely the range condition required in the singular Schur-complement case and is equivalent to
	\(X=X(I_A\otimes S_1)\). Define
	\(F:=X(I_A\otimes T_1^+)X^{\dagger}\).
	Then \(F\succeq0\) and \(F\preceq I_A\otimes T_0\). Since \(PX=X\), we also have \(PF=FP=F\), and hence \(F=PFP\).
	
	The normalization condition gives \(\operatorname{Tr}_BX=I_A\). Applying Lemma~\ref{lem:WeightedPartialTraceCS} with \(T_1\) yields
	\[
	I_A
	=(\operatorname{Tr}_BX)(\operatorname{Tr}_BX)^{\dagger}
	\preceq
	t\,\operatorname{Tr}_BF.
	\]
	Set \(V:=tF\) and \(T':=tT_0\). Then \((V,T')\) is feasible for \eqref{sdp:SigmaMinusPrimal}, and
	\(\operatorname{Tr}T' =(\operatorname{Tr}T_1)(\operatorname{Tr}T_0) =t^2\).
	Thus every such feasible solution produces a feasible solution for \(\Sigma^{-}\) whose objective value is the square of the \(\Gamma\) objective. Taking infima and using Lemma~\ref{lem:GammaBalance} proves the claim.
\end{IEEEproof}

\subsection{General Bounds}

\begin{theorem}\label{thm:MainHierarchy}
	For every Kraus operator space \(K\),
	\[
	\widehat\Upsilon(K)
	\leq
	\Theta_{\NS}(K)
	\leq
	\Gamma(K)
	\leq
	\Sigma^{-}(K)
	\leq
	\Gamma(K)^2.
	\]
	Moreover, \(\Sigma^{-}(K)\leq\Sigma(K)\), and both \(\Gamma\) and \(\Sigma^{-}\) are exactly multiplicative under tensor products.
\end{theorem}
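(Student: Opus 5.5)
This theorem is a summary result, so the plan is to chain together the results already established rather than introduce any new argument. We take the inequalities from left to right.

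The first inequality, \(\widehat\Upsilon(K)\leq\Theta_{\NS}(K)\), follows from \eqref{eq:ThetaWidehatReg}. There \(\Theta_{\NS}(K)=\sup_{n\geq1}\widehat\Upsilon(K^{\otimes n})^{1/n}\), and taking \(n=1\) gives the claim. The second inequality, \(\Theta_{\NS}(K)\leq\Gamma(K)\), is exactly Corollary~\ref{cor:ThetaGamma}. That corollary was itself derived from Proposition~\ref{prop:WidehatGamma} combined with the multiplicativity in Theorem~\ref{thm:GammaMult}. The third inequality, \(\Gamma(K)\leq\Sigma^{-}(K)\), is Proposition~\ref{prop:GammaSigmaMinus}. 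That proposition rests on the minimization form \eqref{sdp:SigmaMinusPrimal} of \(\Sigma^{-}\) from Proposition~\ref{prop:SigmaMinusPrimal}. The last inequality, \(\Sigma^{-}(K)\leq\Gamma(K)^2\), is Theorem~\ref{thm:SigmaMinusGammaSquare}.

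For the additional claims, \(\Sigma^{-}(K)\leq\Sigma(K)\) was noted just after Proposition~\ref{prop:SigmaMinusPrimal}. Given a feasible \((F,T)\) for \eqref{sdp:Sigma}, we set \(V:=F\). This choice satisfies \(V\succeq0\), \(V=PVP\), \(\operatorname{Tr}_BV=I_A\succeq I_A\), and \(V\preceq I_A\otimes T\). The objective \(\operatorname{Tr}T\) is unchanged, so the minimum in \eqref{sdp:SigmaMinusPrimal} is at most \(\Sigma(K)\). Exact multiplicativity of \(\Gamma\) and of \(\Sigma^{-}\) is the content of Theorems~\ref{thm:GammaMult} and~\ref{thm:SigmaMinusMult}, respectively.

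There is no substantive obstacle in this proof. The only point to check is that the strong duality statements invoked by the cited results (Propositions~\ref{prop:GammaDual} and~\ref{prop:SigmaMinusPrimal}) are in force, and these are proved in the appendices. We take them as given, which completes the hierarchy.
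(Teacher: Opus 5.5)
Your proposal is correct and follows the paper's proof essentially verbatim. Like the paper, you get the first inequality from \eqref{eq:ThetaWidehatReg} (your $n=1$ instance of the supremum), and the next three from Corollary~\ref{cor:ThetaGamma}, Proposition~\ref{prop:GammaSigmaMinus}, and Theorem~\ref{thm:SigmaMinusGammaSquare}; you get $\Sigma^{-}\leq\Sigma$ from the inclusion of the feasible set of \eqref{sdp:Sigma} into that of \eqref{sdp:SigmaMinusPrimal}, and exact multiplicativity from Theorems~\ref{thm:GammaMult} and~\ref{thm:SigmaMinusMult}.
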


\begin{IEEEproof}
	The inequality \(\widehat\Upsilon(K)\leq\Theta_{\NS}(K)\) follows from \eqref{eq:ThetaWidehatReg}. The next three inequalities are Corollary~\ref{cor:ThetaGamma}, Proposition~\ref{prop:GammaSigmaMinus}, and Theorem~\ref{thm:SigmaMinusGammaSquare}, respectively. The comparison \(\Sigma^{-}\leq\Sigma\) follows from the inclusion of the feasible sets of \eqref{sdp:Sigma} and \eqref{sdp:SigmaMinusPrimal}. Exact multiplicativity follows from Theorems~\ref{thm:GammaMult} and~\ref{thm:SigmaMinusMult}.
\end{IEEEproof}

\subsection{Separation Between \texorpdfstring{$\log_2\Sigma^{-}$}{log2 Sigma-minus} and the Asymptotic Simulation Cost}
\label{subsec:SimulationSeparation}

Wang and Duan asked whether the lower bound
\(\log_2\Sigma^{-}(K)\leq S_{0,\NS}(K)\) is always an equality
\cite[Sec.~III]{WangDuan2016}. The following qubit family gives
a negative answer.

\begin{proposition}\label{prop:QubitSimulationSeparation}
	Let \(0<\eta\leq1/4\), and set
	\[
	E_0=
	\begin{pmatrix}
		\sqrt{1-\eta}&0\\
		0&\sqrt{1-\eta-\eta^2}
	\end{pmatrix},
	\quad
	E_1=
	\begin{pmatrix}
		0&\sqrt{\eta+\eta^2}\\
		\sqrt{\eta}&0
	\end{pmatrix}.
	\]
	Then \(K_\eta:=\spann\{E_0,E_1\}\) is a Kraus operator
	space from a qubit to a qubit, and
	\begin{equation}
		S_{0,\NS}(K_\eta)
		=2\log_2\!\left(\sqrt{1-\eta}
		+\sqrt{1-\eta-\eta^2}\right)
		>\log_2\Sigma^{-}(K_\eta).
		\label{eq:EpsilonSimulationGap}
	\end{equation}
	Moreover, as \(\eta\downarrow0\),
	\begin{equation}
		\widehat\Upsilon(K_\eta),\
		\Theta_{\NS}(K_\eta),\
		\Gamma(K_\eta),\
		\Sigma^{-}(K_\eta)\to2,
		\quad
		\Sigma(K_\eta)\to4.
		\label{eq:EpsilonParameterLimits}
	\end{equation}
	Consequently,
	\(S_{0,\NS}(K_\eta)
	-\log_2\Sigma^{-}(K_\eta)\to1\).
\end{proposition}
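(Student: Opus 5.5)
The plan has four steps: show that \(K_\eta\) carries a unique channel; compute the exact simulation cost \(S_{0,\NS}(K_\eta)\) by reducing it to a fixed-channel quantity; exhibit an explicit feasible point that bounds \(\Sigma^{-}(K_\eta)\) strictly below; and sandwich the remaining parameters using Theorem~\ref{thm:MainHierarchy}.

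\emph{Uniqueness of the channel.} First, \(E_0^\dagger E_0+E_1^\dagger E_1=I\) holds by direct computation, since \(E_1^\dagger E_1=\operatorname{diag}(\eta,\eta+\eta^2)\). Hence \(K_\eta=K(\mathcal N_\eta)\) for the channel \(\mathcal N_\eta\) with Kraus operators \(E_0,E_1\). Next I will show that the four operators \(E_0^\dagger E_0,\ E_1^\dagger E_1,\ E_0^\dagger E_1,\ E_1^\dagger E_0\) form a basis of \(\mathcal L(\mathbb C^2)\):
\begin{itemize}
\item the first two are diagonal, and they are independent because their difference involves \(\eta^2\neq0\);
\item the last two are off-diagonal, and they are independent because the ratios \(\sqrt{(1-\eta)(\eta+\eta^2)}\) and \(\sqrt{\eta(1-\eta-\eta^2)}\) of their entries differ.
\end{itemize}
Tensor products of a basis form a basis. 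Therefore, for every \(n\), the operators \(E_{\mathbf i}^\dagger E_{\mathbf j}\) with multi-indices \(\mathbf i,\mathbf j\) are linearly independent.

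Now take any Choi operator supported on \(P^{\otimes n}\), written as \(J=\sum c_{\mathbf i\mathbf j}\lvert E_{\mathbf i}\rangle\!\rangle\langle\!\langle E_{\mathbf j}\rvert\). The condition \(\operatorname{Tr}_BJ=I\) reads \(\sum c_{\mathbf i\mathbf j}\overline{(\cdot)}E_{\mathbf i}^\dagger E_{\mathbf j}=I\). By linear independence this forces \(c_{\mathbf i\mathbf j}=\delta_{\mathbf i\mathbf j}\). So \(\mathcal N_\eta^{\otimes n}\) is the only channel with \(K(\mathcal N)\subseteq K_\eta^{\otimes n}\).

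\emph{Computing \(S_{0,\NS}(K_\eta)\).} By the uniqueness step, \(\Sigma(K_\eta^{\otimes n})=\Sigma(\mathcal N_\eta^{\otimes n})\). I will then invoke the multiplicativity of the fixed-channel quantity \(\Sigma(\mathcal N)=\min\{\operatorname{Tr}T:J\preceq I\otimes T\}\) from \cite{DuanWinter2016}. If needed, this can be re-proved by tensoring primal and dual solutions, as in Theorem~\ref{thm:SigmaMinusMult}. This gives \(S_{0,\NS}(K_\eta)=\log_2\Sigma(\mathcal N_\eta)\).

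It remains to compute \(\Sigma(\mathcal N_\eta)\). The vector \(\lvert E_0\rangle\!\rangle=\sqrt{1-\eta}\,\lvert00\rangle+\sqrt{1-\eta-\eta^2}\,\lvert11\rangle\) alone already forces \(\operatorname{Tr}T\ge(\sqrt{1-\eta}+\sqrt{1-\eta-\eta^2})^2\). To see this, use the dual witness \(\lvert\omega\rangle\langle\omega\rvert\) with \(\omega\) proportional to \(\lvert00\rangle+\lvert11\rangle\); its partial trace over \(A\) is \(\le I_B\). The contribution of \(\lvert E_1\rangle\!\rangle\), which lives on \(\lvert01\rangle,\lvert10\rangle\), is orthogonal to \(\omega\). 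For the matching upper bound I will take a diagonal \(T=\operatorname{diag}(\tau_0,\tau_1)\) and check \(J\preceq I\otimes T\) blockwise:
\begin{itemize}
\item on \(\spann\{\lvert00\rangle,\lvert11\rangle\}\), the rank-one condition gives \((1-\eta)/\tau_0+(1-\eta-\eta^2)/\tau_1\le1\);
\item on \(\spann\{\lvert01\rangle,\lvert10\rangle\}\), the condition is \((\eta+\eta^2)/\tau_1+\eta/\tau_0\le1\).
\end{itemize}
Choosing the rank-one optimum \(\tau_0\propto\sqrt{1-\eta}\), \(\tau_1\propto\sqrt{1-\eta-\eta^2}\) satisfies the second condition when \(\eta\le1/4\); this is where that hypothesis enters. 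This yields the displayed value of \(S_{0,\NS}(K_\eta)\).

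\emph{The strict gap.} To show \(\log_2\Sigma^{-}(K_\eta)<S_{0,\NS}(K_\eta)\), I will exhibit a feasible \((V,T)\) for \eqref{sdp:SigmaMinusPrimal} with smaller trace. The key point is that \(V\) is not required to satisfy \(\operatorname{Tr}_BV=I\) exactly: any \(V=\sum v_{ij}\lvert E_i\rangle\!\rangle\langle\!\langle E_j\rvert\succeq0\) with \(\operatorname{Tr}_BV\succeq I\) is allowed. I will take \(V=\alpha\lvert E_0\rangle\!\rangle\langle\!\langle E_0\rvert+\beta\lvert E_1\rangle\!\rangle\langle\!\langle E_1\rvert\), choosing \(\beta\) large and \(\alpha\) smaller than \(1\) so that the diagonal constraint still holds. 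I will then optimise \(T\) as above. I expect this to give an explicit value strictly below \(\Sigma(\mathcal N_\eta)\); it should behave like \(2+O(\eta)\) as \(\eta\downarrow0\).

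\emph{Limits.} Using Theorem~\ref{thm:MainHierarchy}, it suffices to prove \(\limsup\Sigma^{-}(K_\eta)\le2\) and \(\liminf\widehat\Upsilon(K_\eta)\ge2\). For the lower bound I will exhibit a feasible \((S,U)\) for \eqref{sdp:WidehatUpsilonMax} with \(\operatorname{Tr}S\to2\): as \(\eta\to0\), \(K_\eta\) approaches the span of \(I\) and a rescaled \(\sigma_x\)-like operator, which suggests a natural candidate. Finally, \(\Sigma(K_\eta)=\Sigma(\mathcal N_\eta)\to(1+1)^2=4\) follows directly from the closed form.

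\emph{Main obstacle.} The hardest steps are the two tight estimates: the explicit \(\Sigma^{-}\) upper bound that beats the closed form, and the \(\widehat\Upsilon\) lower bound tending to \(2\). Both require finding good explicit feasible points. For these I would first try diagonal, \(\mathbb Z_2\)-symmetric ansätze and then tune the coefficients.
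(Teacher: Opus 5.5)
Your computation of \(S_{0,\NS}(K_\eta)\) is correct and is essentially the paper's argument. Linear independence of the \(E_{\mathbf i}^\dagger E_{\mathbf j}\) pins the Choi variable of \eqref{sdp:Sigma} to the tensor-power Choi operator. The diagonal \(T=\operatorname{diag}(q(q+h),h(q+h))\), with \(q=\sqrt{1-\eta}\) and \(h=\sqrt{1-\eta-\eta^2}\), gives the upper bound. The witness \(\lvert00\rangle+\lvert11\rangle\) gives the lower bound. The paper tensors these two certificates directly instead of citing multiplicativity of the fixed-channel quantity, which amounts to the same thing.

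There is one small slip. Since \(\lvert E_1\rangle\!\rangle=\sqrt\eta\,\lvert01\rangle+\sqrt{\eta+\eta^2}\,\lvert10\rangle\), the second block condition is \(\eta/\tau_1+(\eta+\eta^2)/\tau_0\le1\). For your \(\tau\)'s, the determinant of that block is \((q+h)^2(1-2\eta-\eta^2)\). So the condition holds for all \(\eta\le\sqrt2-1\), and this is not where \(\eta\le1/4\) is used.

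What is missing are the two estimates that actually carry the proposition; they are left as expectations. Nothing in the proposal establishes \(\Sigma^{-}(K_\eta)<(q+h)^2\) for every \(\eta\in(0,1/4]\), which is the separation itself. Nor does it establish \(\widehat\Upsilon(K_\eta)\to2\).

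Your ansatz does close the first. It is even without loss of generality: conjugation by \(Z\otimes Z\), with \(Z:=\operatorname{diag}(1,-1)\), fixes \(\lvert E_0\rangle\!\rangle\), negates \(\lvert E_1\rangle\!\rangle\), and maps feasible points of \eqref{sdp:SigmaMinusPrimal} to feasible points. Take \(\tau_0=\tau_1\) and make both blocks tight:
\(\alpha=1/[(2-2\eta-\eta^2)m_\eta]\), \(\beta=1/[(2\eta+\eta^2)m_\eta]\), and \(T=I/m_\eta\), where \(m_\eta=\lambda_{\min}(\operatorname{Tr}_BP)=\frac{1-\eta}{2-2\eta-\eta^2}+\frac1{2+\eta}\).
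This is precisely the paper's point \(V=P/m_\eta\). Here \(\eta\le1/4\) enters: \(m_\eta>\frac12+\frac49=\frac{17}{18}\) and \((q+h)^2>4h^2\ge\frac{11}{4}\). Hence \(\Sigma^{-}(K_\eta)\le2/m_\eta<\frac{36}{17}<\frac{11}{4}<\Sigma(K_\eta)\).

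For the other side, no code-type construction is needed. Take \(S_A=I/M_\eta\) and \(U_{AB}=P/M_\eta\), with \(M_\eta=\lVert\operatorname{Tr}_AP\rVert_\infty=\frac{1-\eta}{2-2\eta-\eta^2}+\frac{1+\eta}{2+\eta}\). This is feasible for \eqref{sdp:WidehatUpsilonMax}, because \(U\preceq S\otimes I\), \(\operatorname{Tr}_AU\preceq I\), and \(\operatorname{Tr}[P(S\otimes I-U)]=0\). Since \(m_\eta,M_\eta\to1\), Theorem~\ref{thm:MainHierarchy} squeezes \(\widehat\Upsilon\), \(\Theta_{\NS}\), \(\Gamma\), and \(\Sigma^{-}\) between \(2/M_\eta\) and \(2/m_\eta\), which gives the limits.
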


\begin{IEEEproof}
	The identity \(E_0^\dagger E_0+E_1^\dagger E_1=I_2\) shows
	that these operators define a channel.
	
	Write
	\(q:=\sqrt{1-\eta}\),
	\(h:=\sqrt{1-\eta-\eta^2}\),
	and let \(v_0:=q|00\rangle+h|11\rangle\),
	\(v_1:=\sqrt{\eta}|01\rangle
	+\sqrt{\eta+\eta^2}|10\rangle\).
	
	Its Choi matrix is
	\(J:=|v_0\rangle\langle v_0|+|v_1\rangle\langle v_1|\),
	and its support projection is
	\(P:=\frac{|v_0\rangle\langle v_0|}{q^2+h^2}
	+\frac{|v_1\rangle\langle v_1|}{2-q^2-h^2}\).
	
	The four operators \(E_i^\dagger E_j\), \(i,j\in\{0,1\}\),
	are linearly independent. We first use this independence to determine the Choi variable
	in \eqref{sdp:Sigma}. Any \(F\succeq0\) satisfying \(F=PFP\)
	can be written as
	\(F=\sum_{i,j=0}^1 c_{ij}|v_i\rangle\langle v_j|\)
	with \((c_{ij})\succeq0\).
	The constraint \(\operatorname{Tr}_B F=I_A\) is equivalent to
	\(\sum_{i,j=0}^1 c_{ij}E_j^\dagger E_i
	=I_A=\sum_{i=0}^1 E_i^\dagger E_i\).
	Linear independence therefore forces \(c_{ij}=\delta_{ij}\),
	so \(F=J\).
	
	For \(n\geq1\) and \(s\in\{0,1\}^n\), let
	\(E_s:=E_{s_1}\otimes\cdots\otimes E_{s_n}\).
	The operators
	\(E_s^\dagger E_t
	=\bigotimes_{\ell=1}^n(E_{s_\ell}^\dagger E_{t_\ell})\)
	are again linearly independent.
	Extending the notation \(\Pi_{12}\), let
	\(\Pi_n:(A\otimes B)^{\otimes n}
	\to A^{\otimes n}\otimes B^{\otimes n}\)
	denote the reordering unitary that places all \(A\) factors
	before all \(B\) factors, preserving their order within each group.
	The Choi matrix and support projection of the tensor-power channel
	are \(J_n:=\Pi_n J^{\otimes n}\Pi_n^\dagger\) and
	\(\Pi_n P^{\otimes n}\Pi_n^\dagger\), respectively.
	Applying the same trace-preservation argument shows that
	the Choi variable in \eqref{sdp:Sigma} for
	\(K_\eta^{\otimes n}\) must equal \(J_n\).
	
	Thus this program reduces to
	\[
	\Sigma(K_\eta^{\otimes n})
	=
	\min\left\{
	\operatorname{Tr}Y:
	Y\succeq0,\ 
	J_n\preceq I_{A^{\otimes n}}\otimes Y
	\right\},
	\]
	where \(Y\) acts on \(B^{\otimes n}\).
	
	To obtain an upper bound, take
	\(T:=\operatorname{diag}(q^2+qh,h^2+qh)\).
	In the ordered basis
	\((|00\rangle,|01\rangle,|10\rangle,|11\rangle)\),
	\[
	I_A\otimes T-J
	=
	\begin{pmatrix}
		qh&0&0&-qh\\
		0&q^2+h^2-1+qh&-\sqrt{\eta(\eta+\eta^2)}&0\\
		0&-\sqrt{\eta(\eta+\eta^2)}&q^2+h^2-1+qh&0\\
		-qh&0&0&qh
	\end{pmatrix}.
	\]
	This matrix is positive semidefinite, since
	\(q^2+h^2>1\) and
	\(q^2h^2-\eta(\eta+\eta^2)=q^2+h^2-1>0\).
	Thus \(0\preceq J\preceq I_A\otimes T\).
	Taking tensor powers and conjugating by \(\Pi_n\) gives
	\(J_n\preceq I_{A^{\otimes n}}\otimes T^{\otimes n}\).
	Hence \(Y=T^{\otimes n}\) is feasible in the reduced program,
	and \(\Sigma(K_\eta^{\otimes n})
	\leq\operatorname{Tr}(T^{\otimes n})
	=(\operatorname{Tr}T)^n=(q+h)^{2n}\).
	
	For the matching lower bound, let
	\(|\Phi\rangle:=|00\rangle+|11\rangle\) and
	\(|\Phi_n\rangle:=\Pi_n|\Phi\rangle^{\otimes n}
	=\sum_{s\in\{0,1\}^n}
	|s\rangle_{A^{\otimes n}}\otimes|s\rangle_{B^{\otimes n}}\).
	Since \(\langle\Phi|v_0\rangle=q+h\) and
	\(\langle\Phi|v_1\rangle=0\), every feasible \(Y\) satisfies
	\[
	\begin{aligned}
		\operatorname{Tr}Y=\langle\Phi_n|
		I_{A^{\otimes n}}\otimes Y
		|\Phi_n\rangle
		\geq\langle\Phi_n|J_n|\Phi_n\rangle
		=\bigl(\langle\Phi|J|\Phi\rangle\bigr)^n
		=(q+h)^{2n}.
	\end{aligned}
	\]
	The two bounds give
	\(\Sigma(K_\eta^{\otimes n})=(q+h)^{2n}\)
	for every \(n\geq1\).
	By the definition of the asymptotic simulation cost,
	\(S_{0,\NS}(K_\eta)
	=\lim_{n\to\infty}\frac1n
	\log_2\Sigma(K_\eta^{\otimes n})
	=2\log_2(q+h)\).
	
	Direct calculation gives
	\(m_\eta:=\lambda_{\min}(\operatorname{Tr}_B P)
	=\frac{1-\eta}{2-2\eta-\eta^2}+\frac{1}{2+\eta}\)
	and
	\(M_\eta:=\|\operatorname{Tr}_A P\|_\infty
	=\frac{1-\eta}{2-2\eta-\eta^2}+\frac{1+\eta}{2+\eta}\).
	The choice \(V=P/m_\eta\), \(T_B=I_2/m_\eta\)
	is feasible for \eqref{sdp:SigmaMinusPrimal}.
	The choice \(S_A=I_2/M_\eta\),
	\(U_{AB}=P/M_\eta\) is feasible for the primal SDP \eqref{sdp:WidehatUpsilonMax}
	for \(\widehat\Upsilon\). Theorem~\ref{thm:MainHierarchy}
	then yields
	\[
	\frac{2}{M_\eta}
	\leq\widehat\Upsilon(K_\eta)
	\leq\Theta_{\NS}(K_\eta)
	\leq\Gamma(K_\eta)
	\leq\Sigma^{-}(K_\eta)
	\leq\frac{2}{m_\eta}.
	\]
	For \(0<\eta\leq1/4\), we have
	\(m_\eta>1/2+4/9=17/18\) and
	\((q+h)^2>4h^2\geq11/4\).
	Thus \(\Sigma^{-}(K_\eta)<\frac{36}{17}
	<\frac{11}{4}\leq\Sigma(K_\eta)\),
	which proves \eqref{eq:EpsilonSimulationGap}.
	Finally, \(m_\eta,M_\eta\to1\) and
	\((q+h)^2\to4\), proving the limits.
\end{IEEEproof}

For this family, \(\log_2\Gamma(K_\eta)
-C_{0,\NS}(K_\eta)\to0\), whereas
\(S_{0,\NS}(K_\eta)-\log_2\Sigma^{-}(K_\eta)\to1\).
Thus the capacity upper bounds approach the capacity as
\(\eta\downarrow0\), while the simulation lower bound
retains a nonzero gap.

\section{Sharp Bounds for Classical--Quantum Kraus Operator Spaces}
\label{sec:CqSharp}

Throughout this section, \(K\) is a classical--quantum Kraus operator space with output support projections \(\{P_x\}_{x\in\mathcal X}\). We write \(r:=\dim B'\) for the effective output dimension.

\subsection{Classical--Quantum Reductions}

\begin{proposition}\label{prop:CqReductions}
	For every classical--quantum Kraus operator space \(K\),
	\begin{equation}
		\begin{aligned}
			\Sigma(K)=\Sigma^{-}(K)=\min\quad
			&\operatorname{Tr}T_B\\
			\mathrm{s.t.}\quad
			&T_B\succeq0,\\
			&0\preceq F_x\preceq P_x,\quad \operatorname{Tr}F_x=1,\\
			&F_x\preceq T_B\quad \forall x.
		\end{aligned}
		\label{sdp:CqSigma}
	\end{equation}
	Moreover, the primal and dual programs for \(\Gamma\) reduce to
	\begin{equation}
		\begin{aligned}
			\Gamma(K)=\min\quad
			&\frac12\operatorname{Tr}(T_0+T_1)\\
			\mathrm{s.t.}\quad
			&P_xX_x=X_x,\quad \operatorname{Tr}X_x=1,\\
			&\begin{pmatrix}T_0&X_x\\X_x^{\dagger}&T_1\end{pmatrix}\succeq0
			\quad \forall x,
		\end{aligned}
		\label{sdp:CqGammaPrimal}
	\end{equation}
	and
	\begin{equation}
		\begin{aligned}
			\Gamma(K)=\max\quad
			&\sum_x g_x\\
			\mathrm{s.t.}\quad
			&g_x\geq0,\\
			&\begin{pmatrix}Y_{0,x}&Z_x\\Z_x^{\dagger}&Y_{1,x}\end{pmatrix}\succeq0
			\quad \forall x,\\
			&\sum_xY_{0,x}=\sum_xY_{1,x}=I_B,\\
			&P_xZ_x=g_xP_x\quad \forall x.
		\end{aligned}
		\label{sdp:CqGammaDual}
	\end{equation}
	Finally, the dual of \(\Sigma\) is
	\begin{equation}
		\begin{aligned}
			\Sigma(K)=\max\quad
			&\sum_x\lambda_x\\
			\mathrm{s.t.}\quad
			&Y_x\succeq0,\quad \sum_xY_x\preceq I_B,\\
			&P_xY_xP_x\succeq\lambda_xP_x,\quad \lambda_x\geq0.
		\end{aligned}
		\label{sdp:CqSigmaDual}
	\end{equation}
	All displayed programs satisfy strong duality.
\end{proposition}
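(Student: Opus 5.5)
The plan is to exploit the block-diagonal structure of the classical--quantum support projection \(P_{AB}=\sum_x|x\rangle\langle x|\otimes P_x\) and to use a pinching (twirling) argument on \(A\). Each program is invariant under conjugation by the diagonal unitaries \(D_\theta=\sum_x e^{\mathrm i\theta_x}|x\rangle\langle x|\) acting on \(A\). This holds because \(P\) commutes with \(D_\theta\otimes I_B\), the constraints \(\operatorname{Tr}_B(\cdot)\succeq I_A\) and \(=I_A\) are covariant, and \(I_A\otimes T\) is invariant. The programs are convex, so averaging an optimal solution over \(\theta\) yields an optimal solution that is block diagonal in the basis \(\{|x\rangle\}\). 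For the block matrix in \eqref{sdp:GammaEq}, I would conjugate by \(D_\theta\oplus D_\theta\) (tensored with \(I_B\)); this preserves positivity and averages \(X\) to its pinching \(\sum_x|x\rangle\langle x|\otimes X_x\). After pinching, the constraints decouple. \(PX=X\) becomes \(P_xX_x=X_x\), and \(\operatorname{Tr}_BX=I_A\) becomes \(\operatorname{Tr}X_x=1\). The block constraint becomes \(\begin{psmallmatrix}T_0&X_x\\X_x^\dagger&T_1\end{psmallmatrix}\succeq0\) for every \(x\), since a block-diagonal operator is positive exactly when each of its blocks is. This gives \eqref{sdp:CqGammaPrimal}. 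The reverse direction is immediate, because any \((X_x,T_0,T_1)\) feasible for \eqref{sdp:CqGammaPrimal} lifts to a block-diagonal feasible point of \eqref{sdp:GammaEq} with the same objective.

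For \(\Sigma\) and \(\Sigma^{-}\), I would apply the same pinching to \eqref{sdp:Sigma} and to \eqref{sdp:SigmaMinusPrimal}. For \(\Sigma\), this reduces \(F\) to \(\sum_x|x\rangle\langle x|\otimes F_x\) with \(0\preceq F_x=P_xF_xP_x\), \(\operatorname{Tr}F_x=1\) and \(F_x\preceq T\), which is \eqref{sdp:CqSigma}. (The condition \(F_x=P_xF_xP_x\) together with \(F_x\succeq0\) is what I intend by the shorthand \(0\preceq F_x\preceq P_x\) up to scaling; more precisely, it means \(\supp F_x\subseteq\operatorname{ran}P_x\). If the displayed constraint is read literally, it also holds, since \(\operatorname{Tr}F_x=1\) gives \(F_x\preceq\lVert F_x\rVert P_x\preceq P_x\).) For \(\Sigma^{-}\), pinching gives blocks \(V_x\) with \(\operatorname{Tr}V_x\geq1\), \(\supp V_x\subseteq\operatorname{ran}P_x\) and \(V_x\preceq T\). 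Rescaling \(V_x\mapsto V_x/\operatorname{Tr}V_x\) preserves \(V_x\preceq T\) and makes the trace equal to \(1\). So the \(\Sigma^{-}\) program collapses to the same program \eqref{sdp:CqSigma}, which gives \(\Sigma=\Sigma^{-}\) directly, consistent with \cite{WangDuan2016}.

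For the dual programs I would either pinch the general dual \eqref{sdp:GammaDual} in the same way or dualize \eqref{sdp:CqGammaPrimal} directly; I prefer the former, since strong duality is already known from Proposition~\ref{prop:GammaDual}. Pinching \(Y_0,Y_1,Z\) and \(H\) is again legitimate by covariance, and the objective \(\operatorname{Re}\operatorname{Tr}H\) is preserved. The result is \(H=\sum_x h_x|x\rangle\langle x|\), together with blocks \(Y_{j,x}\) and \(Z_x\) satisfying \(\sum_xY_{j,x}=I_B\) and \(P_xZ_x=h_xP_x\). Replacing \(Z_x\) by \(e^{-\mathrm i\arg h_x}Z_x\) preserves positivity of the \(x\)-th block, and this makes \(g_x:=|h_x|\geq\operatorname{Re}h_x\) real and nonnegative, which yields \eqref{sdp:CqGammaDual}. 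For \eqref{sdp:CqSigmaDual}, I would derive the Lagrangian dual of \eqref{sdp:CqSigma}, using multipliers \(Y_x\succeq0\) for \(F_x\preceq T\), \(\lambda_x\) for \(\operatorname{Tr}F_x=1\), and the support restriction. Stationarity in \(T\) gives \(\sum_xY_x\preceq I_B\), and stationarity in \(F_x\) on \(\operatorname{ran}P_x\) gives \(P_xY_xP_x\succeq\lambda_xP_x\). We may take \(\lambda_x\geq0\), because negative values only lower the objective.

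The main obstacle is strong duality for the reduced programs. For \eqref{sdp:CqGammaPrimal} and \eqref{sdp:CqGammaDual}, strong duality follows because both values equal \(\Gamma(K)\) by the pinching reductions and Proposition~\ref{prop:GammaDual}. For \eqref{sdp:CqSigma}, I would check Slater's condition. Take \(F_x=P_x/\operatorname{rank}P_x\) and \(T=cI_{B'}\) with \(c\) large; this makes \(T-F_x\succ0\) on \(B'\), so the primal is strictly feasible relative to the support faces, and the primal value is finite. Together with dual feasibility of \(Y_x=0\), \(\lambda_x=0\), this gives zero duality gap and attainment on the dual side.
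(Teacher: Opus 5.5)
Your proposal is correct and follows essentially the same route as the paper. Your diagonal-unitary twirl is exactly the pinching \(\Delta_A\) on the classical register, and the rest matches step for step: renormalizing the \(V_x\) blocks to identify \(\Sigma^{-}\) with \(\Sigma\), pinching the general \(\Gamma\) dual followed by the phase rotation \(h_x\mapsto\lvert h_x\rvert\), the direct Lagrangian for the \(\Sigma\) dual, and strong duality via value preservation and the relative Slater point \(F_x=P_x/\operatorname{rank}P_x\), \(T=tI\). One minor wording point: the averaging argument applies to any feasible point, not only an optimal one, so you need not presuppose attainment.
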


\begin{IEEEproof}
	The reductions follow by pinching the classical input register. The details, including the dual programs and strong duality, are given in Appendix~\ref{app:CqReductions}.
\end{IEEEproof}

\subsection{Projection Inequalities}

\begin{lemma}\label{lem:ProjectionResource}
	Let \(B\) be an \(r\)-dimensional Hilbert space.
	
	\emph{(i)} If \(Y_x\succeq0\), \(\sum_xY_x\preceq I_B\), and \(P_x\) are arbitrary projections, then
	\begin{equation}
		\sum_xP_xY_xP_x
		\preceq
		\left(\frac{1+\sqrt r}{2}\right)^2I_B.
		\label{eq:RankFreeCompression}
	\end{equation}
	
	\emph{(ii)} Let \(P\) be a projection, let \(Y\succeq0\), and suppose that
	\(PYP\succeq\lambda P\) and \(\operatorname{rank}P\geq k\). For any unit vector \(v\), set
	\(p:=\langle v|P|v\rangle\), \(a:=\langle v|Y|v\rangle\), and
	\(b:=\operatorname{Tr}\!\left[(I-|v\rangle\langle v|)Y\right]\).
	Define
	\(\alpha_{r,k}:=\frac{(1+\sqrt r)^2}{\sqrt r(1+\sqrt k)^2},\; \beta_{r,k}:=\frac{1+\sqrt r}{\sqrt r(1+\sqrt k)^2}\).
	Then
	\begin{equation}
		\lambda p\leq\alpha_{r,k}a+\beta_{r,k}b.
		\label{eq:SupportFloor}
	\end{equation}
\end{lemma}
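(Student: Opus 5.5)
The plan is to reduce both parts to a two-dimensional estimate about how a projection can ``leak'' weight of \(Y\) onto a fixed direction \(v\). Fix a unit vector \(v\). For a projection \(P\), set \(p:=\langle v|P|v\rangle\) and decompose \(P|v\rangle=p|v\rangle+|u\rangle\) with \(|u\rangle\perp|v\rangle\). Since \(\|P|v\rangle\|^2=p\), we get \(\|u\|^2=p(1-p)\).

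By Cauchy--Schwarz for the semi-inner product induced by \(Y\succeq0\),
\[
\langle v|PYP|v\rangle\leq\Bigl(p\sqrt a+\sqrt{p(1-p)}\sqrt{\langle\hat u|Y|\hat u\rangle}\Bigr)^2\leq\Bigl(p\sqrt a+\sqrt{p(1-p)}\sqrt b\Bigr)^2 ,
\]
where \(\hat u\) is the normalized \(u\), \(a=\langle v|Y|v\rangle\), and \(b=\operatorname{Tr}[(I-|v\rangle\langle v|)Y]\ge\langle\hat u|Y|\hat u\rangle\). A second, weighted Cauchy--Schwarz step gives
\[
(s\sqrt a+t\sqrt b)^2\leq\Bigl(\tfrac{s^2}{\alpha}+\tfrac{t^2}{\beta}\Bigr)(\alpha a+\beta b)\quad\text{for any } \alpha,\beta>0 .
\]
So everything reduces to choosing \(\alpha,\beta\) such that \(f(p):=p^2/\alpha+p(1-p)/\beta\) stays below the required level for all \(p\in[0,1]\).

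For (i), it suffices to bound \(\langle v|\sum_xP_xY_xP_x|v\rangle\) for every unit \(v\). Apply the estimate to each \(x\) with a common pair \((\alpha,\beta)\) satisfying \(\max_{p\in[0,1]}f(p)\le1\). Summing over \(x\) and using \(\sum_xa_x\le1\) and \(\sum_x(a_x+b_x)=\operatorname{Tr}\sum_xY_x\le r\) gives
\[
\sum_x\langle v|P_xY_xP_x|v\rangle\le\alpha A+\beta(r-A),\qquad A:=\sum_xa_x\in[0,1].
\]
When \(\alpha\ge\beta\) this is at most \(\alpha+\beta(r-1)\). I then minimize \(\alpha+\beta(r-1)\) subject to \(\max_pf(p)\le1\). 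This is a one-parameter calculus problem: \(f\) is a concave quadratic in \(p\) when \(\alpha>\beta\), so the constraint is explicit. I expect the optimum to produce exactly \(((1+\sqrt r)/2)^2\). Equivalently, one can maximize directly over \(p\) the quantity \((p\sqrt a+\sqrt{p(1-p)}\sqrt b)^2\) under the linear budget, which amounts to computing the largest eigenvalue of a \(2\times2\) matrix.

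For (ii), I combine two separate upper bounds on \(\lambda\). Pointwise, the hypothesis gives
\[
\lambda p=\lambda\langle v|P|v\rangle\le\langle v|PYP|v\rangle\le\bigl(p\sqrt a+\sqrt{p(1-p)}\sqrt b\bigr)^2 .
\]
Globally, taking traces gives \(\lambda k\le\lambda\operatorname{rank}P\le\operatorname{Tr}(PYP)\le\operatorname{Tr}Y=a+b\). Hence
\[
\lambda p\le\min\Bigl\{\bigl(p\sqrt a+\sqrt{p(1-p)}\sqrt b\bigr)^2,\ \tfrac{p(a+b)}{k}\Bigr\}.
\]
I bound this minimum by a convex combination \(\theta\cdot(\text{first})+(1-\theta)\cdot(\text{second})\), and bound the first term via the weighted Cauchy--Schwarz step with free weights. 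The result is an expression linear in \(a\) and \(b\) whose coefficients depend on \(p\), \(\theta\) and the weights. I then choose \(\theta\) and the weights (as functions of \(r,k\)) so that, uniformly in \(p\in[0,1]\), the coefficient of \(a\) is at most \(\alpha_{r,k}\) and that of \(b\) is at most \(\beta_{r,k}\). A sanity check: at \(k=1\) the pair \(\alpha_{r,1},\beta_{r,1}\) should be consistent with the constants obtained in (i).

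The main obstacle is this final parameter optimization in (ii): finding explicit \(\theta\) and weights that make the \(p\)-dependence disappear and produce exactly \(\alpha_{r,k}\) and \(\beta_{r,k}\). My first attempt would be to locate the worst-case \(p\) by equating the two terms of the minimum, which gives a threshold \(p^\ast\) depending on \(k\). I would then tune the weights so that the envelope touches the linear bound at \(p^\ast\), and afterwards verify the inequality for \(p\) on either side of \(p^\ast\) separately.
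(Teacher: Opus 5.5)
\textbf{Part (i).} This is the paper's argument: the same decomposition \(P_xv=p_xv+\sqrt{p_x(1-p_x)}\,w_x\), the same two Cauchy--Schwarz steps, and the same summation. The optimization you leave open is solved by \(\alpha=(1+\sqrt r)^2/(4\sqrt r)\) and \(\beta=(1+\sqrt r)/(4\sqrt r)\). For these weights, \(\max_{p\in[0,1]}f(p)=1\), attained at \(p=(1+\sqrt r)/(2\sqrt r)\), and \(\alpha+(r-1)\beta=((1+\sqrt r)/2)^2\). Since \(\alpha\geq\beta\), your summation step applies; the paper instead uses \(\sum_xb_x\leq r-1\) directly.

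\textbf{Part (ii) has a genuine gap.} The two inequalities you keep are too weak, so no choice of \(\theta\) and weights can work. Take \(k=2\), \(r=9\), \(\lambda=1\), \(p=1/2\), \(a=0\), \(b=2\). Then \(\bigl(p\sqrt a+\sqrt{p(1-p)}\sqrt b\bigr)^2=1/2=\lambda p\) and \(p(a+b)/k=1/2=\lambda p\), so both of your bounds hold. Yet \(\alpha_{9,2}a+\beta_{9,2}b=8/(9+6\sqrt2)\approx0.458<1/2\). Any convex combination of your two bounds dominates their minimum, so the final linear estimate cannot exist.

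The same failure occurs for every \(k\geq2\) once \(r\) is large. At \(a=0\), \(b=k\), \(p=1-1/k\), both of your bounds hold with equality, but the target requires \(k\beta_{r,k}\geq1-1/k\). This fails because \(\beta_{r,k}\to(1+\sqrt k)^{-2}\) and \(k/(1+\sqrt k)^2<1-1/k\) for \(k\geq2\). The underlying loss is that your relaxation lets the pointwise bound and the trace bound \(\lambda k\leq a+b\) spend the same part of \(b\), and also \(a\), twice.

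\textbf{The missing idea} is the paper's additive decomposition of \(b\).
\begin{itemize}
\item Write \(u:=Pv/\sqrt p=\sqrt p\,v+\sqrt{1-p}\,w\) with \(w\perp v\), and pick orthonormal \(e_1,\dots,e_{k-1}\in\operatorname{ran}P\cap u^{\perp}\).
\item Then \(\langle e_j|v\rangle=\langle e_j|Pv\rangle=\sqrt p\,\langle e_j|u\rangle=0\), and hence also \(e_j\perp w\).
\item With \(c:=\langle w|Y|w\rangle\), you get simultaneously \(\lambda\leq\langle u|Y|u\rangle\leq(\sqrt{pa}+\sqrt{(1-p)c})^2\) and \(b\geq c+(k-1)\lambda\), with the same \(c\) in both.
\item Weighted Cauchy--Schwarz with weights \(\alpha_{r,k},\beta_{r,k}\) gives \(\alpha_{r,k}a+\beta_{r,k}b-\lambda p\geq\lambda\bigl[(p/\alpha_{r,k}+(1-p)/\beta_{r,k})^{-1}+(k-1)\beta_{r,k}-p\bigr]\).
\item The bracket equals \([p\sqrt r(1+\sqrt k)-\sqrt k(1+\sqrt r)]^2\) divided by \(\sqrt r(1+\sqrt k)^2(1+\sqrt r-p\sqrt r)>0\), so it is nonnegative.
\item The case \(p=1\) is handled separately via \(a\geq\lambda\) and \(b\geq(k-1)\lambda\).
\end{itemize}
In the example above, this coupling forces \(c\geq2\) and hence \(b\geq3\), which excludes your \(b=2\) point and restores the inequality.
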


\begin{IEEEproof}
	We first prove part (i). Fix a unit vector \(v\). For each \(x\), set
	\(p_x:=\langle v|P_x|v\rangle\), \(a_x:=\langle v|Y_x|v\rangle\), and
	\(b_x:=\operatorname{Tr}\!\left[(I-|v\rangle\langle v|)Y_x\right]\).
	If \(0<p_x<1\), write
	\(P_xv=p_xv+\sqrt{p_x(1-p_x)}\,w_x,\; w_x\perp v,\; \|w_x\|=1\).
	Let \(c_x:=\langle w_x|Y_x|w_x\rangle\). Since \(Y_x\succeq0\) and \(w_x\in v^\perp\), we have
	\(c_x\leq b_x\) and \(|\langle v|Y_x|w_x\rangle|\leq\sqrt{a_xc_x}\), and hence
	\begin{align}
		\langle v|P_xY_xP_x|v\rangle
		&\leq
		\left(p_x\sqrt{a_x}+\sqrt{p_x(1-p_x)}\sqrt{c_x}\right)^2\nonumber\\
		&\leq
		\left(p_x\sqrt{a_x}+\sqrt{p_x(1-p_x)}\sqrt{b_x}\right)^2.
		\label{eq:CompressionLocal}
	\end{align}
	The same estimate is immediate for \(p_x=0\) and \(p_x=1\).
	
	Set
	\(
	\alpha_r:=\frac{(1+\sqrt r)^2}{4\sqrt r},
	\;
	\beta_r:=\frac{1+\sqrt r}{4\sqrt r}.
	\)
	For every \(p\in[0,1]\),
	\[
	\frac{p^2}{\alpha_r}+\frac{p(1-p)}{\beta_r}
	=
	\frac{4\sqrt r\,p(1+\sqrt r-p\sqrt r)}{(1+\sqrt r)^2}
	\leq1,
	\]
	because the quadratic expression on the right attains its maximum \(1\) at \(p=(1+\sqrt r)/(2\sqrt r)\). Weighted Cauchy--Schwarz therefore gives
	\[
	\left(p\sqrt a+\sqrt{p(1-p)}\sqrt b\right)^2
	\leq
	\left(\frac{p^2}{\alpha_r}+\frac{p(1-p)}{\beta_r}\right)
	(\alpha_ra+\beta_rb)
	\leq\alpha_ra+\beta_rb.
	\]
	The constraint \(\sum_xY_x\preceq I_B\) implies
	\(\sum_xa_x\leq1,\; \sum_xb_x\leq\operatorname{Tr}(I-|v\rangle\langle v|)=r-1\).
	Combining these estimates with \eqref{eq:CompressionLocal} and the weighted Cauchy--Schwarz bound above gives
	\[
	\left\langle v\left|\sum_xP_xY_xP_x\right|v\right\rangle
	\leq
	\alpha_r+(r-1)\beta_r
	=
	\frac{(1+\sqrt r)^2}{4}.
	\]
	Since \(v\) is arbitrary, \eqref{eq:RankFreeCompression} follows.
	
	We now prove part (ii). If \(\lambda\leq0\), then \(\lambda p\leq0\leq\alpha_{r,k}a+\beta_{r,k}b\), so the claim is immediate. Hence assume \(\lambda>0\). If \(p=0\), the left-hand side of \eqref{eq:SupportFloor} vanishes. If \(p=1\), then \(v\in\operatorname{ran}P\), so \(a\geq\lambda\). Moreover, \(\operatorname{ran}P\cap v^{\perp}\) contains \(k-1\) orthonormal vectors, and the hypothesis \(PYP\succeq\lambda P\) gives \(b\geq(k-1)\lambda\). Therefore,
	\[
	\alpha_{r,k}a+\beta_{r,k}b
	\geq
	[\alpha_{r,k}+(k-1)\beta_{r,k}]\lambda
	\geq\lambda,
	\]
	where the last inequality is equivalent to \((\sqrt r-\sqrt k)^2\geq0\).
	
	It remains to consider \(0<p<1\). Define
	\(u:=Pv/\sqrt p=\sqrt p\,v+\sqrt{1-p}\,w\), where \(w\perp v\).
	Then \(u\) is a unit vector in \(\operatorname{ran}P\). Since \(\operatorname{rank}P\geq k\), the subspace of \(\operatorname{ran}P\) orthogonal to \(u\) has dimension at least \(k-1\). Choose orthonormal vectors \(e_1,\ldots,e_{k-1}\) in this subspace.
	
	These vectors are automatically orthogonal to \(v\). Indeed, since \(e_j\in\operatorname{ran}P\),
	\(
	\langle e_j|v\rangle
	=
	\langle e_j|Pv\rangle
	=
	\sqrt p\,\langle e_j|u\rangle
	=0.
	\)
	The decomposition of \(u\), together with \(e_j\perp u\) and \(e_j\perp v\), then implies \(e_j\perp w\). Thus
	\(w,e_1,\ldots,e_{k-1}\) are mutually orthogonal and all lie in \(v^{\perp}\).
	
	From \(PYP\succeq\lambda P\),
	\(
	\langle u|Y|u\rangle\geq\lambda,
	\;
	\langle e_j|Y|e_j\rangle\geq\lambda.
	\)
	Let \(c:=\langle w|Y|w\rangle\). Since \(w,e_1,\ldots,e_{k-1}\) are mutually orthogonal and lie in \(v^{\perp}\),
	\(b\geq c+(k-1)\lambda\).
	Positivity of \(Y\) also gives \(|\langle v|Y|w\rangle|\leq\sqrt{ac}\), and hence
	\begin{equation}
		\lambda
		\leq
		\langle u|Y|u\rangle
		\leq
		\left(\sqrt{pa}+\sqrt{(1-p)c}\right)^2.
		\label{eq:LambdaTwoDim}
	\end{equation}
	
	For fixed \(p\) and \(\lambda\), weighted Cauchy--Schwarz applied to \eqref{eq:LambdaTwoDim} implies
	\(\alpha_{r,k}a+\beta_{r,k}c \geq \frac{\lambda}{p/\alpha_{r,k}+(1-p)/\beta_{r,k}}\).
	Combining this with \(b\geq c+(k-1)\lambda\) gives
	\begin{align*}
		\alpha_{r,k}a+\beta_{r,k}b-\lambda p
		\geq
		\lambda\left[
		\frac{1}{p/\alpha_{r,k}+(1-p)/\beta_{r,k}}
		+\beta_{r,k}(k-1)-p
		\right].
	\end{align*}
	Substituting the definitions of \(\alpha_{r,k}\) and \(\beta_{r,k}\), the bracket equals
	\begin{equation}
		\frac{[p\sqrt r(1+\sqrt k)-\sqrt k(1+\sqrt r)]^2}
		{\sqrt r(1+\sqrt k)^2(1+\sqrt r-p\sqrt r)}.
		\label{eq:LocalSlack}
	\end{equation}
	Because \(0\leq p\leq1\) and \(r\geq1\),
	\(1+\sqrt r-p\sqrt r=1+\sqrt r(1-p)>0\). Thus \eqref{eq:LocalSlack} is nonnegative, proving \eqref{eq:SupportFloor}. The expression in \eqref{eq:LocalSlack} vanishes at
	\(
	p=\frac{\sqrt k(1+\sqrt r)}{\sqrt r(1+\sqrt k)}.
	\)
	For \(k<r\), this point lies in \((0,1)\); for \(k=r\), it is the endpoint \(p=1\).
\end{IEEEproof}

\subsection{Dimension-Dependent Bounds Between \texorpdfstring{$A$}{A} and \texorpdfstring{$\Gamma$}{Gamma}}

Set \(c_r:=(1+\sqrt r)/2\).

\begin{theorem}\label{thm:GammaOverA}
	For every classical--quantum Kraus operator space with effective output dimension \(r\),
	\[
	A(K)\leq\Gamma(K)\leq c_rA(K).
	\]
\end{theorem}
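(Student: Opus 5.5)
The lower bound \(A(K)\leq\Gamma(K)\) needs no new work. For classical--quantum spaces the paper recalls \(\Theta_{\NS}(K)=A(K)\), and Corollary~\ref{cor:ThetaGamma} gives \(\Theta_{\NS}(K)\leq\Gamma(K)\). Alternatively, Proposition~\ref{prop:WidehatGamma} together with \(\widehat\Upsilon(K)=A(K)\) gives the same inequality.

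For the upper bound \(\Gamma(K)\leq c_rA(K)\), I will take an optimal solution of the dual SDP \eqref{sdp:CqGammaDual} and turn it into a feasible point of the packing program \eqref{sdp:SemidefinitePackingPrimal}, losing a factor \(c_r\). Let \((g_x,Y_{0,x},Y_{1,x},Z_x)\) be dual feasible. From \(P_xZ_x=g_xP_x\) we get
\[
g_xP_x=P_xZ_xP_x.
\]
The block positivity \(\begin{psmallmatrix}Y_{0,x}&Z_x\\Z_x^\dagger&Y_{1,x}\end{psmallmatrix}\succeq0\) is preserved under compression by \(P_x\oplus P_x\). Compressing and then testing the \(2\times2\) block with \((\xi,\xi)\) and \((\xi,-\xi)\) gives the arithmetic-mean estimate
\[
\operatorname{Re}\langle\xi|P_xZ_xP_x|\xi\rangle\leq\tfrac12\langle\xi|P_x(Y_{0,x}+Y_{1,x})P_x|\xi\rangle .
\]
Since \(P_xZ_xP_x=g_xP_x\) is Hermitian, this yields
\[
g_xP_x\preceq P_xY_xP_x,\qquad Y_x:=\tfrac12(Y_{0,x}+Y_{1,x}).
\]
Summing over \(x\), and using \(Y_x\succeq0\) and \(\sum_xY_x=I_B\) on \(B'\) (the output space is restricted to \(B'\)), Lemma~\ref{lem:ProjectionResource}(i) gives
\[
\sum_xg_xP_x\preceq\sum_xP_xY_xP_x\preceq c_r^2I_B .
\]
This only shows that \(s_x=g_x/c_r^2\) is feasible for \(A\), i.e. \(\Gamma\leq c_r^2A\). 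That is too weak by a factor \(c_r\); closing this gap is the main obstacle.

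To gain the missing factor I will use the geometric-mean (Schur complement) form of block positivity instead of the arithmetic mean. Compressed block positivity gives
\[
|\langle\xi|Z_x|\eta\rangle|^2\leq\langle\xi|Y_{0,x}|\xi\rangle\,\langle\eta|Y_{1,x}|\eta\rangle \quad\text{for } \xi,\eta\in\operatorname{ran}P_x .
\]
The plan is to pair this with the dual of \(A\), the min program with \(T_B\succeq0\) and \(\operatorname{Tr}(P_xT_B)\geq1\). Let \(T\) be optimal there, with \(\operatorname{Tr}T=A\). The goal is
\[
\sum_xg_x\leq c_r\operatorname{Tr}T .
\]
Because \(\operatorname{Tr}(P_xT)\geq1\), it suffices to prove
\[
\sum_xg_x\operatorname{Tr}(P_xT)\leq c_r\operatorname{Tr}T .
\]
I will write \(\operatorname{Tr}(P_xT)=\operatorname{Tr}(P_xZ_xP_xT)/g_x\), so that \(g_x\operatorname{Tr}(P_xT)=\operatorname{Tr}(P_xZ_xP_xT)\). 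Then I expand \(T=\sum_j t_j|j\rangle\langle j|\) and apply the vector bound above with \(\xi,\eta\) taken as compressions of eigenvectors. Cauchy--Schwarz across \(x\) then produces \(\sqrt{\cdot}\cdot\sqrt{\cdot}\) factors. The \(Y_0\) and \(Y_1\) sums are each controlled by a one-sided version of the local estimate \eqref{eq:CompressionLocal} from the proof of Lemma~\ref{lem:ProjectionResource}(i). The key point is that this estimate involves \(p_x\sqrt{a_x}\) linearly rather than quadratically, which is what should turn \(c_r^2\) into \(c_r\).

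I am unsure this weighting closes exactly. If it does not, the fallback is to construct a primal point for \eqref{sdp:CqGammaPrimal} directly from the optimal \(T\). I would take \(T_0=\tau T\), \(T_1=\tau^{-1}T'\), and \(X_x\) proportional to \(P_xT^{1/2}\Pi T'^{1/2}P_x\)-type operators normalized to trace \(1\). I would then bound \(\tfrac12\operatorname{Tr}(T_0+T_1)\) using the same extremal value \(p=(1+\sqrt r)/(2\sqrt r)\) that appears in Lemma~\ref{lem:ProjectionResource}.
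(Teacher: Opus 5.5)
Your lower bound is correct, and you rightly note that the arithmetic-mean estimate only gives \(\Gamma\le c_r^2A\). The gap is in the step meant to recover the missing factor \(c_r\), and as designed that step cannot succeed. It uses only three facts about \(Z_x\): block positivity tested on vectors in \(\operatorname{ran}P_x\) on \emph{both} sides, the two normalizations, and the two-sided identity \(P_xZ_xP_x=g_xP_x\) (through \(g_x\operatorname{Tr}(P_xT)=\operatorname{Tr}(P_xZ_xP_xT)\)). These facts do not imply \(\Gamma\le c_rA\).

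To see this, take the rank-one family of Proposition~\ref{prop:CanonicalPureFamily} and write \(P_x=|\psi_x\rangle\langle\psi_x|\). Take an optimal solution \((\lambda_x,Y_x)\) of \eqref{sdp:CqSigmaDual}, and add the slack \(I_B-\sum_xY_x\) to one \(Y_{x_0}\) so that \(\sum_xY_x=I_B\). Set \(Y_{0,x}=Y_{1,x}=Y_x\) and \(Z_x=s_xY_x\), where \(s_x:=\lambda_x/\langle\psi_x|Y_x|\psi_x\rangle\in[0,1]\) (and \(Z_x=0\) if \(\lambda_x=0\)). This data satisfies block positivity, both normalizations, and \(P_xZ_xP_x=\lambda_xP_x\). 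Yet \(\sum_x\lambda_x=\Sigma=c_r^2A\) by \eqref{eq:CanonicalRatios}. For such data your intermediate target \(\sum_xg_x\operatorname{Tr}(P_xT)\le c_r\operatorname{Tr}T\) is false, since the left side is at least \(c_r^2\operatorname{Tr}T\). Only the one-sided constraint \(P_xZ_x=g_xP_x\) excludes these points, and your chain never uses it. So no arrangement of Cauchy--Schwarz steps applying \eqref{eq:CompressionLocal} to both \(Y_0\) and \(Y_1\) can beat \(c_r^2\). The linearity in \(p_x\sqrt{a_x}\) is not what produces the improvement.

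The missing idea is to compress only one side. Since \(P_xZ_x=g_xP_x\) holds from the left, for a unit vector \(v\) with \(p_x=\langle v|P_x|v\rangle\) you get \(\langle P_xv|Z_x|v\rangle=g_xp_x\). Testing block positivity on the pair \((P_xv,v)\) then gives \(g_xp_x\le\langle v|P_xY_{0,x}P_x|v\rangle^{1/2}\langle v|Y_{1,x}|v\rangle^{1/2}\). After Cauchy--Schwarz over \(x\), the uncompressed factor is exactly \(\sum_x\langle v|Y_{1,x}|v\rangle=1\). Lemma~\ref{lem:ProjectionResource}(i) bounds the compressed factor by \(c_r^2\), so \(\sum_xg_xP_x\preceq c_rI_B\) and \(s_x=g_x/c_r\) is feasible for \eqref{sdp:SemidefinitePackingPrimal}. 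This is the paper's proof. Pairing with an optimal \(T\) for the dual of \(A\) adds nothing, because expanding \(T\) in its eigenbasis just tests this operator inequality on eigenvectors.

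Your fallback has the same two-sided defect. With \(P_x\) on both sides of \(X_x\), normalization forces \(X_x=P_x\) in the rank-one family. The best value of \eqref{sdp:CqGammaPrimal} over such \(X_x\) is then \(\min\{\operatorname{Tr}T':T'\succeq P_x\ \forall x\}=\Sigma=c_r^2A\). A primal route does work with \(P_x\) on the left only. Take \(X_x=P_xT/\operatorname{Tr}(P_xT)\), \(T_1=T\), and \(T_0\succeq P_xTP_x\) for all \(x\) with \(\operatorname{Tr}T_0\le c_r^2\operatorname{Tr}T\); such a \(T_0\) exists by SDP duality applied to Lemma~\ref{lem:ProjectionResource}(i). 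The rescaling of Lemma~\ref{lem:GammaBalance} then gives value at most \(c_r\operatorname{Tr}T=c_rA\).
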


\begin{IEEEproof}
	The lower bound follows from \(A(K)=\Theta_{\NS}(K)\) and Corollary~\ref{cor:ThetaGamma}. We prove the upper bound.
	
	Let \((g_x,Y_{0,x},Y_{1,x},Z_x)_x\) be any feasible solution of \eqref{sdp:CqGammaDual}. Fix a unit vector \(v\), and set \(p_x:=\langle v|P_x|v\rangle\). Compressing the positive semidefinite block matrix to the pair of vectors \((P_xv,v)\) gives
	\[
	|\langle P_xv|Z_x|v\rangle|^2
	\leq
	\langle v|P_xY_{0,x}P_x|v\rangle
	\langle v|Y_{1,x}|v\rangle.
	\]
	Since \(P_xZ_x=g_xP_x\) and \(g_x\geq0\), the square root of the left-hand side is \(g_xp_x\). Summing over \(x\) and applying scalar Cauchy--Schwarz yields
	\begin{align*}
		\left\langle v\left|\sum_xg_xP_x\right|v\right\rangle
		&=\sum_xg_xp_x\leq
		\left[\sum_x\langle v|P_xY_{0,x}P_x|v\rangle\right]^{1/2}
		\left[\sum_x\langle v|Y_{1,x}|v\rangle\right]^{1/2}.
	\end{align*}
	The second factor equals \(1\). Applying Lemma~\ref{lem:ProjectionResource}(i) with \(Y_x=Y_{0,x}\), using \(\sum_xY_{0,x}=I_B\), gives
	\(\sum_xP_xY_{0,x}P_x\preceq c_r^2I_B\).
	Hence \(\langle v|\sum_xg_xP_x|v\rangle\leq c_r\). Since \(v\) is arbitrary,
	\(\sum_x(g_x/c_r)P_x\preceq I_B\). Therefore \(s_x:=g_x/c_r\) is feasible for the packing SDP \eqref{sdp:SemidefinitePackingPrimal}, and
	\(A(K)\geq\frac1{c_r}\sum_xg_x\).
	Taking the supremum over dual-feasible solutions of \(\Gamma\) proves \(\Gamma(K)\leq c_rA(K)\).
\end{IEEEproof}

\subsection{Rank-Sensitive Bounds on \texorpdfstring{$\Sigma/A$}{Sigma/A} and \texorpdfstring{$\Sigma/\Gamma$}{Sigma/Gamma}}

\begin{theorem}\label{thm:RankSensitiveSigmaA}
    Let \(k:=\min_x\operatorname{rank}P_x,\;c_{r,k}:=\frac{1+\sqrt r}{1+\sqrt k}.\)
	For every such classical--quantum Kraus operator space,
	\begin{equation}
		\frac{\Sigma(K)}{A(K)}
		\leq
		c_{r,k}^2
		=
		\left(\frac{1+\sqrt r}{1+\sqrt k}\right)^2.
		\label{eq:RankSensitiveSigmaA}
	\end{equation}
	For every pair \(1\leq k\leq r\), there exists a classical--quantum Kraus operator space attaining equality.
\end{theorem}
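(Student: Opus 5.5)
The upper bound will be proved by duality, mirroring the proof of Theorem~\ref{thm:GammaOverA} but using Lemma~\ref{lem:ProjectionResource}(ii) in place of part (i). Take any feasible solution \((Y_x,\lambda_x)_x\) of the dual program \eqref{sdp:CqSigmaDual} for \(\Sigma(K)\). The aim is to show that \(s_x:=\lambda_x/c_{r,k}^2\) is feasible for the packing SDP \eqref{sdp:SemidefinitePackingPrimal}, i.e.\ \(\sum_x\lambda_xP_x\preceq c_{r,k}^2I_B\). Strong duality for \eqref{sdp:CqSigmaDual} and for the packing SDP then gives \(\Sigma(K)=\sup\sum_x\lambda_x\leq c_{r,k}^2A(K)\).

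To verify this operator inequality, fix a unit vector \(v\in B'\). Everything lives in \(B'\), so we work in dimension \(r\) and may assume \(\sum_xY_x\preceq I_{B'}\) after compressing. For each \(x\), apply Lemma~\ref{lem:ProjectionResource}(ii) with \(P=P_x\), \(Y=Y_x\), \(\lambda=\lambda_x\); this is legitimate since \(\operatorname{rank}P_x\geq k\). It gives \(\lambda_xp_x\leq\alpha_{r,k}a_x+\beta_{r,k}b_x\). Summing over \(x\) and using \(\sum_xa_x\leq1\) and \(\sum_xb_x\leq r-1\), both from \(\sum_xY_x\preceq I\), yields
\[
\Bigl\langle v\Bigl|\sum_x\lambda_xP_x\Bigr|v\Bigr\rangle\leq\alpha_{r,k}+(r-1)\beta_{r,k}=\frac{(1+\sqrt r)^2+(r-1)(1+\sqrt r)}{\sqrt r(1+\sqrt k)^2}.
\]
The numerator factors as \((1+\sqrt r)(\sqrt r+r)=\sqrt r(1+\sqrt r)^2\), so the right-hand side is exactly \(c_{r,k}^2\). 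Since \(v\) is arbitrary, the inequality holds, and the bound \eqref{eq:RankSensitiveSigmaA} follows.

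For sharpness, an explicit family is needed for each \(1\leq k\leq r\). The equality analysis of Lemma~\ref{lem:ProjectionResource}(ii) suggests the construction: the slack vanishes when \(p=\langle v|P_x|v\rangle=\sqrt k(1+\sqrt r)/(\sqrt r(1+\sqrt k))\). So I would take a symmetric family of rank-\(k\) projections \(P_x\) in \(\mathbb C^r\), such as the orbit of one projection under a group acting irreducibly (for instance the full unitary group, discretized by a unitary design, or a Weyl--Heisenberg orbit), so that each \(P_x\) overlaps a reference vector \(v\) at this critical value. For \(k=1\) this should reduce to the pure-state family already mentioned in the introduction.

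By symmetry, \(A(K)\) can then be computed from the dual \(T_B=tI\), using \(\operatorname{Tr}(P_xT)=tk\geq1\), together with averaging, giving \(A=r/k\) or a similar value. For \(\Sigma\), one exhibits a dual solution of \eqref{sdp:CqSigmaDual} built from rank-one \(Y_x\) aligned with the extremal vectors \(u,w\) of the lemma, achieving \(c_{r,k}^2A\). Verifying this lower bound on \(\Sigma\) (equivalently, matching the primal \eqref{sdp:CqSigma} with a covariant \(T_B\)) is where I expect the main difficulty: one must choose the family so that both the packing value and the covering value are computable in closed form and their ratio hits \(c_{r,k}^2\) exactly, possibly only in a limit of larger families, which would then have to be realized by a finite design.
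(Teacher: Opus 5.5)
Your proof of the inequality \eqref{eq:RankSensitiveSigmaA} is correct and is the paper's argument. You apply Lemma~\ref{lem:ProjectionResource}(ii) blockwise to a feasible solution of \eqref{sdp:CqSigmaDual}, sum using $\sum_x a_x\le1$ and $\sum_x b_x\le r-1$, and rescale to a packing solution. Your compression to $B'$ is harmless, since the paper already works on $B'$.

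The gap is the sharpness clause, which is part of the statement. You do not prove it, and the constructions you suggest cannot attain equality.

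Suppose the $P_x$ form one orbit of a group acting irreducibly on $\mathbb{C}^r$ (a unitary design, a Weyl--Heisenberg orbit). Averaging a feasible $T$ over the group preserves feasibility and trace; in \eqref{sdp:CqSigma} you also transport the $F_x$ along the orbit. By Schur's lemma the averaged $T$ equals $tI$. Both the dual of $A$ and \eqref{sdp:CqSigma} then reduce to $tk\ge1$, so $A=\Sigma=r/k$ and the ratio is $1$. Such an orbit also cannot satisfy $\langle v|P_x|v\rangle=p$ for all $x$. The group average of these overlaps is $k/r$, whereas $p=\sqrt k(1+\sqrt r)/(\sqrt r(1+\sqrt k))>k/r$ for $k<r$.

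Your guess $A=r/k$ already rules out extremality. For any family with all ranks at least $k$, $T=I/k$ and $F_x=P_x/\operatorname{rank}P_x$ are feasible in \eqref{sdp:CqSigma}, so $\Sigma\le r/k$. Then $A=r/k$ would force $\Sigma/A\le1<c_{r,k}^2$.

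Your proposed dual certificate for $\Sigma$ with rank-one $Y_x$ is infeasible when $k\ge2$: $P_xY_xP_x\succeq\lambda_xP_x$ with $\lambda_x>0$ forces $\operatorname{rank}Y_x\ge k$. Finally, a limit of larger families would not give the exact attainment the statement asserts.

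The missing idea is to keep $v$ fixed and symmetrize only on $v^\perp$. This generalizes the $k=1$ pure-state family of Proposition~\ref{prop:CanonicalPureFamily}, which you correctly pointed to. The paper (Proposition~\ref{prop:RankKFamily}) takes $P_0=|u\rangle\langle u|+P_E$, where:
\begin{itemize}
\item $u=\sqrt p\,v+\sqrt{1-p}\,w$ with $w\perp v$;
\item $E\subset v^\perp$ is a $(k-1)$-dimensional subspace orthogonal to $w$.
\end{itemize}
The family is the orbit of $P_0$ under signed permutations of an orthonormal basis of $v^\perp$. Then $T=|v\rangle\langle v|/p$ in the dual of $A$, together with uniform packing weights (feasible because $pr\ge k$), gives $A=1/p$.

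The value of $\Sigma$ is computed on the primal side. Symmetrize to $T=\mu|v\rangle\langle v|+\nu(I-|v\rangle\langle v|)$, after ruling out $\nu=0$ and the non-optimal case $\mu=0$. The $F_g$-constraints are then satisfiable if and only if $(k-1)\nu+[p/\mu+(1-p)/\nu]^{-1}\ge1$. Minimize $\nu(z+r-1)$ over $z=\mu/\nu$; the minimizer is $z=1+\sqrt r$. This gives $\Sigma=\sqrt r(1+\sqrt r)/(\sqrt k(1+\sqrt k))$, and hence $\Sigma/A=c_{r,k}^2$ exactly with a finite family. The case $k=r$ is trivial with every $P_x=I_B$.
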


\begin{IEEEproof}
	Let \((\lambda_x,Y_x)_x\) be any feasible solution of the dual SDP \eqref{sdp:CqSigmaDual}. Fix a unit vector \(v\), and for every \(x\) define
	\(
	p_x:=\langle v|P_x|v\rangle,
	\;
	a_x:=\langle v|Y_x|v\rangle,
	\;
	b_x:=\operatorname{Tr}\!\left[(I-|v\rangle\langle v|)Y_x\right].
	\)
	Lemma~\ref{lem:ProjectionResource}(ii) gives
	\(\lambda_xp_x\leq\alpha_{r,k}a_x+\beta_{r,k}b_x\). The constraint \(\sum_xY_x\preceq I_B\) implies
	\(
	\sum_xa_x\leq1,
	\;
	\sum_xb_x\leq r-1.
	\)
	Therefore, 
	\begin{align*}
		\left\langle v\left|\sum_x\lambda_xP_x\right|v\right\rangle
		&\leq\alpha_{r,k}+(r-1)\beta_{r,k}=\left(\frac{1+\sqrt r}{1+\sqrt k}\right)^2
		=c_{r,k}^2.
	\end{align*}
	Since \(v\) is arbitrary,
	\[\sum_x\lambda_xP_x\preceq c_{r,k}^2I_B.\] 
    Thus \(s_x:=\lambda_x/c_{r,k}^2\) is feasible for the packing primal, so
	\(
	A(K)\geq\frac1{c_{r,k}^2}\sum_x\lambda_x.
	\)
	Taking the supremum over \eqref{sdp:CqSigmaDual} proves \eqref{eq:RankSensitiveSigmaA}. The sharpness construction is given in Proposition~\ref{prop:RankKFamily}.
\end{IEEEproof}

\begin{theorem}\label{thm:RankSensitiveSigmaGamma}
	Under the same assumptions,
	\begin{equation}
		\frac{\Sigma(K)}{\Gamma(K)}
		\leq
		c_{r,k}
		=
		\frac{1+\sqrt r}{1+\sqrt k}.
		\label{eq:RankSensitiveSigmaGamma}
	\end{equation}
\end{theorem}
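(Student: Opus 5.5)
The plan is to compare the dual programs: start from an arbitrary feasible solution \((\lambda_x,Y_x)_x\) of the dual \eqref{sdp:CqSigmaDual} of \(\Sigma\), and build from it a feasible solution of the dual \eqref{sdp:CqGammaDual} of \(\Gamma\) with objective value \(\sum_x\lambda_x/c_{r,k}\). Taking the supremum and using strong duality (Proposition~\ref{prop:CqReductions}) then gives \(\Gamma(K)\geq\Sigma(K)/c_{r,k}\). The one nontrivial input is the bound \(\sum_x\lambda_xP_x\preceq c_{r,k}^2I_B\), which was established inside the proof of Theorem~\ref{thm:RankSensitiveSigmaA} through Lemma~\ref{lem:ProjectionResource}(ii). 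I will reuse that bound verbatim.

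\textbf{Construction for a fixed \(x\).} Discard every \(x\) with \(\lambda_x=0\) by setting \(g_x=0\) and \(Z_x=0\). For \(\lambda_x>0\), the constraint \(P_xY_xP_x\succeq\lambda_xP_x\) makes \(P_xY_xP_x\) invertible on \(\operatorname{ran}P_x\). Write \(G_x:=(P_xY_xP_x)^+\), so that \(G_x\preceq\lambda_x^{-1}P_x\) and \(P_xY_xP_xG_x=P_x\). Define
\[
Z_x:=\lambda_xY_xP_xG_x,\qquad W_x:=\lambda_x^2G_x\preceq\lambda_xP_x .
\]
Then \(P_xZ_x=\lambda_xP_x\). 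The block matrix \(\begin{psmallmatrix}Y_x&Z_x\\Z_x^\dagger&W_x\end{psmallmatrix}\) is the Gram matrix of the column operator with entries \(Y_x^{1/2}\) and \(\lambda_xY_x^{1/2}P_xG_x\). The lower-right entry of that Gram matrix is \(\lambda_x^2G_xP_xY_xP_xG_x=\lambda_x^2G_x\), so the block matrix is positive semidefinite.

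\textbf{Rescaling and padding.} Set \(Y_{0,x}:=Y_x\) and \(Y_{1,x}:=W_x/c_{r,k}^2\), and rescale the off-diagonal block to \(Z_x/c_{r,k}\). This is a congruence by \(\operatorname{diag}(I,c_{r,k}^{-1}I)\), so block positivity is preserved, and \(P_x(Z_x/c_{r,k})=g_xP_x\) with \(g_x:=\lambda_x/c_{r,k}\). We have \(\sum_xY_{0,x}\preceq I_B\) by dual feasibility for \(\Sigma\). We also have \(\sum_xY_{1,x}\preceq c_{r,k}^{-2}\sum_x\lambda_xP_x\preceq I_B\) by the bound quoted above. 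To obtain the required equalities \(\sum_xY_{0,x}=\sum_xY_{1,x}=I_B\), add the positive semidefinite slacks \(I_B-\sum_xY_{0,x}\) and \(I_B-\sum_xY_{1,x}\) to the diagonal blocks of a single index \(x_0\). Adding positive semidefinite operators to the diagonal blocks keeps the block matrix positive semidefinite and leaves \(Z_{x_0}\) unchanged.

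The result is feasible for \eqref{sdp:CqGammaDual} with value \(\sum_xg_x=\sum_x\lambda_x/c_{r,k}\), which proves \eqref{eq:RankSensitiveSigmaGamma}. The main obstacle is the choice of \(Z_x\). The naive choice \(Z_x=g_xP_x\) would require the operator inequality \(Y_x\succeq cP_x\), whereas only the compressed inequality \(P_xY_xP_x\succeq\lambda_xP_x\) is available. The pseudoinverse-based \(Z_x\) above is what turns the compressed inequality into a valid block certificate. One point to check carefully is that \(W_x\) is dominated by \(\lambda_xP_x\) rather than by anything larger, since this is exactly what allows the rank-sensitive packing bound to apply with the factor \(c_{r,k}^2\).
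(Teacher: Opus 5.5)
Your proposal is correct and is essentially the paper's proof: it uses the same pseudoinverse-based off-diagonal block $Z_x=\frac{\lambda_x}{c_{r,k}}Y_xP_x(P_xY_xP_x)^+$, the same lower-right block $\frac{\lambda_x^2}{c_{r,k}^2}(P_xY_xP_x)^+\preceq\frac{\lambda_x}{c_{r,k}^2}P_x$, the same reuse of $\sum_x\lambda_xP_x\preceq c_{r,k}^2I_B$ from the proof of Theorem~\ref{thm:RankSensitiveSigmaA}, and the same padding of the normalization deficits onto a single index. The only cosmetic difference is that you certify block positivity by an explicit Gram factorization, whereas the paper invokes the generalized Schur complement with $Y_x^+$; the two arguments are equivalent here.
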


\begin{IEEEproof}
	Let \((\lambda_x,Y_x)_x\) again be any dual-feasible solution for \(\Sigma\), and define
	\(
	C_x:=P_xY_xP_x\succeq\lambda_xP_x.
	\)
	The proof of Theorem~\ref{thm:RankSensitiveSigmaA} gives the operator inequality
	\begin{equation}
		\sum_x\lambda_xP_x\preceq c_{r,k}^2I_B.
		\label{eq:LambdaPCrk}
	\end{equation}
	If \(\lambda_x>0\), then \(C_x\) is strictly positive on \(\operatorname{ran}P_x\), and hence
	\(C_x^+\preceq\lambda_x^{-1}P_x\). Define
	\(
	s_x:=\frac{\lambda_x}{c_{r,k}},
	\;
	D_x:=\frac{\lambda_x^2}{c_{r,k}^2}C_x^+,
	\;
	Z_x:=\frac{\lambda_x}{c_{r,k}}Y_xP_xC_x^+.
	\)
	If \(\lambda_x=0\), set \(s_x=0\) and \(D_x=Z_x=0\).
	
	For \(\lambda_x>0\), \(C_xC_x^+=P_x\), and therefore
	\(P_xZ_x=s_xP_x\). Moreover,
	\(\operatorname{ran}Z_x\subseteq\operatorname{ran}Y_x\), and
	\begin{align*}
		Z_x^{\dagger}Y_x^+Z_x
		&=\frac{\lambda_x^2}{c_{r,k}^2}
		C_x^+P_xY_xY_x^+Y_xP_xC_x^+=\frac{\lambda_x^2}{c_{r,k}^2}C_x^+C_xC_x^+
		=D_x.
	\end{align*}
	Thus the range condition and the Schur-complement inequality are satisfied, so the generalized Schur complement gives
	\(
	\begin{psmallmatrix}
		Y_x&Z_x\\
		Z_x^{\dagger}&D_x
	\end{psmallmatrix}
	\succeq0.
	\)
	Also,
	\(D_x\preceq(\lambda_x/c_{r,k}^2)P_x\), so \eqref{eq:LambdaPCrk} gives \(\sum_xD_x\preceq I_B\); the original dual solution satisfies \(\sum_xY_x\preceq I_B\).
	
	Appendix~\ref{app:CqReductions} shows that the two normalization equalities in \eqref{sdp:CqGammaDual} may equivalently be relaxed to
	\(\sum_xY_{0,x}\preceq I_B\) and
	\(\sum_xY_{1,x}\preceq I_B\): any deficits can be added to diagonal blocks without changing \(Z_x\) or the objective. Hence
	\((s_x,Y_x,D_x,Z_x)_x\) gives a feasible solution of the dual of \(\Gamma\), with objective value
	\(
	\sum_xs_x=\frac1{c_{r,k}}\sum_x\lambda_x.
	\)
	Taking the supremum over dual-feasible solutions for \(\Sigma\) yields
	\(\Gamma(K)\geq\Sigma(K)/c_{r,k}\), which is \eqref{eq:RankSensitiveSigmaGamma}.
\end{IEEEproof}

Since \(k\geq1\), we have \(c_{r,k}\leq c_r\). Theorems~\ref{thm:RankSensitiveSigmaA} and \ref{thm:RankSensitiveSigmaGamma} therefore give
\(\Sigma(K)\leq c_r\Gamma(K),\; \frac{\Sigma(K)}{A(K)}\leq c_r^2\).
Combining these bounds with Theorem~\ref{thm:GammaOverA} and the general inequality \(\Gamma(K)\leq\Sigma(K)\) yields
\begin{equation}
	A(K)\leq\Gamma(K)\leq c_rA(K),
	\quad
	\Gamma(K)\leq\Sigma(K)\leq c_r\Gamma(K).
	\label{eq:CqTwoSegments}
\end{equation}
The following pure-state family attains equality in both dimension-dependent bounds in \eqref{eq:CqTwoSegments}.

\subsection{An \texorpdfstring{$r$}{r}-Dimensional Pure-State Extremal Family}

For \(r=1\), all parameters equal \(1\), so assume \(r\geq2\).

\begin{proposition}\label{prop:CanonicalPureFamily}
	Set \(p:=(1+\sqrt r)/(2\sqrt r)\).
	Choose an orthonormal basis \(\{v,e_1,\ldots,e_{r-1}\}\) of an \(r\)-dimensional Hilbert space and define
	\begin{equation}
		|\psi_{j,\varepsilon}\rangle
		:=\sqrt p\,|v\rangle
		+\varepsilon\sqrt{1-p}\,|e_j\rangle,
		\quad
		P_{j,\varepsilon}:=|\psi_{j,\varepsilon}\rangle\langle\psi_{j,\varepsilon}|,
		\label{eq:CanonicalFamily}
	\end{equation}
	where \(j=1,\ldots,r-1\) and \(\varepsilon\in\{+1,-1\}\). The corresponding classical--quantum Kraus operator space satisfies
	\[
	A=\frac{2\sqrt r}{1+\sqrt r},
	\quad
	\Gamma=\sqrt r,
	\quad
	\Sigma=\frac{\sqrt r(1+\sqrt r)}2.
	\]
	Consequently,
	\begin{equation}
		\frac{\Gamma}{A}
		=
		\frac{\Sigma}{\Gamma}
		=
		\frac{1+\sqrt r}{2},
		\quad
		\frac{\Sigma}{A}
		=
		\left(\frac{1+\sqrt r}{2}\right)^2.
		\label{eq:CanonicalRatios}
	\end{equation}
\end{proposition}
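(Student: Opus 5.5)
The plan is to prove only two one-sided bounds by explicit feasible points, an upper bound on \(A\) and a lower bound on \(\Sigma\). The general inequalities already established then squeeze all three parameters. Since \(r\geq2\), the vectors \(|\psi_{j,\varepsilon}\rangle\) span the whole \(r\)-dimensional space, so the effective output dimension is exactly \(r\). Combining Theorem~\ref{thm:GammaOverA} with \eqref{eq:CqTwoSegments} gives
\[
\Sigma\leq c_r\Gamma\leq c_r^2A,
\qquad c_r=\tfrac{1+\sqrt r}{2}.
\]
It therefore suffices to show \(A\leq 2\sqrt r/(1+\sqrt r)\) and \(\Sigma\geq \sqrt r(1+\sqrt r)/2=c_r^2\cdot 2\sqrt r/(1+\sqrt r)\). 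Then every inequality in the chain is an equality, which yields \(A=2\sqrt r/(1+\sqrt r)\), \(\Gamma=\Sigma/c_r=\sqrt r\) and \(\Sigma=\sqrt r(1+\sqrt r)/2\), and hence the ratios \eqref{eq:CanonicalRatios}.

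\emph{Upper bound on \(A\).} I will use the dual packing program with \(T_B:=p^{-1}|v\rangle\langle v|\). Each projection satisfies \(\operatorname{Tr}(P_{j,\varepsilon}T_B)=|\langle v|\psi_{j,\varepsilon}\rangle|^2/p=1\), so \(T_B\) is feasible. Its objective value is \(1/p=2\sqrt r/(1+\sqrt r)\).

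\emph{Lower bound on \(\Sigma\).} I will use the dual program \eqref{sdp:CqSigmaDual} with rank-one operators \(Y_{j,\varepsilon}:=|\phi_{j,\varepsilon}\rangle\langle\phi_{j,\varepsilon}|\), where
\[
\phi_{j,\varepsilon}:=\alpha v+\varepsilon\beta e_j,\qquad
\alpha:=\bigl(2(r-1)\bigr)^{-1/2},\qquad \beta:=2^{-1/2}.
\]
\begin{itemize}
\item Summing over \(\varepsilon=\pm1\) cancels the cross terms. Summing over \(j\) then gives
\[
\sum_{j,\varepsilon}Y_{j,\varepsilon}=2(r-1)\alpha^2|v\rangle\langle v|+2\beta^2\sum_j|e_j\rangle\langle e_j|=I_B,
\]
so the normalization constraint holds.
\item Because \(P_{j,\varepsilon}\) is rank one, the constraint \(P_xY_xP_x\succeq\lambda_xP_x\) reduces to \(\lambda_{j,\varepsilon}\leq|\langle\psi_{j,\varepsilon}|\phi_{j,\varepsilon}\rangle|^2=(\sqrt p\,\alpha+\sqrt{1-p}\,\beta)^2\). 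I take \(\lambda_{j,\varepsilon}\) equal to this value.
\item The objective is \(2(r-1)\lambda=\bigl(\sqrt p+\sqrt{(r-1)(1-p)}\bigr)^2\).
\item Substituting \(p=(1+\sqrt r)/(2\sqrt r)\) and \(1-p=(\sqrt r-1)/(2\sqrt r)\) gives \(\sqrt{(r-1)(1-p)}=(\sqrt r-1)\sqrt{(1+\sqrt r)/(2\sqrt r)}\). The objective then equals \((1+\sqrt r)\,r/(2\sqrt r)=\sqrt r(1+\sqrt r)/2\), as required.
\end{itemize}

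The only real obstacle is choosing the dual witness for \(\Sigma\). The symmetric \(\pm\varepsilon\) pairing is what makes \(\sum Y_x\) diagonal. The weights \(\alpha,\beta\) are fixed by saturating \(\sum Y_x=I_B\) separately on \(v\) and on \(v^\perp\); this mirrors the equality case of Lemma~\ref{lem:ProjectionResource}(i), where \(\sum a_x=1\), \(\sum b_x=r-1\) and \(p=(1+\sqrt r)/(2\sqrt r)\) is the maximizer. Once the witness is fixed, the remaining verification is the arithmetic above.
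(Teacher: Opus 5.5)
Your proof is correct, and it takes a more economical route than the paper's.

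The paper computes $A$ and $\Sigma$ directly, in both directions:
\begin{itemize}
\item For $A$, it combines the bound from evaluating at $v$ (which is exactly your dual witness $T_B=p^{-1}|v\rangle\langle v|$) with explicit packing weights $s_{j,\varepsilon}=1/[2(r-1)p]$.
\item For $\Sigma$, it symmetrizes a primal covering operator $T$ under sign changes and permutations of the $e_j$. A Schur complement then reduces $P_{j,\varepsilon}\preceq T$ to $p/a+(1-p)/b\leq1$, and Cauchy--Schwarz gives $\operatorname{Tr}T\geq rp$. The operator $T=\frac{1+\sqrt r}{2}|v\rangle\langle v|+\frac12(I-|v\rangle\langle v|)$ attains this bound.
\item Only $\Gamma$ is obtained by the sandwich $\Sigma/c_r\leq\Gamma\leq c_rA$.
\end{itemize}

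You instead apply the chain $\Sigma\leq c_r\Gamma\leq c_r^2A$ to all three parameters at once. Because $c_r^2/p=rp$, only the two one-sided bounds $A\leq1/p$ and $\Sigma\geq rp$ are needed. You obtain the second from an explicit certificate for \eqref{sdp:CqSigmaDual}, namely the equality configuration of Lemma~\ref{lem:ProjectionResource}(i), rather than from the symmetrization argument. Your verifications of $\sum_{j,\varepsilon}Y_{j,\varepsilon}=I_B$ and of the objective value $(\sqrt p+\sqrt{(r-1)(1-p)})^2=rp$ are correct. Only weak duality is needed for your two certificates, and you avoid the averaging and Schur-complement reduction entirely.

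The price is that your values of $A$ and $\Sigma$ depend on the general Theorems~\ref{thm:GammaOverA} and~\ref{thm:RankSensitiveSigmaGamma}. They also depend on the effective output dimension being exactly $r$, which you correctly check via $p<1$ for $r\geq2$. The paper's computation of $A$ and $\Sigma$ is self-contained and produces explicit optimal packing weights and an explicit optimal covering operator.
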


\begin{IEEEproof}
	We first compute \(A\). Any packing weights \(s_{j,\varepsilon}\) satisfy
	\(\sum_{j,\varepsilon}s_{j,\varepsilon}P_{j,\varepsilon}\preceq I\). Taking the expectation in \(v\) gives
	\(p\sum_{j,\varepsilon}s_{j,\varepsilon}\leq1\),
	so \(A\leq1/p\).
	
	Conversely, set \(s_{j,\varepsilon}=1/[2(r-1)p]\).
	Summing over the two signs cancels the cross terms, giving
	\[
	\sum_{j,\varepsilon}s_{j,\varepsilon}P_{j,\varepsilon}
	=
	|v\rangle\langle v|
	+
	\frac{1-p}{p(r-1)}(I-|v\rangle\langle v|).
	\]
	Since \((1-p)/[p(r-1)]=1/(1+\sqrt r)^2\leq1\),
	these weights are feasible for \eqref{sdp:SemidefinitePackingPrimal}
	and have total weight \(1/p\). Hence
	\[
	A=\frac1p=\frac{2\sqrt r}{1+\sqrt r}.
	\]
	
	We next compute \(\Sigma\). Since each \(P_{j,\varepsilon}\) has rank one, the conditions \(0\preceq F_{j,\varepsilon}\preceq P_{j,\varepsilon}\) and \(\operatorname{Tr}F_{j,\varepsilon}=1\) force \(F_{j,\varepsilon}=P_{j,\varepsilon}\). Thus the constraints in \eqref{sdp:CqSigma} reduce to \(P_{j,\varepsilon}\preceq T\) for all \(j,\varepsilon\).
	
	The family is invariant under sign changes and permutations of \(e_1,\ldots,e_{r-1}\), with \(v\) fixed. Averaging a feasible \(T\) over these transformations preserves feasibility and trace, eliminates the off-diagonal entries, and makes the diagonal entries on \(v^\perp\) equal. Hence we may take \(T=a|v\rangle\langle v|+b(I-|v\rangle\langle v|)\), with \(a,b>0\) and objective value \(a+(r-1)b\).
	For \(T\succ0\), the Schur complement shows that
	\(P_{j,\varepsilon}\preceq T\) is equivalent to
	\(\langle\psi_{j,\varepsilon}|T^{-1}|\psi_{j,\varepsilon}\rangle\leq1\).
	Using \eqref{eq:CanonicalFamily} and the above form of \(T\), this becomes
	\begin{equation}
		\frac pa+\frac{1-p}{b}\leq1.
		\label{eq:CanonicalCoverConstraint}
	\end{equation}
	
	Cauchy--Schwarz gives
	\begin{align*}
		[a+(r-1)b]
		\left[\frac pa+\frac{1-p}{b}\right]
		\geq
		\left[\sqrt p+\sqrt{(r-1)(1-p)}\right]^2.
	\end{align*}
	Since \(p=(1+\sqrt r)/(2\sqrt r)\),
	\((r-1)(1-p)=(\sqrt r-1)^2p\), and the right-hand side equals
	\(rp=\sqrt r(1+\sqrt r)/2\). Thus every feasible \(T\) satisfies
	\(\operatorname{Tr}T\geq\sqrt r(1+\sqrt r)/2\).
	
	Taking \(a=\frac{1+\sqrt r}{2}\) and \(b=\frac12\)
	satisfies \eqref{eq:CanonicalCoverConstraint} with equality
	and gives
	\(\operatorname{Tr}T=a+(r-1)b=\sqrt r(1+\sqrt r)/2\).
	This attains the lower bound, so
	\(\Sigma=\sqrt r(1+\sqrt r)/2\). Finally, Theorem~\ref{thm:GammaOverA} and \eqref{eq:CqTwoSegments} give
	\(\Gamma\leq c_rA=\sqrt r,\; \Gamma\geq\frac{\Sigma}{c_r}=\sqrt r\).
	Hence \(\Gamma=\sqrt r\), and \eqref{eq:CanonicalRatios} follows.
\end{IEEEproof}

\subsection{A Rank-\texorpdfstring{$k$}{k} Extremal Family}

\begin{proposition}\label{prop:RankKFamily}
	For every pair of integers \(1\leq k\leq r\), there exists a classical--quantum Kraus operator space with effective output dimension \(r\) and all output support projections of rank \(k\) such that
	\[
	\frac{\Sigma(K)}{A(K)}
	=
	\left(\frac{1+\sqrt r}{1+\sqrt k}\right)^2.
	\]
	Hence the constant in \eqref{eq:RankSensitiveSigmaA} is optimal for every \((r,k)\).
\end{proposition}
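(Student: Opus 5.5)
For k=r every P_x is the identity on B' and Σ=A=1, so the claim is trivial. I would assume 1≤k≤r-1 and generalize the pure-state family of Proposition~\ref{prop:CanonicalPureFamily}, tuned so that every inequality in the proofs of Lemma~\ref{lem:ProjectionResource}(ii) and Theorem~\ref{thm:RankSensitiveSigmaA} is tight. Set
\[
p:=\frac{\sqrt k(1+\sqrt r)}{\sqrt r(1+\sqrt k)},
\]
which is the zero of \eqref{eq:LocalSlack} and lies in (0,1) for k<r. Fix an orthonormal basis \(\{v,e_1,\ldots,e_{r-1}\}\). Index the inputs by x=(S,\sigma), where S\subseteq\{1,\ldots,r-1\}, |S|=k, and \sigma\in\{\pm1\}^S. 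Put
\[
w_x:=k^{-1/2}\sum_{j\in S}\sigma_je_j,\qquad u_x:=\sqrt p\,v+\sqrt{1-p}\,w_x ,
\]
and let P_x be the projection onto \(\operatorname{span}\{u_x\}\oplus\bigl(\operatorname{span}\{e_j:j\in S\}\ominus w_x\bigr)\). Then every P_x has rank k, \(\langle v|P_x|v\rangle=p\), and the P_x span B', so r is the effective output dimension. The family is invariant under the group G generated by sign flips and permutations of the e_j, with v fixed.

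\textbf{Upper bound on A.} Taking the v-expectation of \(\sum_xs_xP_x\preceq I\) gives \(A\leq1/p\), exactly as in Proposition~\ref{prop:CanonicalPureFamily}. I would also check that the uniform weights attain 1/p. By G-averaging, \(\sum_xP_x\) has the form \(\kappa_1|v\rangle\langle v|+\kappa_2(I-|v\rangle\langle v|)\), so this reduces to a scalar inequality in p, k, r. This check is not needed for the ratio, but it pins down A.

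\textbf{Lower bound on Σ.} This is the main step. I would use the dual \eqref{sdp:CqSigmaDual} with a G-covariant ansatz. For each x let \(Y_x=\mu\bigl(|z_x\rangle\langle z_x|+\tau\,(P_{S}-|w_x\rangle\langle w_x|)\bigr)\), where \(z_x=\gamma v+\delta w_x\) and \(P_S\) is the projection onto \(\operatorname{span}\{e_j:j\in S\}\). The parameters are chosen to reproduce the equality conditions in the proof of Lemma~\ref{lem:ProjectionResource}(ii):
\begin{itemize}
\item \(|\langle v|Y|w\rangle|=\sqrt{ac}\), which holds for a rank-one block on \(\operatorname{span}\{v,w_x\}\);
\item the weighted Cauchy--Schwarz step is tight, which fixes the ratio γ:δ;
\item \(PYP\) is exactly λ on the k-1 directions of \(\operatorname{ran}P_x\) orthogonal to u_x.
\end{itemize}
Consistency of these conditions is what forces the above value of p. By G-averaging, \(\sum_xY_x\) again has the form \(c|v\rangle\langle v|+d(I-|v\rangle\langle v|)\). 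I would choose μ so that c=1 and d=1, which saturates both \(\sum_xa_x\leq1\) and \(\sum_xb_x\leq r-1\). The constraint \(P_xY_xP_x\succeq\lambda P_x\) then holds with a common λ, and tightness of every step gives \(\sum_x\lambda_x=c_{r,k}^2/p\).

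Combining, \(\Sigma\geq c_{r,k}^2/p\geq c_{r,k}^2A\). Theorem~\ref{thm:RankSensitiveSigmaA} gives the reverse inequality, so equality holds. The obstacle I expect is the actual tuning of (γ,δ,τ,μ): one must verify \(P_xY_xP_x\succeq\lambda P_x\) on all of \(\operatorname{ran}P_x\), which is a 2×2 check on \(\operatorname{span}\{u_x\}\) plus a scalar check on the complement, and confirm \(\sum_xY_x=I_B\). If this ansatz fails, my fallback is to compute Σ directly from \eqref{sdp:CqSigma}. By symmetry I can take \(T=a|v\rangle\langle v|+b(I-|v\rangle\langle v|)\) and optimize over G_x-invariant F_x. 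Checking \(F_x\preceq T\) then reduces to a small Schur-complement computation analogous to \eqref{eq:CanonicalCoverConstraint}, followed by a Cauchy--Schwarz minimization of a+(r-1)b.
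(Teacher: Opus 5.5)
Your proposal is correct. It takes a genuinely different route for the key step.

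\textbf{The paper's route.} The paper uses a signed-permutation orbit of \(|u\rangle\langle u|+P_E\), with \(w=e_1\) and \(E=\spann\{e_2,\ldots,e_k\}\). It computes \(\Sigma\) exactly from the primal \eqref{sdp:CqSigma}. It symmetrizes \(T\) and the \(F_g\), excludes the boundary cases \(\nu=0\) and \(\mu=0\), and evaluates \(m_g(T)=(k-1)\nu+[p/\mu+(1-p)/\nu]^{-1}\). It then minimizes a one-variable function by differentiation, which is essentially your fallback. It also checks \(A=1/p\) with the uniform packing.

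\textbf{Your family.} It has the same geometry: \(u_x=\sqrt p\,v+\sqrt{1-p}\,w_x\), \(E_x\subset v^\perp\), \(E_x\perp w_x\), \(\dim E_x=k-1\). The two constructions are therefore interchangeable.

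\textbf{Closing the tuning.} The tuning you flag does close, and more easily than you expect. Since \(z_x\perp E_x\), the operator \(P_xY_xP_x=\mu\bigl(\langle u_x|z_x\rangle^2|u_x\rangle\langle u_x|+\tau P_{E_x}\bigr)\) is block diagonal. So only two scalar checks arise, not a \(2\times2\) one. With \(N\) inputs, take:
\begin{itemize}
\item \(\mu=1/N\) and \(\gamma=1\);
\item \(\delta^2=(1+\sqrt r)(\sqrt r-\sqrt k)/\sqrt k\), which is the Cauchy--Schwarz equality ratio \(\delta/\gamma=(1+\sqrt r)\sqrt{(1-p)/p}\);
\item \(\tau=\sqrt r(1+\sqrt r)/[\sqrt k(1+\sqrt k)]\).
\end{itemize}
Then \((\sqrt p+\sqrt{1-p}\,\delta)^2=\tau\) and \(\delta^2+(k-1)\tau=r-1\). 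Hence \(\sum_xY_x=I_B\) and \(\sum_x\lambda_x=\tau=c_{r,k}^2/p\).

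One correction: a single scale \(\mu\) cannot enforce both \(c=1\) and \(d=1\). The second equality is the identity \(\delta^2+(k-1)\tau=r-1\), and that is exactly where the value of \(p\) is used.

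\textbf{Concluding the argument.} Weak duality suffices. Combine it with \(A\le1/p\) and Theorem~\ref{thm:RankSensitiveSigmaA}. The chain \(c_{r,k}^2\ge\Sigma/A\ge p\,\Sigma\ge c_{r,k}^2\) then forces both \(\Sigma=c_{r,k}^2/p\) and \(A=1/p\). So you get both exact values without the packing check, the boundary cases, or the calculus.

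\textbf{What each route buys.} Your certificate also explains why the family is extremal: at \(v\) it saturates every inequality in Lemma~\ref{lem:ProjectionResource}(ii) and in the proof of Theorem~\ref{thm:RankSensitiveSigmaA}. The paper's route is independent of Theorem~\ref{thm:RankSensitiveSigmaA}. It also exhibits the optimal covering operator explicitly, with \(\mu/\nu=1+\sqrt r\).
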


\begin{IEEEproof}
	Assume first that \(1\leq k<r\). Set
	\(p:=\sqrt k(1+\sqrt r)/[\sqrt r(1+\sqrt k)]\).
	Choose a unit vector \(v\in B\), set \(W:=v^\perp\), and let \(Q_v:=I-|v\rangle\langle v|\).
    Then \(B=\mathbb Cv\oplus W\) and \(\dim W=r-1\). Fix an orthonormal basis \(e_1,\ldots,e_{r-1}\) of \(W\), set \(w:=e_1\), and let
	\(E:=\spann\{e_2,\ldots,e_k\}\),
	with the convention \(E=\{0\}\) when \(k=1\). Let \(P_E\) denote the orthogonal projection onto \(E\), and define
	\(u:=\sqrt p\,v+\sqrt{1-p}\,w,\; P_0:=|u\rangle\langle u|+P_E\).
	Then \(\operatorname{rank}P_0=k\).
	
	Let \(\mathcal G\) be the finite signed-permutation group on \(W\) with respect to the basis \(\{e_j\}\): every \(U_g\), \(g\in\mathcal G\), permutes the basis vectors and independently multiplies them by signs \(\pm1\). Extend it to \(B\) by \(\widetilde U_g:=1\oplus U_g\), and take
	\(P_g:=\widetilde U_gP_0\widetilde U_g^{\dagger}\)
	as the output support projections. For each \(g\in\mathcal G\), set \(w_g:=U_gw\) and \(E_g:=U_gE\). Averaging over the independent sign changes eliminates the off-diagonal entries in the basis \(\{v,e_1,\ldots,e_{r-1}\}\), while averaging over permutations makes the diagonal entries on \(W\) equal. Thus
	\[
	\overline P
	:=\frac1{|\mathcal G|}\sum_{g\in\mathcal G}P_g
	=
	p|v\rangle\langle v|
	+
	\frac{k-p}{r-1}Q_v.
	\]
	For every \(j\), the orbit of \(u\) under \(\{\widetilde U_g:g\in\mathcal G\}\) contains both \(\sqrt p\,v+\sqrt{1-p}\,e_j\) and \(\sqrt p\,v-\sqrt{1-p}\,e_j\). Their sum and difference generate the directions \(v\) and \(e_j\), respectively. Hence the ranges of the \(P_g\) span \(B\), so the effective output dimension is \(r\).
	
	For \(A\), taking the expectation in \(v\) gives \(A\leq1/p\).
	Conversely, set \(s_g=1/(p|\mathcal G|)\). Then
	\[
	\sum_{g\in\mathcal G}s_gP_g
	=
	\frac{\overline P}{p}
	=
	|v\rangle\langle v|
	+
	\frac{k-p}{p(r-1)}Q_v.
	\]
	These weights are feasible exactly when \(pr\geq k\).
	For the chosen value of \(p\), this is equivalent to
	\(\sqrt r\geq\sqrt k\), which holds.
	Their total weight is \(1/p\), so
	\begin{equation}
		A=\frac1p=\frac{\sqrt r(1+\sqrt k)}{\sqrt k(1+\sqrt r)}.
		\label{eq:RankKA}
	\end{equation}
	
	We now compute \(\Sigma\).
	Let \((T,\{F_g\}_{g\in\mathcal G})\) be feasible for
	\eqref{sdp:CqSigma}. Averaging \(T\) over the group preserves
	its trace. For each \(\ell\in\mathcal G\), average
	\(\widetilde U_gF_{g^{-1}\ell}\widetilde U_g^\dagger\)
	over \(g\in\mathcal G\). The resulting operator is supported
	on \(P_\ell\), has trace \(1\), and is dominated by the
	averaged \(T\). Hence we may take
	\(
	T=\mu|v\rangle\langle v|+\nu Q_v,
	\; \mu,\nu\geq0.
	\)
	Feasibility forces \(\nu>0\). Indeed, when \(\nu=0\), the support of \(T\) is contained in \(\mathbb Cv\), whereas \(\operatorname{ran}P_g\cap\mathbb Cv=\{0\}\) because \(p<1\); hence no trace-one positive semidefinite operator supported on \(P_g\) can be dominated by \(T\).
	
	The boundary case \(\mu=0\) can occur when \(k>1\), but it is never optimal. If \(\mu=0\), then every operator \(F\) satisfying \(0\preceq F\preceq T=\nu Q_v\) is supported on \(W\). Since \(\operatorname{ran}P_g\cap W=E_g\), any feasible \(F_g\) is supported on \(E_g\), and therefore has trace at most \((k-1)\nu\). Thus feasibility requires \(\nu\geq1/(k-1)\), giving
	\(\operatorname{Tr}T\geq\frac{r-1}{k-1} \geq \frac{\sqrt r(1+\sqrt r)}{\sqrt k(1+\sqrt k)}\),
	where the second inequality is equivalent to \(\sqrt r\geq\sqrt k\). For \(k=1\), the case \(\mu=0\) is infeasible. Hence it remains only to optimize over \(\mu>0\) and \(\nu>0\).
	
	For each \(g\in\mathcal G\), write
	\(
	P_g=|u_g\rangle\langle u_g|+P_{E_g},
	\)
	where
	\(u_g=\sqrt p\,v+\sqrt{1-p}\,w_g\),
	\(E_g\subset W\), \(E_g\perp w_g\), and \(\dim E_g=k-1\). Define
	\[
	m_g(T):=
	\max\{\operatorname{Tr}F:
	0\preceq F\preceq T,\ F=P_gFP_g\}.
	\]
	Since \(T\) is invariant under unitaries acting on \(E_g\) and fixing \(E_g^\perp\), averaging an optimizer over these unitaries allows us to take
	\(
	F=cP_{E_g}+t|u_g\rangle\langle u_g|.
	\)
	We may assume \(c\leq\nu\), while the rank-one Schur complement gives
	\(t\leq[p/\mu+(1-p)/\nu]^{-1}\). Hence
	\[
	m_g(T)
	=(k-1)\nu+\frac1{p/\mu+(1-p)/\nu}.
	\]
	The constraints on \(F_g\) in \eqref{sdp:CqSigma} are feasible if and only if \(m_g(T)\geq1\). Indeed, when \(m_g(T)\geq1\), normalize an optimizer by its trace. The resulting positive semidefinite operator remains dominated by \(T\) and is supported on \(P_g\). As a density operator on that support, it is also dominated by \(P_g\).
	
	Set \(z:=\mu/\nu>0\). For fixed \(z\), minimizing
	\(\operatorname{Tr}T=\nu(z+r-1)\) subject to \(m_g(T)\geq1\)
	forces \(m_g(T)=1\). Thus
	\begin{equation}
		\nu
		=
		\frac1{k-1+1/(p/z+1-p)},
		\label{eq:RankKBx}
	\end{equation}
	and the objective becomes
	\[
	f(z)
	:=
	\frac{z+r-1}{k-1+1/(p/z+1-p)}.
	\]
	Substituting the definition of \(p\) and differentiating gives
	\[
	f'(z)=
	\frac{(\sqrt r-\sqrt k)(z-\sqrt r-1)
		[\,r+\sqrt{rk}+\sqrt k-1+z(1+\sqrt r-\sqrt k)\,]}
	{\sqrt k(1+\sqrt k)[\,(\sqrt k-1)(1+\sqrt r)+z(1+\sqrt r-\sqrt k)\,]^2}.
	\]
	Since \(1\leq k<r\) and \(z>0\), both bracketed factors are positive.
	Therefore the sign of \(f'(z)\) is determined only by \(z-\sqrt r-1\). The function is strictly decreasing on \((0,1+\sqrt r)\), strictly increasing on \((1+\sqrt r,\infty)\), and has the unique global minimizer \(z=1+\sqrt r\).
	
	Substituting this value into \eqref{eq:RankKBx} gives
	\(
	\nu=\frac1{\sqrt k(1+\sqrt k)},
	\;
	\mu=\frac{1+\sqrt r}{\sqrt k(1+\sqrt k)}.
	\)
	Hence
	\[
	\Sigma=f(1+\sqrt r)
	=\frac{\sqrt r(1+\sqrt r)}{\sqrt k(1+\sqrt k)}.
	\]
	Together with \eqref{eq:RankKA},
	\[
	\frac{\Sigma}{A}
	=
	\left(\frac{1+\sqrt r}{1+\sqrt k}\right)^2.
	\]
	
	Finally, if \(k=r\), take all output supports equal to \(I_B\). The packing constraint forces the total weight to be at most \(1\), and equality is attainable, so \(A=1\). For every feasible solution of \eqref{sdp:CqSigma}, \(F_x\preceq T\) and \(\operatorname{Tr}F_x=1\) imply \(\operatorname{Tr}T\geq1\); taking \(F_x=T=\rho\) for every \(x\), where \(\rho\) is any density operator, attains this value. Hence \(\Sigma=1\). The right-hand side also equals \(1\), proving sharpness at the boundary \(k=r\).
\end{IEEEproof}

\subsection{Operational Communication--Simulation Gap}

Classical--quantum Kraus operator spaces are closed under tensor products. Since \(\Sigma=\Sigma^{-}\) in this setting, Theorem~\ref{thm:SigmaMinusMult} gives, for every positive integer \(n\),
\(\Sigma(K^{\otimes n}) = \Sigma^{-}(K^{\otimes n}) = \Sigma^{-}(K)^n = \Sigma(K)^n\).
Thus \(S_{0,\NS}(K)=\log_2\Sigma(K)\). Together with
\(C_{0,\NS}(K)=\log_2A(K)\), this gives the following corollary.

\begin{corollary}\label{cor:OperationalGap}
	Let \(K\) be a classical--quantum Kraus operator space with effective output dimension \(r\), and suppose that every output support projection has rank at least \(k\). Then
	\[
	0\leq
	S_{0,\NS}(K)-C_{0,\NS}(K)
	\leq
	2\log_2\frac{1+\sqrt r}{1+\sqrt k}.
	\]
	For every \(1\leq k\leq r\), the upper bound is attainable.
\end{corollary}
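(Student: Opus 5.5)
The plan is to reduce both quantities to single-letter SDP values and then invoke Theorem~\ref{thm:RankSensitiveSigmaA}. The paragraph preceding the statement already gives \(S_{0,\NS}(K)=\log_2\Sigma(K)\). That argument uses \(\Sigma=\Sigma^{-}\) for classical--quantum spaces (Proposition~\ref{prop:CqReductions}), closure of the classical--quantum class under tensor products, and exact multiplicativity of \(\Sigma^{-}\) (Theorem~\ref{thm:SigmaMinusMult}). On the communication side, \(C_{0,\NS}(K)=\log_2A(K)\) by the cited result \(\Theta_{\NS}=A\) of Duan and Winter. The gap is therefore exactly \(\log_2[\Sigma(K)/A(K)]\).

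For the lower bound \(0\leq S_{0,\NS}-C_{0,\NS}\), I will use \(A=\Theta_{\NS}\leq\Gamma\leq\Sigma^{-}=\Sigma\) from Theorem~\ref{thm:MainHierarchy}, or equivalently \eqref{C0nsleqS0nsleqSigma}.

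For the upper bound I need a small observation. Theorem~\ref{thm:RankSensitiveSigmaA} is stated with \(k=\min_x\operatorname{rank}P_x\), while the corollary only assumes every rank is at least \(k\). Let \(k_0:=\min_x\operatorname{rank}P_x\geq k\). Then \(\Sigma/A\leq[(1+\sqrt r)/(1+\sqrt{k_0})]^2\leq[(1+\sqrt r)/(1+\sqrt k)]^2\), since \(c_{r,k}\) is decreasing in \(k\). Alternatively, Lemma~\ref{lem:ProjectionResource}(ii) only uses \(\operatorname{rank}P\geq k\), so the proof of that theorem already applies verbatim. Taking \(\log_2\) gives the bound.

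For attainability, fix \(1\leq k\leq r\) and take the family of Proposition~\ref{prop:RankKFamily}. It has effective output dimension \(r\) and all ranks equal to \(k\), so it satisfies the hypothesis. It achieves \(\Sigma/A=c_{r,k}^2\), and by the identities above the gap equals \(2\log_2 c_{r,k}\). The only points needing care are that the operational identifications hold for this family as a classical--quantum space, which they do in general, and the monotonicity remark. I do not expect any real obstacle.
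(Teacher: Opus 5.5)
Your proposal is correct and follows essentially the same route as the paper. It reduces the gap to \(\log_2[\Sigma(K)/A(K)]\) via \(\Sigma(K^{\otimes n})=\Sigma(K)^n\) and \(\Theta_{\NS}=A\), bounds this by Theorem~\ref{thm:RankSensitiveSigmaA}, and gets attainability from Proposition~\ref{prop:RankKFamily}. Your monotonicity remark for the ``rank at least \(k\)'' hypothesis supplies a small step that the paper leaves implicit.
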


\begin{IEEEproof}
	The lower bound is the general fact that the zero-error communication rate cannot exceed the exact simulation cost. The upper bound follows by taking logarithms in Theorem~\ref{thm:RankSensitiveSigmaA}. The construction in Proposition~\ref{prop:RankKFamily} attains \(\Sigma/A=c_{r,k}^2\) for every \((r,k)\), and hence also attains the upper bound on the operational gap.
\end{IEEEproof}

\appendices

\section{Duality and Strong Duality for the Primal Formulation of \texorpdfstring{$\Sigma^{-}$}{Sigma-minus}}\label{app:SigmaMinusDual}
Recall the equivalent primal formulation \eqref{sdp:SigmaMinusPrimal} of \(\Sigma^{-}\):
\[
\begin{aligned}
	\Sigma^{-}(K)=\min\quad &\operatorname{Tr} T_B\\
	\mathrm{s.t.}\quad
	&V_{AB}\succeq0,\quad V_{AB}=PV_{AB}P,\\
	&\operatorname{Tr}_BV_{AB}\succeq I_A,\\
	&V_{AB}\preceq I_A\otimes T_B,\quad T_B\succeq0.
\end{aligned}
\]
Define
\(\mathcal C_P:=\{V_{AB}:V_{AB}\succeq0,\ V_{AB}=P_{AB}V_{AB}P_{AB}\}\).
The SDP can then be written as a minimization problem with objective \(\operatorname{Tr} T_B\) and constraints
\(V_{AB}\in\mathcal C_P,\; T_B\succeq0,\; \operatorname{Tr}_BV_{AB}-I_A\succeq0,\; I_A\otimes T_B-V_{AB}\succeq0\).
Introduce positive semidefinite multipliers \(R_A\) and \(X_{AB}\) for the last two constraints, respectively. The Lagrangian is
\begin{align*}
	\mathcal L(V,T;R,X)
	={}&\operatorname{Tr} T_B
	-\operatorname{Tr} R_A(\operatorname{Tr}_BV_{AB}-I_A)-\operatorname{Tr} X_{AB}(I_A\otimes T_B-V_{AB})\\
	={}&\operatorname{Tr} R_A
	+\operatorname{Tr}\!\left[(I_B-\operatorname{Tr}_AX_{AB})T_B\right]+\operatorname{Tr}\!\left[(X_{AB}-R_A\otimes I_B)V_{AB}\right].
\end{align*}
The infimum of \(\mathcal L\) over \(T_B\succeq0\) is finite if and only if \(\operatorname{Tr}_AX_{AB}\preceq I_B\). Its infimum over \(V\in\mathcal C_P\) is finite if and only if
\[
(X_{AB}-R_A\otimes I_B)\in \mathcal C_P^*
=
\{Y_{AB}:P_{AB}Y_{AB}P_{AB}\succeq0\}
\iff
P_{AB}(X_{AB}-R_A\otimes I_B)P_{AB}\succeq0.
\]
The remaining constant term is \(\operatorname{Tr} R_A\), which yields \eqref{sdp:SigmaMinusDual}.

To prove strong duality for the primal SDP of \(\Sigma^{-}(K)\), note that since \(K\) is a Kraus operator space, there exists a CPTP channel with Choi operator \(J_{AB}\) satisfying
\(
J_{AB}\succeq0,
\;
\supp(J_{AB})=P_{AB},
\;
\operatorname{Tr}_BJ_{AB}=I_A.
\)
Choose \(c>1\) and set \(V_{AB}=cJ_{AB}\). Then \(V_{AB}\) is strictly positive definite on \(\operatorname{ran}P_{AB}\), and hence lies in the relative interior of \(\mathcal C_P\), while \(\operatorname{Tr}_BV_{AB}-I_A=(c-1)I_A\succ0\). Taking \(T_B=\lambda I_B\) for sufficiently large \(\lambda\) gives \(I_A\otimes T_B-V_{AB}\succ0\). Thus the equivalent primal formulation satisfies the relative Slater condition. The finite-dimensional Slater theorem \cite[Theorem~2.2]{Watrous2009SDP} therefore proves strong duality for \eqref{sdp:SigmaMinusPrimal}.

\section{Duality and Strong Duality for \texorpdfstring{$\Gamma$}{Gamma}}\label{app:GammaDual}

In this section, we use the real Hilbert--Schmidt inner product
\(\langle R,S\rangle_{\mathbb R}:=\operatorname{Re}\operatorname{Tr}(R^{\dagger}S)\) on the space of complex matrices.
Recall the equivalent SDP formulation \eqref{sdp:GammaEq} of \(\Gamma\). We use this formulation to derive the dual \eqref{sdp:GammaDual}.

\begin{equation}
	\begin{aligned}
		\Gamma(K)=\min\quad &\frac12\operatorname{Tr}(T_{0,B}+T_{1,B})\\
		\mathrm{s.t.}\quad
		&PX_{AB}=X_{AB},\\
		&\operatorname{Tr}_BX_{AB}= I_A,\\
		&\begin{pmatrix}
			I_A\otimes T_{0,B}&X_{AB}\\
			X_{AB}^{\dagger}&I_A\otimes T_{1,B}
		\end{pmatrix}\succeq0.
	\end{aligned}
	\label{sdp:GammaEqAppendix}
\end{equation}

Introduce a block slack variable \(M\succeq0\), and use the shorthand
\(\Phi(T_{0,B},T_{1,B},X_{AB}) := \begin{psmallmatrix} I_A\otimes T_{0,B}&X_{AB}\\ X_{AB}^{\dagger}&I_A\otimes T_{1,B} \end{psmallmatrix}\).
Then \eqref{sdp:GammaEq} can be rewritten equivalently as
\begin{equation*}
	\begin{aligned}
		\Gamma(K)=\min\quad&
		\frac12\operatorname{Tr}(T_{0,B}+T_{1,B})\\
		\mathrm{s.t.}\quad&
		Q_{AB}X_{AB}=0,\\
		&\operatorname{Tr}_BX_{AB}=I_A,\\
		&M=\Phi(T_0,T_1,X),\\
		&M\succeq0.
	\end{aligned}
\end{equation*}
For the equality \(M=\Phi(T_0,T_1,X)\), introduce the Hermitian multiplier
\(\Lambda = \frac12 \begin{psmallmatrix} Y_{0,AB}&-Z_{AB}\\ -Z_{AB}^{\dagger}&Y_{1,AB} \end{psmallmatrix}\).
For the complex-linear equalities \(Q_{AB}X_{AB}=0\) and
\(\operatorname{Tr}_BX_{AB}=I_A\), introduce arbitrary complex multipliers
\(W_{AB}\) and \(H_A\), respectively. The corresponding Lagrangian is
\begin{align*}
	\mathcal L=
	\frac12\operatorname{Tr}(T_{0,B}+T_{1,B})
	+\langle\Lambda,M-\Phi(T_0,T_1,X)\rangle_{\mathbb R}
	+\langle W_{AB},Q_{AB}X_{AB}\rangle_{\mathbb R}
	+\langle H_A,I_A-\operatorname{Tr}_BX_{AB}\rangle_{\mathbb R}.
\end{align*}
Expanding this expression gives
\begin{align*}
	\mathcal L
	={}&
	\operatorname{Re}\operatorname{Tr} H_A
	+\langle\Lambda,M\rangle_{\mathbb R}
	\nonumber\\
	&+
	\frac12
	\operatorname{Tr}\!\left[
	\bigl(I_B-\operatorname{Tr}_A Y_{0,AB}\bigr)T_{0,B}
	\right]
	\nonumber\\
	&+
	\frac12
	\operatorname{Tr}\!\left[
	\bigl(I_B-\operatorname{Tr}_A Y_{1,AB}\bigr)T_{1,B}
	\right]
	\nonumber\\
	&+
	\left\langle
	Z_{AB}+Q_{AB}W_{AB}-H_A\otimes I_B,
	X_{AB}
	\right\rangle_{\mathbb R}.
\end{align*}

The infimum of \(\mathcal L\) over \(M\succeq0\) is finite if and only if \(\Lambda\succeq0\), namely,
\(
\begin{psmallmatrix}
	Y_{0,AB}&-Z_{AB}\\
	-Z_{AB}^{\dagger}&Y_{1,AB}
\end{psmallmatrix}
\succeq0.
\)
This is equivalent to
\(
\begin{psmallmatrix}
	Y_{0,AB}&Z_{AB}\\
	Z_{AB}^{\dagger}&Y_{1,AB}
\end{psmallmatrix}
\succeq0.
\)

The infimum over \(T_{0,B}\) and \(T_{1,B}\) is finite if and only if
\(\operatorname{Tr}_A Y_{0,AB}=I_B,\; \operatorname{Tr}_A Y_{1,AB}=I_B\).
For fixed multipliers, the infimum over \(X_{AB}\) is finite 
if and only if
\(Z_{AB}+Q_{AB}W_{AB} = H_A\otimes I_B\).
Left-multiplying by \(P_{AB}\) and using \(P_{AB}Q_{AB}=0\) gives the necessary condition
\(P_{AB}Z_{AB} = P_{AB}(H_A\otimes I_B)\).
Conversely, if this equality holds, then
\(
H_A\otimes I_B-Z_{AB}
=
Q_{AB}(H_A\otimes I_B-Z_{AB}),
\)
so choosing \(W_{AB}:=H_A\otimes I_B-Z_{AB}\) satisfies the original condition. Hence such a \(W_{AB}\) exists if and only if
\(P_{AB}Z_{AB} = P_{AB}(H_A\otimes I_B)\).

Define the corresponding dual function by
\[
g(\Lambda,W,H)
:=
\inf_{\substack{
		M\succeq0,\\
		T_{0,B},T_{1,B}\in\operatorname{Herm}(B),\\
		X_{AB}\in\mathcal L(A\otimes B)
}}
\mathcal L(M,T_0,T_1,X;\Lambda,W,H),
\]
where \(\Lambda\) is Hermitian and \(W_{AB}\) and \(H_A\) are arbitrary complex operators. Taking the supremum of the dual function over these multipliers yields the dual program below.

The preceding calculation gives
\begin{align*}
	g(\Lambda,W,H)
	=
	\begin{cases}
		\operatorname{Re}\operatorname{Tr} H_A,
		&
		\begin{aligned}
			&\Lambda\succeq0,\\[-1mm]
			&\operatorname{Tr}_A Y_{0,AB}=\operatorname{Tr}_A Y_{1,AB}=I_B,\\[-1mm]
			&Z_{AB}+Q_{AB}W_{AB}=H_A\otimes I_B,
		\end{aligned}
		\\[6mm]
		-\infty,
		&\text{otherwise}.
	\end{cases}
\end{align*}
Therefore, the dual of \(\Gamma\) is
\begin{equation*}
	\begin{aligned}
		\max\quad&
		\operatorname{Re}\operatorname{Tr} H_A\\
		\mathrm{s.t.}\quad&
		\begin{pmatrix}
			Y_{0,AB}&Z_{AB}\\
			Z_{AB}^{\dagger}&Y_{1,AB}
		\end{pmatrix}
		\succeq0,\\
		&\operatorname{Tr}_A Y_{0,AB}=I_B,\\
		&\operatorname{Tr}_A Y_{1,AB}=I_B,\\
		&P_{AB}Z_{AB}
		=
		P_{AB}(H_A\otimes I_B),
	\end{aligned}
\end{equation*}
where \(H_A\) and \(Z_{AB}\) are arbitrary complex operators, and
\(Y_{0,AB}\) and \(Y_{1,AB}\) are Hermitian operators.
This agrees with the dual formulation \eqref{sdp:GammaDual} in the main text.

We now prove strong duality. Since \(K\) is a Kraus operator space, there exists a CPTP channel with Choi operator \(J_{AB}\succeq0\) such that
\(
P_{AB}J_{AB}=J_{AB},
\;
\operatorname{Tr}_BJ_{AB}=I_A.
\)
In \eqref{sdp:GammaEqAppendix}, take
\(
X_{AB}=J_{AB},
\;
T_{0,B}=T_{1,B}=tI_B,
\)
where \(t\) is sufficiently large that
\(
	\begin{psmallmatrix}
		tI_{AB}&J_{AB}\\
		J_{AB}&tI_{AB}
	\end{psmallmatrix}
	\succ0.
\)
This gives the strictly feasible solution
\(X_{AB}=J_{AB}\), \(T_{0,B}=tI_B\), and \(T_{1,B}=tI_B\).
Therefore, the finite-dimensional Slater strong-duality theorem for semidefinite programs \cite[Theorem~2.2]{Watrous2009SDP} shows that the optimal value of the dual program equals \(\Gamma(K)\).

\section{Classical--Quantum Reductions and Dual Programs}\label{app:CqReductions}

\subsection{Pinching and the Identity \texorpdfstring{$\Sigma^{-}=\Sigma$}{Sigma-minus = Sigma}}

Define the pinching map on the classical register by
\(
\Delta_A(H)
:=
\sum_x |x\rangle\langle x|_A H |x\rangle\langle x|_A,
\;
\Delta_{AB}:=\Delta_A\otimes\operatorname{id}_B.
\)
For a classical--quantum Kraus operator space,
\(P_{AB}=\sum_x|x\rangle\langle x|_A\otimes P_x\)
and \(Q_{AB}=I_{AB}-P_{AB}\) are block diagonal in the classical basis. Moreover,
\(\Delta_{AB}(I_A\otimes T_B)=I_A\otimes T_B\). Thus pinching preserves the positivity, support, partial-trace, and domination constraints in the relevant primal SDPs without changing their objective values.

First consider \(\Sigma^{-}\) in \eqref{sdp:SigmaMinusPrimal}. For a feasible pair \((V_{AB},T_B)\), write
\(\Delta_{AB}(V_{AB})=\sum_x |x\rangle\langle x|_A\otimes V_x\).

Then, for every \(x\),
\(V_x\succeq0,\; V_x=P_xV_xP_x,\; \operatorname{Tr}V_x\geq1,\; V_x\preceq T_B\).
Set \(c_x:=\operatorname{Tr}V_x\) and \(F_x:=V_x/c_x\). Since \(c_x\geq1\),
\(F_x\succeq0,\; F_x=P_xF_xP_x,\; \operatorname{Tr}F_x=1,\; F_x\preceq V_x\preceq T_B\).
Because \(F_x\) is a density operator supported on \(\operatorname{ran}P_x\), one also has \(F_x\preceq P_x\). Hence every feasible solution of \(\Sigma^{-}\) gives a feasible solution of \eqref{sdp:CqSigma} with the same objective value.

Conversely, suppose that \((T_B,\{F_x\}_x)\) is feasible for \eqref{sdp:CqSigma}, and define
\(F_{AB}:=\sum_x |x\rangle\langle x|_A\otimes F_x\).
Then
\(F_{AB}\succeq0,\; F_{AB}=P_{AB}F_{AB}P_{AB},\; \operatorname{Tr}_BF_{AB}=I_A,\; F_{AB}\preceq I_A\otimes T_B\).
Thus \((F_{AB},T_B)\) is feasible for \eqref{sdp:SigmaMinusPrimal} with the same objective value. This proves that \(\Sigma^{-}(K)\) equals the first reduced program in Proposition~\ref{prop:CqReductions}.

The same pinching argument applied to the semidefinite covering number \(\Sigma(K)\) gives exactly the same reduced program. In that case the block normalization \(\operatorname{Tr}F_x=1\) is already part of the primal equality constraint. Therefore
\[
\Sigma^{-}(K)=\Sigma(K)
\]
for every classical--quantum Kraus operator space.

\subsection{The Classical--Quantum Primal and Dual of \texorpdfstring{$\Gamma$}{Gamma}}

Consider the normalized primal formulation \eqref{sdp:GammaEq}. If \((X_{AB},T_0,T_1)\) is feasible, complete positivity of \(\operatorname{id}_2\otimes\Delta_{AB}\) gives
\[
\begin{pmatrix}
	I_A\otimes T_0&\Delta_{AB}(X_{AB})\\
	\Delta_{AB}(X_{AB})^\dagger&I_A\otimes T_1
\end{pmatrix}
\succeq0.
\]
Write \(\Delta_{AB}(X_{AB})=\sum_x |x\rangle\langle x|_A\otimes X_x\).
The conditions \(P_{AB}X_{AB}=X_{AB}\) and \(\operatorname{Tr}_BX_{AB}=I_A\) imply
\(
P_xX_x=X_x,
\;
\operatorname{Tr}X_x=1.
\)
The block positivity condition is the direct sum of
\(
\begin{psmallmatrix}
	T_0&X_x\\
	X_x^\dagger&T_1
\end{psmallmatrix}\succeq0
\)
over \(x\). This proves one direction of \eqref{sdp:CqGammaPrimal}. Conversely, any family satisfying these reduced constraints can be assembled as
\(X_{AB}=\sum_x|x\rangle\langle x|\otimes X_x\), which is feasible for \eqref{sdp:GammaEq} with the same objective value. Hence \eqref{sdp:CqGammaPrimal} is equivalent to the general primal.

For the dual, start from \eqref{sdp:GammaDual}. Given a feasible quadruple \((H_A,Y_0,Y_1,Z)\), replace it by
\(
\widetilde H_A:=\Delta_A(H_A),
\;
\widetilde Y_j:=\Delta_{AB}(Y_j),
\;
\widetilde Z:=\Delta_{AB}(Z).
\)
The objective is unchanged. Complete positivity preserves block positivity, the partial-trace equalities are preserved, and block diagonality of \(P_{AB}\) gives
\(P_{AB}\widetilde Z = P_{AB}(\widetilde H_A\otimes I_B)\).
Thus an optimal solution can be chosen block diagonal on the classical register. Write
\(\widetilde H_A=\sum_xh_x|x\rangle\langle x|\),
\(\widetilde Y_j=\sum_x|x\rangle\langle x|\otimes Y_{j,x}\), and
\(\widetilde Z=\sum_x|x\rangle\langle x|\otimes Z_x\).
The constraints become
\[
\begin{pmatrix}Y_{0,x}&Z_x\\Z_x^\dagger&Y_{1,x}\end{pmatrix}\succeq0,
\quad
\sum_xY_{0,x}=\sum_xY_{1,x}=I_B,
\quad
P_xZ_x=h_xP_x.
\]
If \(h_x=|h_x|e^{i\phi_x}\), replace \(Z_x\) by \(e^{-i\phi_x}Z_x\) and \(h_x\) by \(|h_x|\). The corresponding \(2\times2\) block is obtained by unitary conjugation, so positivity is preserved, while the objective changes from \(\operatorname{Re}h_x\) to \(|h_x|\). Hence an optimal solution may be chosen with \(h_x\geq0\), yielding \eqref{sdp:CqGammaDual}.

The two normalization equalities in \eqref{sdp:CqGammaDual} can equivalently be replaced by
\begin{equation}
	\sum_xY_{0,x}\preceq I_B,
	\quad
	\sum_xY_{1,x}\preceq I_B.
	\label{eq:CqGammaDualIneqNorm}
\end{equation}
Indeed, every solution satisfying the two equalities also satisfies \eqref{eq:CqGammaDualIneqNorm}. Conversely, for a feasible solution satisfying only \eqref{eq:CqGammaDualIneqNorm}, define
\(D_j:=I_B-\sum_xY_{j,x}\succeq0\) for \(j=0,1\).
Fix an arbitrary input \(x_0\) and replace \(Y_{j,x_0}\) by \(Y_{j,x_0}+D_j\). The affected block matrix is increased by \(\operatorname{diag}(D_0,D_1)\succeq0\); all \(Z_x\), \(h_x\), support constraints, and the objective remain unchanged. The normalization inequalities can therefore be completed to equalities without changing the optimal value.

Strong duality for the reduced \(\Gamma\) programs follows from strong duality of the general programs and the value-preserving pinching reductions above.

\subsection{The Classical--Quantum Dual of \texorpdfstring{$\Sigma$}{Sigma}}

Starting from \eqref{sdp:CqSigma}, the condition \(F_x\preceq P_x\) is redundant once
\(F_x\succeq0\), \(F_x=P_xF_xP_x\), and \(\operatorname{Tr}F_x=1\) are imposed. Introduce \(Y_x\succeq0\) for the constraints \(T_B-F_x\succeq0\) and real multipliers \(\lambda_x\) for \(\operatorname{Tr}F_x=1\). The Lagrangian is
\[
\mathcal L
=
\operatorname{Tr}\!\left[\left(I_B-\sum_xY_x\right)T_B\right]
+
\sum_x\operatorname{Tr}\!\left[(Y_x-\lambda_x I_B)F_x\right]
+
\sum_x\lambda_x.
\]
The infimum over \(T_B\succeq0\) is finite exactly when
\(\sum_xY_x\preceq I_B\). The infimum over positive semidefinite \(F_x\) supported on \(P_x\) is finite exactly when
\[
P_xY_xP_x\succeq\lambda_xP_x.
\]
If \(\lambda_x<0\), replacing it by \(0\) preserves feasibility and increases the objective, so an optimal dual solution may be chosen with \(\lambda_x\geq0\). This yields \eqref{sdp:CqSigmaDual}.

For strong duality, use the equivalent primal formulation in which the redundant constraints \(F_x\preceq P_x\) have been removed. Take
\(F_x=P_x/\operatorname{rank}P_x\) and \(T_B=tI_B\) with \(t\) sufficiently large. This is strictly feasible relative to the fixed support faces containing the variables \(F_x\). Hence the finite-dimensional relative Slater theorem gives strong duality for \eqref{sdp:CqSigma} and \eqref{sdp:CqSigmaDual}.
\section{Supplementary Comparison with the Quantum Lov\'asz Number}\label{app:LovaszComparison}

\begin{proposition}
	\label{prop:GammaSigmaMinusVsQuantumLovasz}
	For every Kraus operator space \(K\),
	\[
	\Theta_*(S(K))
	\leq
	\Theta_{\NS}(K)
	\leq
	\Gamma(K)
	\leq
	\Sigma^{-}(K).
	\]
	Moreover, when viewed as functions of \(K\), both \(\Gamma(K)\) and \(\Sigma^{-}(K)\) are incomparable with the quantum Lov\'asz number
	\(\widetilde\vartheta(S(K))\) \cite{DuanSeveriniWinter2013}.
\end{proposition}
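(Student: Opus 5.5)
The chain of inequalities is assembled from results already in the paper; the real work is the incomparability claim.

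\emph{The chain.} The first inequality is \eqref{eq:ThetaStarThetaNS}, taken from \cite{WangDuan2018}. The second is Corollary~\ref{cor:ThetaGamma}. The third is Proposition~\ref{prop:GammaSigmaMinus}.

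\emph{Strategy for incomparability.} For each of \(\Gamma\) and \(\Sigma^{-}\) I will exhibit two Kraus operator spaces. On the first, the parameter is strictly smaller than \(\widetilde\vartheta(S(K))\); on the second, it is strictly larger. Because \(\Gamma\leq\Sigma^{-}\), it suffices to find
\begin{itemize}
\item a \(K\) with \(\Sigma^{-}(K)<\widetilde\vartheta(S(K))\), which then also gives \(\Gamma<\widetilde\vartheta\);
\item a \(K\) with \(\Gamma(K)>\widetilde\vartheta(S(K))\), which then also gives \(\Sigma^{-}>\widetilde\vartheta\).
\end{itemize}

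\emph{Example with \(\Gamma>\widetilde\vartheta\).} I would use the pure-state family of Proposition~\ref{prop:CanonicalPureFamily}, for which \(\Gamma=\sqrt r\). View it as a channel \(x\mapsto|\psi_x\rangle\langle\psi_x|\) with \(2(r-1)\) inputs. Two inputs are orthogonal (distinguishable) exactly when \(\langle\psi_x|\psi_y\rangle=0\), i.e. \(p-(1-p)=0\) for the pair \((j,+),(j,-)\). That would force \(p=1/2\), but \(p>1/2\) here, so no two inputs are distinguishable.

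For such a cq channel, \(S(K)\) is spanned by the \(|x\rangle\langle y|\) with \(\langle\psi_x|\psi_y\rangle\neq0\). Since no pair is orthogonal, \(S(K)=\mathcal L(A)\) is the complete noncommutative graph, and \(\widetilde\vartheta=1\). Hence \(\Gamma=\sqrt r>1=\widetilde\vartheta\) for every \(r\geq2\).

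\emph{Example with \(\Sigma^{-}<\widetilde\vartheta\).} I would take the \(r\)-dimensional identity channel, where \(\Sigma^{-}=\Sigma=r\). Its noncommutative graph is \(\mathbb C I\), whose quantum Lov\'asz number is \(r^2\) because entanglement enables superdense coding. So \(\Sigma^{-}=r<r^2=\widetilde\vartheta\) for \(r\geq2\).

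Both values need checking.
\begin{itemize}
\item \(\Sigma^{-}(\mathrm{id}_r)\leq r\): the pair \(V=J=|\Phi\rangle\langle\Phi|\), \(T=I\) is feasible, since \(|\Phi\rangle\langle\Phi|\preceq I_A\otimes I_B\) and \(\operatorname{Tr}_BJ=I_A\).
\item \(\Sigma^{-}(\mathrm{id}_r)\geq r\): I would argue via \(\widehat\Upsilon\), which is at least the \(r\) noiseless messages of one use.
\item \(\widetilde\vartheta(\mathbb C I_r)=r^2\): this value, established in \cite{DuanSeveriniWinter2013}, I would cite rather than reprove.
\end{itemize}

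\emph{Main obstacle.} The hardest step is pinning down \(\widetilde\vartheta\) reliably. Its SDP uses the orthogonal complement of \(S\) inside \(\mathcal L(A)\), and it is easy to mix up conventions, e.g. \(\widetilde\vartheta(\mathbb C I_r)=r^2\) versus \(r\). I would confirm \(\widetilde\vartheta(\mathbb C I)=r^2\) from \cite{DuanSeveriniWinter2013} and \(\widetilde\vartheta(\mathcal L(A))=1\) directly.

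If the identity example turns out to use a convention that fails, the fallback for \(\Sigma^{-}<\widetilde\vartheta\) is the qubit family \(K_\eta\) of Proposition~\ref{prop:QubitSimulationSeparation}. There \(\Sigma^{-}<36/17\), and \(S(K_\eta)\) is four-dimensional, i.e. all of \(\mathcal L(\mathbb C^2)\). That gives \(\widetilde\vartheta=1\), so this family would instead witness the other direction. In that case I would look for an example with a small noncommutative graph, such as the identity channel, as the primary witness.
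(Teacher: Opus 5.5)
Your chain of inequalities matches the paper's. Your witness for \(\Gamma>\widetilde\vartheta\) is valid but differs from the paper's. For the family of Proposition~\ref{prop:CanonicalPureFamily}, distinct inputs have overlaps \(p\) or \(2p-1=1/\sqrt r\), never zero. Hence \(S(K)=\mathcal L(A)\) and \(\widetilde\vartheta(S(K))=1\), because the orthogonal complement of \(\mathcal L(A)\otimes\mathcal L(\mathbb C^n)\) is trivial. Meanwhile \(\Gamma\geq A=2\sqrt r/(1+\sqrt r)>1\). This is more self-contained than the paper's pentagon channel \(K_5\), which needs the reduction \(\widetilde\vartheta(S_{C_5})=\vartheta(C_5)=\sqrt5\) from \cite{DuanSeveriniWinter2013}.

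The other half of your argument fails. For \(\mathrm{id}_r\), the vector \(\lvert\Phi\rangle=\sum_i\lvert ii\rangle\) is unnormalized, so \(\lvert\Phi\rangle\langle\Phi\rvert\) has eigenvalue \(r\). It is therefore not dominated by \(I_A\otimes I_B\), and your pair \((J,I)\) is infeasible. In fact every feasible \((V,T)\) in \eqref{sdp:SigmaMinusPrimal} has \(V=c\lvert\Phi\rangle\langle\Phi\rvert\) with \(c\geq1\). Hence \(\operatorname{Tr}T=\langle\Phi\rvert I_A\otimes T\lvert\Phi\rangle\geq\langle\Phi\rvert V\lvert\Phi\rangle=c\,r^2\), and \(T=rI_B\) attains this. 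So \(\Sigma^-(\mathrm{id}_r)=r^2=\widetilde\vartheta(\mathbb CI_r)\), which is equality, not strict inequality.

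The chain you had just proved already rules this example out. Superdense coding gives \(\Theta_*(\mathbb CI_r)\geq r^2\), hence \(\Sigma^-(\mathrm{id}_r)\geq r^2\). Your \(\widehat\Upsilon\) lower bound, evaluated correctly, is also \(r^2\), not \(r\).

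More generally, any witness for \(\Sigma^-<\widetilde\vartheta\) must satisfy \(\Theta_*(S(K))<\widetilde\vartheta(S(K))\). It must therefore live on a noncommutative graph where the quantum Lov\'asz bound is not tight for entanglement-assisted capacity. Noiseless channels are excluded. So are classical channels, where \(\Sigma^-=\alpha^*(\mathsf H_K)\) dominates the fractional clique-cover number of \(G\), and hence \(\vartheta(G)\). Your fallback \(K_\eta\) has complete graph and goes the wrong way, as you noticed, so the proposal has no witness for this half.

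The paper supplies one from Wang and Duan's qutrit family \(K_\alpha\) \cite{WangDuan2018}. There \(C_{0,\NS}(K_\alpha)=2\) and \(\Sigma(\mathcal N_\alpha)\leq4\) squeeze \(\Gamma(K_\alpha)=\Sigma^-(K_\alpha)=4\) through the hierarchy. Meanwhile \(\widetilde\vartheta(S(K_\alpha))=2+\cos^2\alpha+\cos^{-2}\alpha>4\).
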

\begin{IEEEproof}
	The first inequality is the known relation \eqref{eq:ThetaStarThetaNS}; the remaining inequalities are Corollary~\ref{cor:ThetaGamma} and Proposition~\ref{prop:GammaSigmaMinus}.
	We next exhibit examples yielding strict inequalities in both directions relative to
	\(\widetilde\vartheta\).
	
	First, consider the qutrit--qutrit channel family
	\(\mathcal N_\alpha\) constructed by Wang and Duan, where \(0<\alpha\leq\pi/4\), and let \(K_\alpha:=K(\mathcal N_\alpha)\). Proposition~1 and Eqs.~(16)--(18) of \cite{WangDuan2018} give \(C_{0,\NS}(K_\alpha)=2\) and \(\Sigma(\mathcal N_\alpha)\leq4\).
	Duan and Winter showed that the semidefinite program \eqref{sdp:Sigma} is equivalent to the following channel formulation \cite{DuanWinter2016}:
	\(\Sigma(K) = \min\left\{ \Sigma(\mathcal M): \mathcal M\ \text{is a quantum channel and}\ K(\mathcal M)\subseteq K \right\}\),
	where \(\Sigma(\mathcal M)\) denotes the SDP value associated with one-shot QNS-assisted exact simulation of the fixed channel
	\(\mathcal M\).
	It follows that
	\(\Sigma(K_\alpha) \leq \Sigma(\mathcal N_\alpha) \leq4\).
	On the other hand,
	\(\Theta_{\NS}(K_\alpha) = 2^{C_{0,\NS}(K_\alpha)} = 4\).
	Together with Theorem~\ref{thm:MainHierarchy}, the inequalities \(\Sigma(K_\alpha)\leq4\) and \(\Theta_{\NS}(K_\alpha)=4\) imply
	\(\Gamma(K_\alpha) = \Sigma^{-}(K_\alpha) = \Sigma(K_\alpha) = 4\).
	Wang and Duan computed in Proposition~2 and Eq.~(20) of \cite{WangDuan2018} that
	\(\widetilde\vartheta(S(K_\alpha)) = 2+\cos^2\alpha+\cos^{-2}\alpha > 4\).
	Therefore,
	\begin{equation}
		\Gamma(K_\alpha)
		=
		\Sigma^{-}(K_\alpha)
		<
		\widetilde\vartheta(S(K_\alpha)).
		\label{eq:GammaSigmaMinusBelowQuantumLovasz}
	\end{equation}
	
	We now consider a classical channel with input alphabet \(X\) and output alphabet \(Y\). Let \(E_K\subseteq X\times Y\) be the set of input--output pairs with nonzero transition probability. Its Kraus operator space can be written as
	\(K = \operatorname{span} \left\{ |y\rangle_B\langle x|_A: (x,y)\in E_K \right\}\).
	Define the input--output support bipartite graph of \(K\) by \(\mathsf H_K:=(X,Y,E_K)\).
	Its confusability graph \(G\) has vertex set \(X\), with two distinct inputs \(x,x'\in X\) adjacent if and only if they have a common output symbol, namely,
	\(
	x\sim_G x'
	\iff 
	\exists\,y\in Y
	\)
	such that
	\(
	(x,y),(x',y)\in E_K
	\).
	
	The fractional packing number of the bipartite graph \(\mathsf H_K\) is defined by
	\[
	\alpha^*(\mathsf H_K)
	:=
	\max
	\left\{
	\sum_{x\in X}p_x:
	p_x\geq0,\;
	\sum_{\substack{x\in X\\(x,y)\in E_K}}p_x\leq1
	\ \text{for every }y\in Y
	\right\}.
	\]
	Equation~(24) in Section~IV of \cite{DuanWinter2016}, together with the discussion immediately following it, gives \(\Sigma(K)=A(K)=\alpha^*(\mathsf H_K)\).
	
	Let \(X=Y=\mathbb Z_5\), and consider the classical Kraus operator space
	\[
	K_5
	:=
	\operatorname{span}
	\left\{
	|x\rangle_B\langle x|_A,\,
	|x\rangle_B\langle x+1|_A:
	x\in\mathbb Z_5
	\right\},
	\]
	where all indices are taken modulo \(5\). For each output \(x\), the inputs that can produce that output are precisely \(x\) and \(x+1\). Hence the confusability graph of \(\mathsf H_{K_5}\) is the pentagon \(C_5\), and
	\[
	\alpha^*(\mathsf H_{K_5})
	=
	\max
	\left\{
	\sum_{x\in\mathbb Z_5}p_x:
	p_x+p_{x+1}\leq1,\;
	p_x\geq0
	\right\}
	=
	\frac52.
	\]
	Indeed, summing the five constraints gives \(2\sum_{x\in\mathbb Z_5}p_x\leq5\), and the upper bound is attained by taking \(p_x=1/2\). Therefore, \(A(K_5)=\Sigma(K_5)=\frac52\). Theorem~\ref{thm:MainHierarchy} together with Proposition~\ref{prop:CqReductions} gives
	\(A(K)\leq\Gamma(K)\leq\Sigma^{-}(K)=\Sigma(K)\), and hence
	\(\Gamma(K_5) = \Sigma^{-}(K_5) = \Sigma(K_5) = \frac52\).
	
	On the other hand, using the definition \(S(K):=K^\dagger K\), a direct calculation from the expression for \(K_5\) gives
	\[
	S(K_5)
	=
	\operatorname{span}
	\left\{
	|x\rangle\langle x|,\,
	|x\rangle\langle x+1|,\,
	|x+1\rangle\langle x|:
	x\in\mathbb Z_5
	\right\}
	=
	S_{C_5}.
	\]
	Duan, Severini, and Winter proved in Section~IV of \cite{DuanSeveriniWinter2013} that the quantum Lov\'asz number reduces to the classical Lov\'asz number on operator systems associated with classical graphs \cite{Lovasz1979}. Consequently,
	\(\widetilde\vartheta(S(K_5)) = \widetilde\vartheta(S_{C_5}) = \vartheta(C_5) = \sqrt5 < \frac52\).
	Thus,
	\begin{equation}
		\widetilde\vartheta(S(K_5))
		<
		\Gamma(K_5)
		=
		\Sigma^{-}(K_5).
		\label{eq:GammaSigmaMinusAboveQuantumLovasz}
	\end{equation}
	
	Equations~\eqref{eq:GammaSigmaMinusBelowQuantumLovasz} and \eqref{eq:GammaSigmaMinusAboveQuantumLovasz} establish strict inequalities in opposite directions. Therefore, when viewed as functions of \(K\), both \(\Gamma(K)\) and \(\Sigma^{-}(K)\) are incomparable with the quantum Lov\'asz number
	\(\widetilde\vartheta(S(K))\).
\end{IEEEproof}

\section*{Acknowledgment}

During the course of this research, the authors used ChatGPT (OpenAI) in developing the mathematical arguments and proofs in IV-E and V-B--V-G. The relevant material was reviewed and verified by the authors through multiple rounds of manual checking. The authors take full responsibility for the content of this paper.

\bibliographystyle{IEEEtran}
\bibliography{references}

\end{document}